\let\emptyset\varnothing
\theoremstyle{plain}
\newtheorem{proposition}[theorem]{Proposition}
\theoremstyle{definition}
\newtheoremstyle{break}{\topsep}{\topsep}{\itshape}{}{\bfseries}{.}{\newline}{}
\newcommand*{\dupcntr}[2]{%
	\expandafter\let\csname c@#1\expandafter\endcsname\csname c@#2\endcsname
}
\newcommand{\req}{\actN{req}\xspace}
\newcommand{\ans}{\actN{ans}\xspace}
\newcommand{\cls}{\actN{cls}\xspace}
\newcommand{\patReqA}{\actIn{\dvV}{\req}\xspace}
\newcommand{\patReqAB}{\actIn{(\dvV)}{\req}\xspace}
\newcommand{\patReqB}{\actIn{\dvV}{\req}\xspace}
\newcommand{\patAns}{\actOut{\dvV}{\ans}\xspace}
\newcommand{\symReqAB}{\actSN{\patReqAB}{\dvV{\neq}j}\xspace}
\newcommand{\symReqB}{\actSN{\patReqB}{\boolT}\xspace}
\newcommand{\symAns}{\actSN{\patAns}{\boolT}\xspace}
\newcommand{\actReq}{\actIn{i}{\req}\xspace}
\newcommand{\actAns}{\actOut{i}{\ans}\xspace}
\newcommand{\actCls}{\actIn{i}{\cls}\xspace}
\newcommand{\trnsReqAB}{\actSID{\patReqAB}{\dvV\neq j}\xspace}
\newcommand{\trnsReqB}{\actSN{\patReqB}{\actt}\xspace}
\newcommand{\trnsReqCB}{\actSTD{\patReqAB}{\dvV\neq j}\xspace}
\newcommand{\trnsAns}{\actSTN{\actOut{(\dvV)}{\ans}}{\boolT}{\actOut{\dvV}{\ans}}\xspace}
\newcommand{\procPVdef}{\rec{\rV}{\bigl(\ch{\prf{\actReq}{\prf{\actAns}\rV}}{\prf{\actCls}{\nil}}\bigr)}\xspace}
\newcommand{\procPVVdef}{\rec{\rV}{\bigl(\ch{\prf{\actReq}{\prf{\actAns}\rV}{\ch{\,}{\,\underline{\prf{\actReq}\rV}}}}{\prf{\actCls}{\nil}}\bigr)}\xspace}
\newcommand{\hVBdef}{\hMaxX{\hAnd{(\hNec{\symReqAB}\hNec{\symAns}\hVarX)\,}\,{(\hNec{\symReqAB}\hNec{\symReqB}\hFls)} }\xspace}
\newcommand{\hVdef}{\hMaxX{\hNec{\symReqAB}(\hAnd{\hNec{\symAns}\hVarX}{\hNec{\symReqB}\hFls})}\xspace}
\newcommand{\hVdefConc}{\hMaxX{\hVdefConcAP}\xspace}
\newcommand{\hVdefConcAP}{\hNec{\actReq}\;(\hAnd{\hNec{\actAns}\hVarX}{\hNec{\actReq}\hFls})}
\newcommand{\eVAdef}{\rec{\rV}{\bigl( \ch{\prf{\trnsReqCB}{\rV}}{\prf{\trnsAns}{\rV}} \bigr)}}
\newcommand{\pVg} {\ensuremath{\pV_\textbf{g}
	}\xspace}
\newcommand{\pVb} {\ensuremath{\pV_\textbf{b}
	}\xspace}
\newcommand{\eVr} {\ensuremath{\eV_\textbf{r}}\xspace}
\newcommand{\eVi} {\ensuremath{\eV_\textbf{i}}\xspace}
\newcommand{\eVs} {\ensuremath{\eV_\textbf{s}}\xspace}
\newcommand{\eVss} {\ensuremath{\eV_\textbf{t}}\xspace}
\newcommand{\eVb} {\ensuremath{\eV_\textbf{b}}\xspace}
\crefname{lemma}{lemma}{lemmas}
\Crefname{lemma}{Lemma}{Lemmas}
\crefname{proposition}{proposition}{propositions}
\Crefname{proposition}{Proposition}{Propositions}
\crefname{example}{example}{examples}
\Crefname{example}{Example}{Examples}
\title{On Runtime Enforcement via Suppressions}
\author{Luca Aceto \hfill Ian Cassar}{Gran Sasso Science Institute \hfill Reykjavik University \\ \& Reykjavik University \hfill  \& University of Malta \\{L'Aquila, Italy \& Reykjavik Iceland \hfill Reykjavik Iceland \& Msida, Malta }}{luca.aceto@gssi.it
\hfill ianc@ru.is}{}{}
\author{Adrian Francalanza \hfill Anna Ing\'{o}lfsd\'{o}ttir}{University of Malta \hfill Reykjavik University
	\\{Msida, Malta \hfill  Reykjavik, Iceland} }{adrian.francalanza@um.edu.mt \hfill annai@ru.is}{}{}
\authorrunning{L. Aceto, I. Cassar, A. Francalanza and  A. Ing\'{o}lfsd\'{o}ttir}
\subjclass{\ccsdesc[500]{Theory of computation~Logic and verification}\;
	\ccsdesc[500]{Software and its engineering~Software verification}\;
	\ccsdesc[500]{Software and its engineering~Dynamic analysis}}
\keywords{Enforceability, Suppression Enforcement, Monitor Synthesis, Logic}
\begin{document}

\maketitle

\begin{abstract}
	 Runtime enforcement is a dynamic analysis technique that uses monitors to enforce the behaviour specified by some correctness property on an executing system.  The enforceability of a logic captures the extent to which the properties expressible via the logic can be enforced at runtime. We study the enforceability of Hennessy-Milner Logic with Recursion (\recHML) with respect to suppression enforcement.  We develop an operational framework for enforcement which we then use to formalise 
	 when a monitor enforces a \recHML property. 
	 We also show that the safety syntactic fragment of the logic, \SHML, is enforceable by providing an automated synthesis function that generates correct suppression monitors from \SHML formulas.
\end{abstract}

\section{Introduction} \label{sec:intro}

Runtime monitoring~\cite{francalanza2016theory,Cassar2017RV} is a dynamic analysis technique that is
becoming increasingly popular
in the turbid world of software development.
%
It uses code units called \emph{monitors} to aggregate system information, compare system execution against correctness specifications, or steer the execution of the observed system.
The technique has been used effectively to offload certain verification tasks to a post-deployment phase, thus complementing other (static) analysis techniques in  multi-pronged verification strategies---see \eg~\cite{RV:Test:2005,Bocchi2017,Jia2016Popl,Desai2017,Kejstova2017}.
\emph{Runtime enforcement} (RE)~\cite{Ligatti2005,Ligatti2010,Falcone2011} is a specialized monitoring technique, used to ensure that the behaviour of a system-under-scrutiny (\sus) is \emph{always} in agreement with some correctness specification.
It employs a specific kind of monitor (referred to as a
\emph{transducer}~\cite{Berstel:79:Transducer,Sakarovitch:2009:Transducers,
Alur:2011:Transducers} or an \emph{edit-automaton}~\cite{Ligatti2005,Ligatti2010}) to anticipate incorrect behaviour and counter it.
%
Such a monitor thus acts as a proxy between the \sus and the surrounding environment interacting with it, encapsulating the system to form a composite (monitored) system:
at runtime, the monitor  \emph{transforms} any incorrect executions exhibited by the \sus into correct ones by either \emph{suppressing},
 \emph{inserting} or \emph{replacing} events on behalf of the system.

We extend a recent line of research~\cite{Francalanza2017FMSD,Cassar2017RV,Achilleos2018FSTTCS,Achilleos2018Fossacs} and study RE approaches that adopt a \emph{separation of concerns} between the correctness specification, describing \emph{what} properties the \sus should satisfy, and the monitor, describing \emph{how} to  enforce these properties on the \sus.
Our work considers system properties expressed in terms of the process
logic \recHML~\cite{Kozen1983MuCalc,Larsen1990},
and explores what properties can be operationally enforced by monitors that can suppress system behaviour.
A central element for the realisation of such an approach is the \emph{synthesis} function: it automates the translation from the \emph{declarative} \recHML specifications to \emph{algorithmic} descriptions formulated as executable monitors.
%
Since
analysis tools ought to form part of the trusted computing base, enforcement monitoring should be, in and of itself,
correct.
However, it is unclear what is to be expected of the synthesised monitor to adequately enforce a \recHML formula.
Nor is it clear for which type of specifications should this approach be expected to work effectively---it has been well established that a number of properties are \emph{not} monitorable~\cite{chang1993,Pnueli2006,FraCini2015,Francalanza2017FMSD,Achilleos2018FSTTCS} and it is therefore reasonable to expect similar limits in the case of enforceability~\cite{Falcone2010}.
We therefore study the relationship between \recHML specifications and suppression monitors for enforcement, which allows us to address the above-mentioned concerns and make the following contributions:
\begin{description}
  \item[Modelling:] We develop a general framework for enforcement instrumentation that is parametrisable by any system behaviour that is expressed via labelled transitions, and can express suppression, insertion and replacement enforcement, \Cref{fig:mod-re}.
  \item[Correctness:] We give formal definitions for asserting when a monitor correctly enforces a formula defined over labelled transition systems, \Cref{def:enforceability,def:enforcement}.
  These definitions are parametrisable with respect to an instrumentation relation, an instance of which is our enforcement framework of \Cref{fig:mod-re}.
  \item[Expressiveness:] We provide enforceability results, \Cref{thm:strong-enf,thm:norm-equivalence} (but also \Cref{lemma:trace-transparency}),  by identifying a subset of \recHML formulas that can be (correctly) enforced by suppression monitors.
\end{description}
As a by-product of this study, we also develop a formally-proven correct synthesis function, \Cref{def:synthesis}, that then can be used for tool construction, along the lines of~\cite{Attard2016,Cassar2017Betty}.



The setup selected for our study serves a number of purposes.
For starters, the chosen logic, \recHML, is a branching-time logic that allows us to investigate enforceability for
properties describing computation graphs.
%
Second, the use of a highly expressive logic allows us to achieve a good degree of generality for our results, and so, by working in relation to logics like \recHML (a reformulation of the $\mu$-calculus), our work would also apply to other widely used logics (such as LTL and CTL~\cite{Clarke2008}) that are embedded within this logic. 
Third, since the logic is verification-technique agnostic, it fits better with the realities of software verification in the present world, where a \emph{variety} of techniques (\eg model-checking and testing) straddling both pre- and post-deployment phases are used.
In such cases, knowing which properties can be verified statically and which ones can be monitored for and enforced at runtime is crucial for devising effective multi-pronged verification strategies.
Equipped with such knowledge, one could also employ standard techniques \cite{Martinelli2005,Andersen1995,Lang2012TACAS} to decompose a non-enforceable property into a collection of smaller properties, a subset of which can then be enforced at runtime.

\noindent
\textbf{\textsf{Structure of the paper:}} \Cref{sec:prelim} revisits labelled transition systems and our touchstone logic, \recHML. The operational model for enforcement monitors and instrumentation is given in \Cref{sec:enf-model}. In \Cref{sec:enforceability} we formalise the interdependent notions of correct enforcement and enforceability. These act as a foundation for the development of a synthesis function in \Cref{sec:synthesis}, that produces \emph{correct-by-construction} monitors.
In \Cref{sec:strong-enforceability} we consider alternative definitions for enforceability for logics with a specific additional interpretation, and show that our proposed synthesis function is still correct with respect to the new definition.
\Cref{sec:conc} concludes and discusses related work.


\section{Preliminaries} \label{sec:prelim}

\noindent
\textbf{\textsf{The Model:}} We assume systems described as \emph{labelled transition systems} (LTSs), triples $\langle\Sys,\Act\cup\sset{\actt},\rightarrow\rangle$ consisting of a set of \emph{system states}, $\pV,\pVV,\pVVV\in\Sys$, a set of \emph{observable actions}, $\acta,\actb\in\Act$, and a distinguished silent action $\actt\notin\Act$ (where $\actu \in \Act\cup\sset{\actt}$), and a \emph{transition} relation,
$\reduc\;\subseteq(\Sys\times\Act \cup \sset{\actt}\times\Sys)$.
We write $\pV\traS{\actu}\pVV$ in lieu of $(\pV,\actu,\pVV) \in\, \rightarrow$, and use
$\pV \wtraS{\actu} \pV'$ to denote weak transitions representing
$\pV (\traS{\actt})^{\ast}\cdot\traS{\actu}\cdot(\traS{\actt})^{\ast} \pV'$.
We refer to $\pV'$ as a \actu-derivative of $\pV$.
Traces, $\tr,\trr\in\Act^\ast$ range over (finite) sequences of observable actions, and we write $\pV \wtraS{\tr} \pVV$ to denote a sequence of weak transitions  $\pV \wtraS{\acta_1} \ldots \wtraS{\acta_n} \pVV$ for $\tr = \acta_1,\ldots,\acta_n$.
We also assume the classic notion of \emph{strong bisimilarity}~\cite{Milner1992CCS,Sangiorgi2011Bisim} for our model, $\pV \bisim \pVV$, using it as our touchstone system equivalence.  The syntax of the regular fragment of CCS~\cite{Milner1992CCS} is occasionally used to concisely describe LTSs in our examples.

\begin{figure}[t]
	\textbf{\textsf{Syntax}}
	\vspace{-1mm}
	\!\!{\small$$\begin{array}{r@{\,}llc@{\,}llc@{\,}ll}
	\hV,\hVV \in \recHML \bnfdef&  \hTru &(\text{truth}) & \bnfsepp  &\hFls \; &(\text{falsehood})& \bnfsepp & \hBigOr{i\in\IndSet}{\hV_i} &(\text{disjunction}) \\[1mm]
	\bnfsepp& \hBigAndU{i\in\IndSet}{\hV_i} & (\text{conjunction}) & \bnfsepp & \hSuf{\actSN{\pate}{\bV}}{\hV} & (\text{possibility}) &\bnfsepp& \hNec{\actSN{\pate}{\bV}}{\hV} & (\text{necessity}) \\[1mm]
	\bnfsepp & \hMinXF & (\text{least fp.}) &\bnfsepp& \hMaxXF & (\text{greatest fp.}) &\bnfsepp& \hVarX & (\text{fp. variable})
\end{array}$$} \vspace{-2mm}

	\textbf{\textsf{Semantics}}
	{
	{\small	\[
	  \begin{array}{r@{\,}c@{\;}l@{\qquad}r@{\,}c@{\;}l@{\qquad}r@{\,}c@{\;}l}
		\hSemS{\hTru,\rho}  &\defEquals& \Sys &
		\hSemS{\hFls,\rho}  &\defEquals& \emptyset
		&\hSemS{\hVarX,\rho} &\defEquals& \rho(\hVarX)
		\\[1mm]
		\hSemS{\hBigAnd{i\in\IndSet}{\hV_i},\rho} & \defEquals & \bigintersectU{i\in\IndSet}\hSemS{\hV_i,\rho} &
		\!\hSemS{\hMaxXF,\rho} & \defEquals & \bigcup \Set{S \;|\;  S \subseteq \hSemS{\hV,\rho[\hVarX\mapsto S]}}
		\\[1mm]
		\hSemS{\hBigOr{i\in\IndSet}{\hV_i},\rho} & \defEquals & \bigunionU{i\in\IndSet}\hSemS{\hV_i,\rho} &
		\!\hSemS{\,\hMinXF,\rho} & \defEquals & \bigcap \Set{S \;|\;  \hSemS{\hV,\rho[\hVarX\mapsto S]} \subseteq S\,}
		\\[1mm]
		\hSemS{\,\hNec{\actSN{\pate}{\bV}}{\hV},\rho}  & \defEquals&
			\multicolumn{7}{l}{
			\!\!\!\Set{\pV \;|\;  (\forall\acta,\pVV\cdot \pV  \wtra{\acta} \pVV  \text{ and } (\exists \sV \cdot\mtch{\pate}{\acta}{=}\sV \text{ and } \ceval{\bV\sV}{\boolT})) \text{ implies } q \in \hSemS{\hV\sV,\rho}\!}}
			\\[1mm]
		\hSemS{\hSuf{\actSN{\pate}{\bV}}{\hV},\rho}  & \defEquals&
			\multicolumn{7}{l}{
			\!\!\!\Set{\pV \;|\; \exists\acta,\pVV,\sV\cdot(\pV  \wtra{\acta} \pVV   \text{ and } \mtch{\pate}{\acta}{=}\sV \text{ and } \ceval{\bV\sV}{\boolT} \text{ and } q \in \hSemS{\hV\sV,\rho}) \!}
		}
	\end{array}
	\]}
	}\vspace{-3mm}
	\caption{\recHML Syntax and Semantics}
	\label{fig:recHML} \vspace{-4mm}
\end{figure}

\medskip
\noindent
\textbf{\textsf{The Logic:}} We consider a slightly generalised version of \recHML \cite{Larsen1990,Aceto2007Book} that uses \emph{symbolic actions} of the form \actSN{\pate}{\bV}.
%
\emph{Patterns},
\pate, abstract over actions
using \emph{data variables} $\dvV,\dvVV,\dvVVV\in\Var$; in a pattern, they may either occur free, \dvV, or as binders, $(\dvV)$ where a \emph{closed pattern} is one without any free variables.
We assume a (partial) \emph{matching function} for \emph{closed} patterns $\mtch{\pate}{\acta}$ that returns a substitution \sV (when successful) mapping variables in \pate to the corresponding values in
\acta, \ie if we instantiate every bound variable \dvV in \pate with $\sV(\dvV)$ we obtain \acta.
The \emph{filtering condition}, \bV,
contains
variables found in \pate 
and evaluates \wrt the substitutions returned by successful matches.
Put differently, a \emph{closed} symbolic action \actSN{\pate}{\bV} is one where \pate is closed and $\fv{\bV} \subseteq \bv{\pate}$; it
denotes the \emph{set} of actions $\hSemS{\actSN{\pate}{\bV}}\defeq\Setdef{\acta}{\exists\sV\cdot\mtch{\pate}{\acta}\!=\!\sV \textsl{ and } \ceval{\bV\sV}{\boolT}}$ and allows more adequate reasoning about LTSs with infinite actions (\eg actions carrying data from infinite domains).

The logic syntax is given in \Cref{fig:recHML} and assumes a countable set of logical variables $\hVarX,\hVarY\!\in\!\LVars$.
Apart from standard logical constructs such as conjunctions and disjunctions ($\hBigAnd{i\in\IndSet}{\hV_i}$ describes a \emph{compound} conjunction,  $\hAnd{{\hV_1}}{\hAnd{\ldots}{{\hV_n}}}$,
where
$\IndSet=\sset{1,..,n}$  is a finite set of indices, and similarly for disjunctions), and the characteristic greatest and least fixpoints (\hMaxXF and \hMinXF bind free occurrences of \hVarX in \hV), the logic uses necessity and possibility modal operators with symbolic actions, \hNec{\actSN{\pate}{\bV}}{\hV} and \hSuf{\actSN{\pate}{\bV}}{\hV}, where \bv{\pate} bind free data variables in \bV and \hV.
Formulas in \recHML are interpreted over 
the system powerset domain where $S{\in}\pset{\Sys}$.
The semantic definition of \Cref{fig:recHML}, \hSemS{\hV,\rho},  is given for  \emph{both} open and closed formulas. It employs a valuation  from logical variables to sets of states, $\rho\in(\LVars \rightarrow \pset{\Sys})$, which permits an inductive definition on the structure of the formulas;
$\rho'=\rho[\hVarX\mapsto S]$ denotes a valuation where $\rho'(\hVarX) = S$ and  $\rho'(\hVarY) = \rho(\hVarY)$ for all other $\hVarY\neq \hVarX$.
The only non-standard cases are those for the modal formulas, due to the use of symbolic actions.
Note
that we recover the standard logic for symbolic actions \actSN{\pate}{\bV} whose pattern \pate does not contain variables ($\pate {=} \acta$ for some \acta)  and whose condition holds trivially ($\bV {=} \boolT$); in such cases we write $\hNec{\acta}{\hV}$ and $\hSuf{\acta}{\hV}$ for short.
We generally assume \emph{closed} formulas, \ie without free logical and data variables, and write \hSemS{\hV} in lieu of  \hSemS{\hV,\rho} since the interpretation of a closed \hV is independent of
$\rho$.
A system \pV \emph{satisfies} formula \hV whenever $\pV{\,\in\,}\hSemS{\hV}$ whereas a formula \hV is \emph{satisfiable},  $\hV\in\Sat$, whenever there exists a system \pVV such that $\pVV\in\hSemS{\hV}$.

\begin{example} \label[example]{ex:uhml-formula} Consider two systems (a good system, \pVg, and a bad one, \pVb) implementing
	a server that interacts on port $i$, repeatedly accepting \emph{requests} that are \emph{answered} by outputting on the same port, and terminating the service once a \emph{close} request is accepted  (on the same port).
	Whereas $\pVg$ outputs an answer (\actAns) for every request (\actReq),
	$\pVb$
	occasionally refuses to answer a given request (see the underlined branch). Both systems terminate with \actCls.
\begin{displaymath}
	\pVg=\procPVdef \qquad\quad \pVb=\procPVVdef
\end{displaymath}
We can specify that two consecutive requests on port $i$ indicate invalid behaviour via the \recHML formula $ \hV_0{\defeq}\hVdefConc$;
it defines an invariant property (\hMaxB{\hVarX}{\ldots}) requiring that whenever a system interacting on
$i$ inputs a request, it cannot input a subsequent request, \ie $\hNec{\actReq}\hFls$, unless it outputs an answer beforehand, in which case the formula recurses, \ie $\hNec{\actAns}\hVarX$.
Using symbolic actions, we can generalise $\hV_0$ by requiring the property to hold for \emph{any} interaction happening on \emph{any} port number \emph{except}  $j$.
\begin{align*}
	\hV_1 & \defeq\hVdef
\end{align*}
In $\hV_1$, \patReqAB binds the free occurrences of \dvV found in $\dvV{\neq}j$ and \hAnd{\hNec{\symAns}\hVarX}{\hNec{\symReqB}\hFls}.
Using \Cref{fig:recHML}, one can check that $\pVg {\in} \hSemS{\hV_1}$, whereas $\pVb {\not\in} \hSemS{\hV_1}$ since
$\pVb\traS{\actReq}\cdot\traS{\actReq}\ldots$  \qed
\end{example}

\section{An Operational Model for Enforcement} \label{sec:enf-model}

\begin{figure}[t]
	\noindent\textbf{\textsf{Syntax}}
	\begin{align*}
		\eV,\eVV\in\Trn
		&\bnfdef \quad \eIden \qquad
		\bnfsepp  \eTrns{\pate}{c}{\pate'}{\eV} \qquad
		\bnfsepp   \chBigText{i\in\IndSet}\, \eV_i \qquad
		\bnfsepp  \rec{\rV}{\eV} \qquad
		\bnfsepp  \rV
	\end{align*}
	\noindent\textbf{\textsf{Dynamics}}\vspace{-2mm}
	\begin{mathpar}
		\inference[\rtit{eId}]{ }{\eIden \traS{\ioact{\actu}{\actu}} \eIden }
		\and
		\inference[\rtit{eSel}]{\eV_j \traSS{\actgu} \eVV_j}{
		\chBigText{i\in\IndSet}\,\eV_i \traSS{\actgu} \eVV_j}[$j{\in}\IndSet$]
		\and
		\inference[\rtit{eRec}]{\eV\sub{\rec{\rV}{\eV}}{\rV} \tra{\actgu} \eVV }{\rec{\rV}{\eV} \tra{\actgu} \eVV}
		\and
		\inference[\rtit{eTrn}]{
		\mtch{\pate}{\actg} = \sV
		&&
    \ceval{\bV\sV}{\boolT}
		&&
		\actu{\,=\,}\pate'\sV
		}{\eTrns{\pate}{c}{\pate'}{\eV} \traS{\ioact{\actg}{\actu}} \eV\sV}
		\vspace{-5mm}
	\end{mathpar} 
	\noindent\textbf{\textsf{Instrumentation}} 	\vspace{-2mm}
	\begin{mathpar}
		\inference[\rtit{iTrn}]
		{\pV \tra{\acta}\pV' \\ \eV\tra{\ioact{\acta}{\actu}}\eVV}
		{ \eI{\eV}{\pV} \tra{\actu} \eI{\eVV}{\pV'}}
		\quad
		\inference[\rtit{iAsy}]
		{ \pV \tra{\actt}\pV' }
		{ \eI{\eV}{\pV} \tra{\actt} \eI{\eV}{\pV'} }
    \quad
		\inference[\rtit{iIns}]
		{\eV \tra{\ioact{\actdot}{\actu}}\eVV}
    {\eI{\eV}{\pV} \tra{\actu} \eI{\eVV}{\pV}}
		\quad
		\inference[\rtit{iTer}]
		{\pV \tra{\acta}\pV'
		\\ \eV\ntra{\acta}
		&& \eV\ntra{\actdot}
		}
		{ \eI{\eV}{\pV} \tra{\acta} \eI{\eIden}{\pV'}}
	\end{mathpar}
	\caption[]{A model for transducers (\IndSet is a finite index set and $\protect\eV{\!\!\ntraSS{\actg}}$ means $\protect\nexists\actu,\eVV\cdot\eV{\traS{\actgu}}\eVV$)}
	\label{fig:mod-re}
\end{figure}

Our operational mechanism for enforcing properties over systems uses the (symbolic) transducers $\eV,\eVV\in\Trn$ defined in \Cref{fig:mod-re}.
The transition rules in \Cref{fig:mod-re} assume closed terms, \ie for every \emph{symbolic-prefix transducer}, \prf{\actSTN{\pate}{\bV}{\pate'}}{\eV}, \pate is closed and
$\bigl(\fv{\bV} {\cup} \fv{\pate'} {\cup} \fv{\eV}\bigr) \subseteq \bv{\pate}$, and yield an LTS with labels of the form \ioact{\actg}{\actu}, where $\actg\in(\Act\,{\cup}\sset{
\actdot})$.
Our syntax assumes a well-formedness constraint where for every \prf{\actSTN{\pate}{\bV}{\pate'}}{\eV}, $\bv{\bV} {\cup} \bv{\pate'} = \emptyset$.
Intuitively, a transition $\eV \tra{\ioact{\acta}{\actu}} \eVV$
denotes the fact that the transducer in state \eV \emph{transforms} the visible action $\acta$ (produced by the system) into the action $\actu$ (which can possibly become silent) and transitions into state \eVV.
In this sense, the transducer action \ioact{\acta}{\actt} represents the \emph{suppression} of
action \acta,  action \ioact{\acta}{\actb} represents the \emph{replacing} of \acta by \actb, and \ioact{\acta}{\acta} denotes the
\emph{identity} transformation.
The special case \ioact{\actdot}{\acta} encodes the \emph{insertion} of \acta, where \actdot represents
that the transition is not induced by any system action.

The key transition rule in \Cref{fig:mod-re} is \rtit{eTrn}.
It states that the symbolic-prefix transducer \prf{\actSTN{\pate}{\bV}{\pate'}}{\eV} can transform an (extended) action \actg into the concrete action \actu, as long as  the action matches with pattern \pate with substitution \sV, $\mtch{\pate}{\actg}{=}\sV$, and the condition is satisfied by \sV, \ceval{\bV\sV}{\boolT}
(the matching function is lifted to extended actions and patterns in the obvious way, where $\mtch{\actdot}{\actdot}{=}\emptyset$).
In such a case, the transformed action is
$\actu{=}\pate'\sV$, \ie the action \actu resulting from the instantiation of the free data variables in pattern $\pate'$ with the corresponding values mapped by \sV,  and the transducer state reached is ${\eV}\sV$.
%
%
By contrast, in rule \rtit{eId}, the transducer \eIden acts as the identity and leaves actions unchanged.
The remaining rules are fairly standard and unremarkable.

\Cref{fig:mod-re} also describes an \emph{instrumentation} relation which relates the behaviour of the \sus \pV with the transformations of a transducer monitor \eV that
\emph{agrees} with the (observable) actions \Act of \pV.
The term \eI{\eV}{\pV} thus denotes the resulting \emph{monitored system} whose behaviour is defined in terms of $\Act{\,\cup}\sset{\actt}$ from the system's LTS.
Concretely, rule \rtit{iTrn} states that when a system \pV transitions with an observable action \acta  to $\pV'$ and the transducer \eV can \emph{transform} this action into \actu and transition to $\eVV$, the instrumented system \eI{\eV}{\pV} transitions with action \actu to \eI{\eVV}{\pV'}.
However, when \pV transitions with a silent action, rules \rtit{iAsy} allows it to do so independently of the transducer.
Dually, rule \rtit{iIns} allows the transducer to \emph{insert} an action \actu independently of \pV's behaviour.
Rule \rtit{iTer} is analogous to standard monitor instrumentation rules  for premature termination of the transducer~\cite{francalanza2016theory,Francalanza2017FMSD,Fra17:Concur,Achilleos2018Fossacs}, and accounts for underspecification of transformations.
Thus, if a system \pV transitions with an observable action \acta to $\pV'$, and the transducer \eV does not specify how to transform it ($\eV\ntra{\acta}$), nor can it transition to a new transducer state by inserting an action ($\eV\ntra{\actdot}$), the system is still allowed to transition while the transducer's transformation activity is ceased, \ie it acts like the identity \eIden from that point onwards.

\begin{example} \label[example]{ex:transducers}
	Consider the insertion transducer \eVi and the replacement transducer \eVr below:
	\begin{align*}
		\eVi & \defeq \prf{
				\actSTN{\actdot}{\btrue}{\actReq}
			}
			{
				\prf{
					\actSTN{\actdot}{\btrue}{\actAns}
				}{\eIden}
			}
		\\
		\eVr &\defeq \rec{\rV}{\bigl(
       \ch{
			 		\ch{
			 			\prf{\actSTN{\actIn{(\dvV)}{\req}}{\btrue}{\actIn{j}{\req}}}{\rV}
					}{
						\prf{\actSTN{\actOut{(\dvV)}{\ans}}{\btrue}{\actOut{j}{\ans}}}{\rV}
					}
			 }{
			 	  \prf{\actSTN{\actIn{(\dvV)}{\cls}}{\btrue}{\actIn{j}{\cls}}}{\rV}
			 }
		 \bigr)}
	\end{align*}
	When instrumented with a system, \eVi inserts the two successive actions \actReq and \actAns before behaving as the identity.
	Concretely in the case of \pVb we can only start the computation as:
	\begin{equation*}
		\eI{\eVi}{\pVb}
		\traSS{\actReq}
		\eI{\prf{\actSTN{\actdot}{\btrue}{\actAns}}{\eIden}}{\pVb}
		\traSS{\actAns}
		\eI{\eIden}{\pVb}
		\traSS{\acta}
		\ldots 
		\qquad (\text{where }\pVb\traS{\acta})
	\end{equation*}
	By contrast, \eVr transforms input actions  with either payload $\req$ or  \cls and output actions with payload $\ans$ on any port name, into the respective actions on port $j$.
	For instance:
	\begin{equation*}
		\eI{\eVr}{\pVb}
		\traSS{\actIn{j}{\req}}
		\eI{\eVr}{\prf{\actAns}{\pVb}}
		\traSS{\actOut{j}{\ans}}
		 \eI{\eVr}{\pVb}
		 \traSS{\actIn{j}{\cls}}
		 \eI{\eVr}{\nil}
	\end{equation*}
	Consider now the two suppression transducers \eVs and \eVss for actions on ports other than $j$:
	\begin{align*}
		\eVs & \defeq \eVAdef
		\\
		\eVss & \defeq \rec{\rV}{\bigl(
			  \prf{\actSTN{\patReqAB}{\dvV\neq j}{\patReqA}}{
					\rec{\rVV}{\bigl(
					\ch{\prf{\actSTN{\patAns}{\btrue}{\patAns}}{\rV}}
					{\prf{\actSTD{\patReqA}{\btrue}}{\rVV}} \bigr)}
				}
		\bigr)}
	\end{align*}
	Monitor \eVs suppresses any requests on ports other than $j$, and continues to do so after any answers on such ports.
	When instrumented with \pVb, we can observe the following behaviour:
	\begin{equation*}
		\eI{\eVs}{\pVb}
		\traSS{\actt}
		\eI{\eVs}{\prf{\actAns}{\pVb}}
		\traSS{\actAns}
		\eI{\eVs}{\pVb}
		\traSS{\actt}
		\eI{\eVs}{\prf{\actAns}{\pVb}}
		\traSS{\actAns}
		\eI{\eVs}{\pVb} \ldots
	\end{equation*}
	Note that \eVs does not specify a transformation behaviour for when the monitored system produces inputs with payload other than \req\!\!\!\!.
	The instrumentation handles this underspecification by ceasing suppression activity;
	in the case of \pVb we get
	\begin{math}
		\eI{\eVs}{\pVb}
		\traSS{\actCls}
		\eI{\eIden}{\nil}
	\end{math}.
The transducer \eVss performs slightly more elaborate transformations.
For interactions on ports other than $j$, it suppresses consecutive input requests following any serviced request (\ie an input on \req followed by an output on \ans) sequence.
For \pVb we can observe the following:
\begin{align}
	\eI{\eVss}{\pVb}
	&
	\traSS{\actReq}
	\eI{\rec{\rVV}{\bigl( \ch{\prf{\actSTN{\actAns}{\btrue}{\actAns}}{\eVss}}{\prf{\actSTD{\actReq}{\btrue}}{\rVV}} \bigr)}}{\pVb}
	\nonumber
	\\
	&
	\traSS{\actt}
	\eI{
	\rec{\rVV}{\bigl( \ch{\prf{\actSTN{\actAns}{\btrue}{\actAns}}{\eVss}}{\prf{\actSTD{\actReq}{\btrue}}{\rVV}} \bigr)}
	}{
		\prf{\actAns}{\pVb}
	}
	\traSS{\actAns}
	\eI{\eVss}{\pVb} \tag*{\qed}
\end{align}
\end{example}

In the sequel, we find it convenient to refer to \underline{\pate} as the transformed pattern \pate where all the binding occurrences $(\dvV)$ are converted to free occurrences $\dvV$.
As shorthand notation, we elide the second pattern $\pate'$ in a transducer \prf{\actSTN{\pate}{\bV}{\pate'}}{\eV} whenever $\pate'{=}\underline{\pate}$ and simply write \prf{\actSID{\pate}{\bV}}{\eV}; note that if $\bv{\pate}=\emptyset$, then $\underline{\pate} {=}\pate$.
Similarly, we elide \bV whenever $\bV{=}\btrue$.
This allows us to express \eVss from \Cref{ex:transducers} as
\begin{math}
	\rec{\rV}{\bigl(
			\prf{\actSID{\actIn{(\dvV)}{\req}}{\dvV{\neq}j}}{
				\rec{\rVV}{\bigl(
					\ch{
						\prf{\actSTRID{\actOut{\dvV}{\ans}}}{\rV}
					}{
						\prf{\actSTR{\actIn{\dvV}{\req}}{\actt}}{\rVV}
					}
					\bigr)}
			}
	\bigr)}
\end{math}.
%

\section{Enforceability} \label{sec:enforceability}

The \emph{enforceability} of a logic rests on the relationship between the semantic behaviour specified by the logic on the one hand, and the ability of the operational mechanism (the transducers and instrumentation of \Cref{sec:enf-model} in our case) to enforce the specified behaviour
on the other.


\begin{definition}[Enforceability] \label[definition]{def:enforceability}
	A logic \LSet is enforceable iff \emph{every} formula $\hV{\in}\LSet$ is \emph{enforceable}.
	 A formula \hV is \emph{enforceable} iff there \emph{exists} a transducer \eV such that \eV \emph{enforces} \hV. \qed
\end{definition}

\Cref{def:enforceability} depends on what is considered to be an adequate definition for ``\eV \emph{enforces} \hV''.
It is reasonable to expect that the latter definition should concern \emph{any} system that the transducer \eV---hereafter referred to as the \emph{enforcer}---is instrumented with.
%
In particular,
for \emph{any} system \pV, the resulting composite system obtained from instrumenting the  enforcer \eV with it should satisfy the property of interest, \hV, whenever this property \emph{is satisfiable}.

\begin{definition}[Sound Enforcement] \label[definition]{def:senf}  Enforcer \eV \emph{soundly enforces} a formula \hV, denoted as \senfdef{\eV}{\hV}, iff for \emph{all} $ \pV\in\Sys$, $\hV \in Sat$ implies  $\eI{\eV}{\pV}\in\hSemS{\hV}$ holds.\qed
\end{definition}

\begin{example}\label[example]{ex:sound-enf} Recall $\hV_1$, \pVg and \pVb from \Cref{ex:uhml-formula} where $\pVg \in \hSemS{\hV_1}$ (hence $\hV_1\in\Sat$) and $\pVb \not\in\hSemS{\hV_1}$.
For the enforcers \eVi, \eVr, \eVs and \eVss presented in \Cref{ex:transducers}, we have:
\begin{itemize}
	\item $\eI{\eVi}{\pVb}\not\in\hSemS{\hV_1}$, since $\eI{\eVi}{\pVb}\traS{\actReq}\cdot\traS{\actAns}\eI{\eIden}{\pVb}\traS{\actReq}\eI{\eIden}{\pVb}\traS{\actReq}\eI{\eIden}{\pVb}$.
	This counter example implies that $\neg\senfdef{\eVi}{\hV_1}$.
	%
	\item $\eI{\eVr}{\pVg}\in\hSemS{\hV_1}$ and $\eI{\eVr}{\pVb}\in\hSemS{\hV_1}$.
	 Intuitively, this is because the ensuing instrumented systems only generate  (replaced) actions that are not of concern to
	 $\hV_1$.
	Since this behaviour applies to any system \eVr is composed with, we can conclude that \senfdef{\eVr}{\hV_1}.
\item $\eI{\eVs}{\pVg}\in\hSemS{\hV_1}$ and $\eI{\eVs}{\pVb}\in\hSemS{\hV_1}$
 because the resulting instrumented systems never produce inputs with $\req$ on a port number other than $j$.
 We can thus conclude that \senfdef{\eVs}{\hV_1}.
\item $\eI{\eVss}{\pVg}\in\hSemS{\hV_1}$ and $\eI{\eVss}{\pVb}\in\hSemS{\hV_1}$.
Since the resulting instrumentation suppresses consecutive input requests  (if any) after any number of serviced requests on any port other than $j$, we can conclude that \senfdef{\eVss}{\hV_1}. \qed
\end{itemize}
\end{example}

By some measures, sound enforcement is a relatively weak requirement for adequate enforcement as it does not regulate the \emph{extent} of the induced enforcement.
%
More concretely, consider the case of enforcer \eVs from \Cref{ex:transducers}.
Although \eVs manages to suppress the violating executions of system $\pVb$, thereby bringing it in line with property $\hV_1$, it needlessly modifies the behaviour of $\pVg$ (namely it prohibits it from producing any inputs with \req\! on port numbers that are not $j$), even though it satisfies $\hV_1$.
Thus, in addition to sound enforcement we require a \emph{transparency} condition for adequate enforcement.
The requirement dictates that whenever a system $\pV$ already satisfies the property \hV, the assigned enforcer \eV should not alter the behaviour of $\pV$.
Put differently, the behaviour of the enforced system should be behaviourally equivalent to the original system.

\begin{definition}[Transparent Enforcement] \label[definition]{def:tenf} An enforcer \eV is \emph{transparent} when enforcing a formula \hV, denoted as \tenfdef{\eV}{\hV}, iff  for \emph{all} $\pV\in\Sys$, $\pV\in\hSemS{\hV}$  implies $\eI{\eV}{\pV}\sim\pV$. \qed
\end{definition}

\begin{example} \label[example]{ex:transparency}
	We have already argued---via the counter example \pVg---why \eVs does \emph{not} transparently enforce $\hV_1$.
	We can also argue easily why $\neg\tenfdef{\eVr}{\hV_1}$ either:
	the simple system $\prf{\actReq}{\nil}$ trivially satisfies $\hV_1$ but, clearly, we have the inequality $\eI{\eVr}{\prf{\actReq}{\nil}} \not\sim \prf{\actReq}{\nil}$ since
	$\eI{\eVr}{\prf{\actReq}{\nil}} \traS{\actIn{j}{\req}} \eI{\eVr}{\nil}$
	and $\prf{\actReq}{\nil} \centernot{\traS{\actIn{j}{\req}}}$.

	It turns out that enforcer \tenfdef{\eVss}{\hV_1}, however.  Although this property is not as easy to show---due to the universal quantification over all systems---we can get a fairly good intuition for why this is the case via the example \pVg: it satisfies $\hV_1$ and
	$\eI{\eVss}{\pVg} \sim \pVg$ holds. \qed
\end{example}

\begin{definition}[Enforcement]
	\label[definition]{def:enforcement}
A monitor \eV enforces property \hV whenever it does so $(i)$ soundly, \Cref{def:senf} and $(ii)$ transparently, \Cref{def:tenf}. \qed
\end{definition}

\newcommand{\hVns}{\ensuremath{\hV_{\textsf{ns}}}\xspace}
\newcommand{\pVA}{\ensuremath{\pV_{\textsf{a}}}\xspace}
\newcommand{\pVRA}{\ensuremath{\pV_{\textsf{ra}}}\xspace}
\newcommand{\pVR}{\ensuremath{\pV_{\textsf{r}}}\xspace}
\newcommand{\eVA}{\ensuremath{\eV_{\textsf{a}}}\xspace}
\newcommand{\eVR}{\ensuremath{\eV_{\textsf{r}}}\xspace}

For any reasonably expressive logic (such as \recHML), it is usually the case that \emph{not} every formula can be enforced, as the following example informally illustrates.

\begin{example} \label[example]{ex:shml-only} Consider the \recHML property \hVns, together with the two systems \pVRA and \pVR:
	\begin{align*}
		\hVns\defeq\hOr{\hNec{\actReq}\hFls\;}{\;\hNec{\actAns}\hFls} &&
		\pVRA \defeq \ch{\prf{\actReq}{\nil}}{\prf{\actAns}{\nil}} &&
		\pVR \defeq \prf{\actReq}{\nil}
	\end{align*}
  A system satisfies \hVns if \emph{either} it cannot produce action \actReq  \emph{or} it cannot produce action \actAns.
	Clearly, \pVRA violates this property as it can produce both.
	This system can only be enforced via action suppressions or replacements because insertions would immediately break transparency.
	Without loss of generality, assume that our monitors employ suppressions (the same argument applies for action replacement).
	The monitor $\eVR \defeq \rec{\rVV}{\bigl( \ch{\prf{\actSN{\actReq}{\actt}}{\rVV}}{\prf{\actSN{\actAns}{\actt}}{\rVV}} \bigr)}$ would in fact be able to suppress the offending actions produced by \pVRA, thus obtaining $\eI{\eVR}{\pVRA} \in \hSemS{\hVns}$.
	However, it would also suppress the sole action \actReq produced by the system \pVR, even though this system satisfies \hVns.
	This would, in turn, violate the transparency criterion of \Cref{def:tenf} since it needlessly suppresses \pVR's actions, \ie although $\pVR \in \hSemS{\hVns}$ we have $\eI{\eVR}{\pVR} \not\sim \pVR$.
	The intuitive reason for this problem is that a monitor cannot, in principle, look into the computation graph of a system, but is limited to the behaviour the system exhibits at runtime.  \qed
	%
\end{example}

\section{Synthesising Suppression Enforcers} \label{sec:synthesis}

Despite their merits, \Cref{def:enforcement,def:enforceability} are not easy to work with.
The universal quantifications over all systems in \Cref{def:senf,def:tenf} make it hard to establish that a monitor correctly enforces a
property.
Moreover, according to \Cref{def:enforceability}, in order to determine whether a particular property is enforceable or not, one would need to show the existence of a monitor that correctly enforces it;
put differently, showing that a property is \emph{not} enforceable entails another universal quantification, this time showing that no monitor can possibly enforce the property.
Lifting the question of enforceability to the level of a (sub)logic entails a further universal quantification, this time on all the logical formulas of the logic; this is often an infinite set.
%
We address these problems in two ways.
First, we identify a non-trivial syntactic subset of \recHML that is \emph{guaranteed to be enforceable}; in a multi-pronged approach to system verification, this could act as a guide for whether the property should be considered at a pre-deployment or post-deployment phase.
Second, for \emph{every} formula \hV in this enforceable subset, we provide an \emph{automated procedure} to \emph{synthesise} a monitor \eV from it that correctly enforces \hV when instrumented over arbitrary systems, according to \Cref{def:enforcement}.
This procedure can then be used as a basis for constructing tools that automate property enforcement. 

\begin{figure}[t]
	\begin{align*}
		\hV,\hVV\in\SHML&\;\bnfdef\;
		\hTru \bnfseppp
		\hFls \bnfseppp
		\hBigAndU{i\in\IndSet}{\hV_i}  \bnfseppp
		\hNec{\actSN{\pate}{\bV}}{\hV} \bnfseppp
		\hVarX  \bnfseppp
		\hMaxXF
	\end{align*} \vspace{-6mm}
	\caption{The syntax for the safety \recHML fragment, \SHML.}
	\label{fig:shml-syn} \vspace{-4mm}
\end{figure}

In this paper, we limit our enforceability study to suppression monitors, transducers that are only allowed to intervene by dropping (observable) actions.
Despite being more constrained, suppression monitors
side-step problems associated with what data to use in a payload-carrying action generated by the enforcer,
as in the case of insertion and replacement monitors:
the notion of a default value for certain data domains is not always immediate.
Moreover, suppression monitors are particularly useful for enforcing \emph{safety} properties, as shown in \cite{Ligatti2005,Bielova2011PhD,Falcone2012}.
Intuitively, a suppression monitor would suppress actions as soon as it becomes apparent that a violation is about to be committed by the \sus.
Such an intervention intrinsically relies on the \emph{detection} of a violation.
To this effect, we use a prior result from \cite{Francalanza2017FMSD}, which identified a maximally-expressive logical fragment of \recHML that can be handled by violation-detecting (recogniser) monitors.
We thus limit our enforceability study to this maximal safety fragment, called \SHML, since a \emph{transparent} suppression monitor cannot judiciously suppress actions without first detecting a (potential) violation.
%
%
\Cref{fig:shml-syn} recalls the syntax for \SHML. The logic is restricted to \emph{truth} and \emph{falsehood} (\hTru and \hFls), conjunctions (\hBigAnd{i\in\IndSet}{\hV}), and necessity modalities (\hNec{\actSN{\pate}{\bV}}{\hV}), while recursion may only be expressed through greatest fixpoints (\hMaxXF);
the semantics
follows
that of \Cref{fig:recHML}.

%

A
standard way how to achieve our aims would be to $(i)$ define a (total) synthesis function $\eSem{-}:: \SHML \mapsto \Trn$ from \SHML formulas to suppression monitors and $(ii)$ then show that for \emph{any} $\hV\in\SHML$, the synthesised monitor $\eSem{\hV}$ enforces \hV.
Moreover, we would also require the synthesis function to be compositional, whereby the definition of the enforcer for a composite formula is defined in terms of the enforcers obtained for the constituent subformulas.  There are a number of reasons for this requirement.
For one, it would simplify our analysis of the produced monitors and allow us to use standard inductive proof techniques to prove properties about the synthesis function, such as the aforementioned criteria $(ii)$.  However, a naive approach to such a scheme is bound to fail, as discussed in the next example.

\begin{example} \label[example]{ex:naive-compositional}  Consider a semantically equivalent reformulation of $\hV_1$ from \Cref{ex:uhml-formula}.
	\begin{align*}
		\hV_2 &\defeq\; \hVBdef
	\end{align*}
   At an intuitive level, the suppression monitor that one would expect to obtain for the subformula $\hV'_2\defeq\hNec{\symReqAB}\hNec{\symReqB}\hFls$ is
	 $\prf{\trnsReqAB}{\rec{\rVV}{\prf{\trnsReqB}{\rVV}}}$ (\ie an enforcer that repeatedly drops any \req inputs following a \req input on the same port), whereas the monitor obtained for the subformula $\hV''_2\defeq\hNec{\symReqAB}\hNec{\symAns}\hVarX$ is
	 $\prf{\trnsReqAB}{ \prf{\actSTRID{\actOut{\dvV}{\ans}}}{{\rV}} }$
	 (assuming some variable mapping from
	 \hVarX to
	 \rV).
	 These monitors would then be combined in the synthesis for $\hMaxX{\hAnd{\hV''_2}{\hV'_2}}$ as
	 \begin{align*}
 		 \eVb &\defeq\; \rec{\rV}{\ch{\bigl(\prf{\trnsReqAB}{\prf{\actSTRID{\actOut{\dvV}{\ans}}}{{\rV}}} \bigr) \,}{\,\bigl(\prf{\trnsReqAB}{\rec{\rVV}{\prf{\trnsReqB}{\rVV}}} \bigr)}}
 	\end{align*}
	One can easily see that \eVb does \emph{not} behave deterministically, \emph{nor} does it soundly enforce $\hV_2$.  For instance, for the violating system $\prf{\actReq}{\prf{\actReq}{\nil}} \not\in\hSemS{\hV_2}(=\hSemS{\hV_1})$ we can observe the transition sequence
	\begin{math}
		\eI{\eVb}{\prf{\actReq}{\prf{\actReq}{\nil}}} \traSS{\actReq} \eI{\prf{\actSTRID{\actOut{i}{\ans}}}{{\eVb}}}{\prf{\actReq}{\nil}} \traSS{\actReq}
		\eI{\eIden}{\nil}
	\end{math}. \qed
\end{example}

Instead of complicating our synthesis function to cater for anomalies such as those presented in \Cref{ex:naive-compositional}---also making it \emph{less} compositional in the process---we opted for a two stage synthesis procedure.
First, we consider a \emph{normalised} subset for \SHML formulas which is amenable to a (straightforward)  synthesis function definition that is compositional.
This also facilitates the proofs for the conditions required by \Cref{def:enforcement} for any synthesised enforcer.
Second, we show that every \SHML formula can be reformulated in this normalised form without affecting its semantic meaning.
We can then show that our two-stage approach is expressive enough to show the enforceability for all of \SHML.

\begin{definition}[\SHML normal form]
	The set of normalised \SHML formulas is defined as:
	\begin{align*}
		\hV,\hVV\in\SHMLnf\;\bnfdef\; \hTru
		\bnfseppp\hFls
		\bnfseppp \hBigAndU{i\in\IndSet}\hNec{\actSN{\pate_i}{\bV_i}}{\hV_{i}}
		\bnfseppp\hVarX
		\bnfseppp\hMaxXF\;.
	\end{align*}
	The above grammar combines necessity operators with conjunctions  into one construct $\hBigAndU{i\in\IndSet}\hNec{\actSN{\pate_i}{\bV_i}}{\hV_{i}}$.  Normalised \SHML formulas are required to satisfy two further conditions:
	\begin{enumerate}
		\item For every  $\hBigAndU{i\in\IndSet}\hNec{\actSN{\pate_i}{\bV_i}}{\hV_{i}}$, for all $j, h \in \IndSet$ where $j{\neq}h$ we have $\hSemS{\actSN{\pate_j}{\bV_j}} \cap \hSemS{\actSN{\pate_h}{\bV_h}} = \emptyset$.
		\item For every \hMaxXF we have $\hVarX \in \fv{\hV}$. \qed
	\end{enumerate}
\end{definition}

In a (closed) normalised \SHML formula, the basic terms \hTru  and \hFls can never appear unguarded unless they are at the top level (\eg we can never have $\hAnd{\hV}{\hFls}$ or $\hMax{\hVarX_{0}}{\ldots\hMax{\hVarX_{n}}{\hFls}}$).
 Moreover, in any conjunction of necessity subformulas, $ \hBigAndU{i\in\IndSet}\hNec{\actSN{\pate_i}{\bV_i}}{\hV_{i}}$, the
necessity guards are \emph{disjoint} and \emph{at most one} necessity guard can satisfy any particular action.


\begin{definition} \label[definition]{def:synthesis}
	The synthesis function $\eSem{-}:\SHMLnf{\,\mapsto\,}\Trn$ is defined inductively as:
	\begin{align*}
	\eSem{\hVarX} &\defeq \rV
	\qquad\qquad
	\eSem{\hTru} \defeq \eSem{\hFls} \defeq \eIden
	\qquad\qquad
	\eSem{\hMaxX{\varphi}} \defeq \rec{\rV}{\eSem{\varphi}}\\
	\eSem{\hBigAndD{i{\,\in\,}\IndSet}\hNec{\actSN{\pate_i}{\bV_i}}{\varphi_{i}}}
	&\defeq
	\rec{\rVV}{\chBigI
	  \begin{xbrace}{ll}
		\prf{\actSTN{\pate_i}{\bV_i}{\actt}}{\rVV} & \qquad \text{if } \hV_i{=}\hFls\\
		\prf{\actSTN{\pate_i}{\bV_i}{\underline{\pate_i}}}{\eSem{\varphi_i}} & \qquad \text{otherwise}
		\end{xbrace}
		}\tag*{
			\qed
		}
	\end{align*}
\end{definition}

The synthesis function is compositional.
It assumes a bijective mapping between formula variables and monitor recursion variables and converts logical variables \hVarX accordingly,  whereas maximal fixpoints, $\hMaxXF$, are converted into the corresponding recursive enforcer.
The synthesis also converts truth and falsehood formulas, \hTru and \hFls, into the identity enforcer \eIden.
Normalized conjunctions, $\hBigAndU{i{\,\in\,}\IndSet}\hNec{\actSN{\pate_i}{\bV_i}}{\hV_{i}}$, are synthesised into a \emph{recursive summation} of enforcers, \ie $\rec{\rVV}{\eV_i}$, where $\rVV$ is fresh, and every branch $\eV_i$ can be either of the following:
\begin{enumerate}[$(i)$]
	\item when $\eV_i$ is derived from a branch of the form $\hNec{\actSN{\pate_i}{\bV_i}}\hV_i$ where $\hV_i{\neq}\hFls$, the synthesis produces an enforcer with the \emph{identity transformation} prefix, $\actSTN{\pate_i}{\bV_i}{\underline{\pate_i}}$, followed by the enforcer synthesised from the continuation $\hV_{i}$,
	\ie  $\hNec{\actSN{\pate_i}{\bV_i}}\hV_i$ is synthesised as $\prf{\actSTN{\pate_i}{\bV_i}{\underline{\pate_i}}}{\eSem{\hV_i}}$;
	\item when $\eV_i$ is derived from a branch of the form $\hNec{\actSN{\pate_i}{\bV_i}}\hFls$, the synthesis produces a \emph{suppression transformation}, $\actSTN{\pate_i}{\bV_i}{\actt}$, that drops every concrete action matching the symbolic action \actSN{\pate_i}{\bV_i}, followed by the recursive variable of the branch \rVV,
	\ie a branch of the form $\hNec{\actSN{\pate_i}{\bV_i}}\hFls$ is translated into $\prf{\actSTN{\pate_i}{\bV_i}{\actt}}{\rVV}$.
\end{enumerate}

\begin{example} \label[example]{ex:synthesis}
	Recall formula $\hV_1$ from \Cref{ex:uhml-formula}, recast in term of \SHMLnf's grammar:
	\begin{align*}
		\qquad\hV_1 & \defeq \hMaxX{\hBigAndD{}\bigl(\,\hNec{\symReqAB}\;\bigl(\hAnd{\hNec{\symAns}\hVarX\;}{\;\hNec{\symReqB}\hFls}\bigr)\bigr)}
	\end{align*}
	Using the synthesis function defined in \Cref{def:synthesis}, we can generate the enforcer
	\begin{align*}
 \eSem{\hV_1} & = \rec{\rV}{\rec{\rVVV}{\chBig{}\bigl(\,\prf{\symReqAB}{\rec{\rVV}{ (\ch{\prf{\symAns}{\rV}\,}{\,\prf{\actSTD{\patReqB}{\boolT}}{\rVV}})}}\bigr)}}
 \intertext{which can be optimized by removing redundant recursive constructs (\eg \rec{\rVVV}{\_}), obtaining:}
 & =
 \rec{\rV}{\prf{\symReqAB}{\rec{\rVV}{(\ch{\prf{\symAns}{\rV}\,}{\,\prf{\actSTD{\patReqB}{\boolT}}{\rVV}})}}} \; =\; \eVss
 \end{align*} \\[-12mm]\qed
\end{example}

\noindent We now present the first main result to the paper.

\begin{theorem}[Enforcement] \label[theorem]{thm:strong-enf} The (sub)logic $\SHMLnf$ is enforceable.
\end{theorem}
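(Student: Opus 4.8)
The plan is to reduce the statement to the correctness of the synthesis function of \Cref{def:synthesis}. By \Cref{def:enforceability} it suffices to exhibit, for every $\hV{\in}\SHMLnf$, some enforcer that enforces $\hV$; the natural candidate is the synthesised monitor $\eSem{\hV}$. By \Cref{def:enforcement} I then need to establish two separate claims for every $\hV{\in}\SHMLnf$, namely $\senfdef{\eSem{\hV}}{\hV}$ (soundness, \Cref{def:senf}) and $\tenfdef{\eSem{\hV}}{\hV}$ (transparency, \Cref{def:tenf}). Both arguments rest on a tight correspondence between the states traversed by $\eSem{\hV}$ and the residual (\textsl{after}) formulas obtained from $\hV$ by consuming an action, so I would first isolate an auxiliary lemma stating that synthesis commutes with fixpoint unfolding (so that each monitor state reachable from $\eSem{\hV}$ is itself of the form $\eSem{\hVV}$ for a residual $\hVV{\in}\SHMLnf$) and that, by the disjointness condition on normalised conjunctions, at most one branch of $\eSem{\hV}$ can fire on any given concrete action, giving the monitor a deterministic, action-by-action behaviour.

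For \textbf{transparency} I would proceed coinductively. I would show that the relation
$\R \defeq \Set{(\eI{\eSem{\hVV}}{\pV},\pV) \mid \hVV{\in}\SHMLnf,\; \pV{\in}\hSemS{\hVV}} \cup \Set{(\eI{\eIden}{\pV},\pV) \mid \pV{\in}\Sys}$
is a strong bisimulation. The identity pairs are immediate since $\eIden$ relabels every action to itself. For the first component the crucial observation is that a \emph{satisfying} system never triggers a suppression: if $\pV{\in}\hSemS{\hVV}$ and $\pV\traS{\acta}\pV'$, then by disjointness $\acta$ matches at most one guard $\actSN{\pate_i}{\bV_i}$; were that branch the suppressing one (continuation $\hFls$), the weak transition $\pV\wtraS{\acta}\pV'$ would contradict $\pV{\in}\hSemS{\hNec{\actSN{\pate_i}{\bV_i}}\hFls}{\supseteq}\hSemS{\hVV}$. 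Hence every firing branch is an identity transformation (rule \rtit{iTrn}) or, when no guard matches, rule \rtit{iTer} yields $\eI{\eIden}{\pV'}$; in either case the visible action is preserved and the reached pair lies in $\R$ with the residual $\hVV_i$ satisfied by $\pV'$. The matching of silent moves uses rule \rtit{iAsy} together with the fact that $\hSemS{\hVV}$ is closed under $\actt$-transitions (a small lemma about the weak semantics), and the converse direction additionally uses that synthesised monitors contain \emph{no} insertion prefixes, so rule \rtit{iIns} is never enabled.

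For \textbf{soundness} I would fix an arbitrary $\pV$ and a satisfiable $\hV$ and argue that $\eI{\eSem{\hV}}{\pV}{\in}\hSemS{\hV}$ by an invariant over reachable configurations, structured around the fixpoint characterisation of the $\SHML$ semantics. For a guard with continuation $\hFls$, I would show the forbidden action can never surface as a weak visible transition: along any $\tau^\ast$ prefix the monitor stays in its suppressing (recursive) state—silent moves arise only from \rtit{iAsy} or from suppressions, neither of which exits that state, while \rtit{iTer} would itself emit a \emph{visible} action—so by disjointness the offending action is matched solely by the suppressing branch and rewritten to $\actt$, giving $\eI{\eSem{\hV}}{\pV}\nwtraS{\acta}$ and vacuous satisfaction of that necessity. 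For a guard with continuation $\hVV_i{\neq}\hFls$, the identity prefix lets the action through to $\eI{\eSem{\hVV_i}}{\pV'}$, and I would close the argument on the residual, noting that every normalised formula other than $\hFls$ is satisfiable (witnessed by $\nil$, which discharges all necessities vacuously), so the satisfiability hypothesis of \Cref{def:senf} transfers to all continuations reached during enforcement.

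The main obstacle I expect is precisely this residual-formula/monitor-state correspondence under recursion: one must verify that unfolding $\rec{\rV}{\eSem{\hVV}}$ reproduces exactly the synthesis of the unfolded formula, so that the invariants for soundness and the bisimulation blocks for transparency remain stable as the fixpoints are traversed. Intertwined with this is the delicate bookkeeping between the \emph{weak} transition semantics of the logic and the fact that suppression turns a system action into $\actt$: the heart of the soundness proof is showing that no sequence of silent steps can ever let a forbidden action escape as a visible weak transition, which is where the determinism granted by the disjointness condition and the persistence of the suppressing monitor state along $\tau^\ast$ prefixes are indispensable.
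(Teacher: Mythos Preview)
Your proposal is correct and follows essentially the paper's own route: reduce to soundness and transparency of $\eSem{-}$, prove transparency by exhibiting a strong bisimulation $\R\defeq\setdef{(\pV,\eI{\eSem{\hV}}{\pV})}{\pV\hSatS\hV}$, and prove soundness coinductively by showing $\R\defeq\setdef{(\eI{\eSem{\hV}}{\pV},\hV)}{\hV\in\Sat}$ is a satisfaction relation in the sense of \Cref{fig:uhml-sat}. The only notable deviations are cosmetic: the paper discharges the fixpoint case by an auxiliary induction on the number of top-level $\hMax{}{}$ constructors ($\lenMax{\hV}$, under a guardedness assumption) rather than a standalone synthesis-commutes-with-unfolding lemma, and it threads the satisfiability hypothesis through to residual formulas directly instead of invoking your $\nil$ witness.
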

\begin{proof} By \Cref{def:enforceability}, the
	result follows if we show that for all $\hV{\,\in\,}\SHMLnf$,  $ \eSem{\hV}\;\textsl{enforces}\; \hV$.  By \Cref{def:enforcement}, this is a corollary following from \Cref{lemma:soundness,lemma:transparency} stated below. 
\end{proof}

\begin{proposition}[Enforcement Soundness] \label[proposition]{lemma:soundness} For every system $\pV{\,\in\,}\Sys$ and $\hV{\,\in\,}\SHMLnf$ then 
	$\hV \in \Sat \imp \eI{\eSem{\hV}{}}{\pV}{\,\in\,}\hSemS{\hV} $. \qed
\end{proposition}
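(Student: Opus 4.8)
The plan is to prove the stronger, fully quantified statement directly: for every closed $\hV\in\SHMLnf$ with $\hV\in\Sat$ and every $\pV\in\Sys$, $\eI{\eSem{\hV}}{\pV}\in\hSemS{\hV}$. First I would pin down the role of the hypothesis $\hV\in\Sat$. In normal form the inert process $\nil$ satisfies every formula that is not literally $\hFls$: all necessities hold vacuously for $\nil$, and condition~2 (each $\hMaxXF$ has $\hVarX\in\fv{\hV}$) prevents an unguarded recursion from collapsing to $\hFls$. Hence $\hV\in\Sat$ is equivalent here to $\hV\neq\hFls$, so the hypothesis merely discards the single unenforceable formula and the remaining obligation is a purely operational one about the synthesised enforcer. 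The proof then proceeds by induction on the structure of $\hV$, with the greatest-fixpoint case discharged coinductively through the characterisation $\hSemS{\hMaxXF,\rho}=\bigcup\Set{S \mid S\subseteq\hSemS{\hV,\rho[\hVarX\mapsto S]}}$. The base case $\hTru$ is immediate since $\hSemS{\hTru}=\Sys$, and $\hFls$ is excluded by $\Sat$.

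The enabling technical step is an analysis of the instrumented LTS, which I would isolate as a lemma. Because $\eSem{-}$ never produces insertions (only identity and suppression transformations), the sole $\actt$-labelled moves of $\eI{\eSem{\hV}}{\pV}$ come from \rtit{iAsy} (an internal system step leaving the enforcer unchanged) or from \rtit{iTrn} firing a suppression branch, which by \rtit{eRec} returns the enforcer to a fixpoint-unfolding of $\eSem{\hV}$. Thus along any $\actt$-sequence the monitored state stays of the form $\eI{\eSem{\hV}}{\pV^*}$, up to the strong bisimilarity induced by such unfoldings, under which both $\eI{-}{-}$ and $\hSemS{-}$ are invariant. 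Consequently, when the single visible action $\acta$ of a weak transition $\eI{\eSem{\hV}}{\pV}\wtra{\acta}\pVV$ is emitted, the emitting enforcer is $\eSem{\hV}$ itself; and disjointness (condition~1) guarantees that $\acta$ matches at most one guard, so the enforcer acts deterministically on $\acta$ via either \rtit{iTrn} or \rtit{iTer}.

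This yields a three-way split for the central case $\hV=\hBigAndU{i\in\IndSet}\hNec{\actSN{\pate_i}{\bV_i}}{\hV_i}$, where I must place $\eI{\eSem{\hV}}{\pV}$ in each conjunct $\hSemS{\hNec{\actSN{\pate_i}{\bV_i}}{\hV_i}}$. If $\acta$ matches a guard $i$ with $\hV_i{=}\hFls$, then $\eSem{\hV}$ transforms $\acta$ into $\actt$, so $\acta$ can never be emitted visibly — and \rtit{iTer} is blocked because the enforcer \emph{can} transform $\acta$ — making $\hNec{\actSN{\pate_i}{\bV_i}}{\hFls}$ hold vacuously, as $\hSemS{\hFls}=\emptyset$ demands. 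If $\acta$ matches a guard $i$ with $\hV_i{\neq}\hFls$, the identity branch fires and the state moves to $\eI{\eSem{\hV_i}\sV}{\pV''}$; absorbing the $\actt$-suffix by the invariance above (now for $\hV_i\sV$) gives $\pVV=\eI{\eSem{\hV_i\sV}}{\pV'''}$, and since $\hV_i{\neq}\hFls$ the instance $\hV_i\sV$ is satisfiable, so the induction hypothesis for the strict subformula $\hV_i$ (instantiated via $\sV$, using that synthesis commutes with data substitution, and applied to $\pV'''$) delivers $\pVV\in\hSemS{\hV_i\sV}$. If $\acta$ matches no guard it is irrelevant to every conjunct. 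For $\hV=\hMaxXF$ I would instead take $S=\Set{\eI{\eSem{\hMaxXF}}{\pV'} \mid \pV'\in\Sys}$ (closed under $\bisim$), run the same transition analysis on the body with valuation $\rho[\hVarX\mapsto S]$, and invoke $\rho(\hVarX)=S$ precisely when a continuation is the bound variable $\hVarX$: there the identity branch returns the enforcer to $\rec{\rV}{\eSem{\hV}}=\eSem{\hMaxXF}$, landing the state back in $S$. This establishes $S\subseteq\hSemS{\hV,\rho[\hVarX\mapsto S]}$ and hence $S\subseteq\hSemS{\hMaxXF}$.

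The hard part will be this instrumentation analysis together with its bookkeeping: proving enforcer-state invariance along $\actt$-sequences up to fixpoint unfolding, and then threading the weak transitions cleanly through a possibly changing enforcer while interleaving the primary structural induction with the secondary (inner) induction on a fixpoint body and the coinduction for recursion. Justifying that $\eI{-}{-}$ and $\hSemS{-}$ respect the unfoldings introduced by \rtit{eRec} and by suppression loops is the delicate, if routine, component. Throughout, the disjointness and guardedness conditions of the normal form are exactly what make the enforcer deterministic and keep the argument sound, ruling out the nondeterministic anomaly exhibited in \Cref{ex:naive-compositional}.
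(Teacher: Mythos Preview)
Your proposal is viable in outline but takes a materially harder route than the paper, and it under-specifies the one place where the difficulty concentrates. The paper does not do structural induction at all. It first switches to an equivalent \emph{coinductive} satisfaction semantics for \SHML (the relation $\hSat$ from Aceto \etal), defined as the largest relation closed under the expected clauses, and then proves soundness in one shot by showing that the single global relation
\[
\R \;=\; \setdef{(\eI{\eSem{\hV}}{\pV},\hV)}{\hV\in\Sat}
\]
is a satisfaction relation. This collapses the whole argument to a flat case analysis on $\hV$. In particular the fixpoint case becomes almost free: from $\hMaxXF\in\Sat$ one gets $\hVMaxXFSub\in\Sat$, hence $(\eI{\eSem{\hVMaxXFSub}}{\pV},\hVMaxXFSub)\in\R$ by definition of $\R$, and that is precisely the closure condition the max clause demands. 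No nested induction, no explicit post-fixpoint set, no bookkeeping about open subformulas.

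By contrast, your plan layers a primary structural induction, a coinduction via the post-fixpoint set $S$, and a ``secondary inner induction on the body''. The gap is in that secondary layer. Your set $S=\Set{\eI{\eSem{\hMaxXF}}{\pV'}\mid\pV'\in\Sys}$ only covers the case where a continuation is literally $\hVarX$; but a continuation inside the body can be $\hNec{\actS'}\hVarX$, or a nested $\hMaxY{\hVV}$ with $\hVarX\in\fv{\hVV}$, and after a visible transition the enforcer reaches states that are neither $\eSem{\hMaxXF}$ nor $\eSem{\hVV}$ for a closed strict subformula $\hVV$. Your outer IH (stated for closed formulas) does not apply to these open instances, and your $S$ does not contain them. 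To make the argument go through you must either enlarge $S$ to include all such intermediate enforcer states, or set up a simultaneous post-fixpoint over all recursion subformulas with accumulated valuations; either fix essentially reconstructs the paper's global relation $\R$. Your operational observations---suppression branches loop back via \rtit{eRec} so matching actions never surface visibly, disjointness makes the enforcer deterministic on each $\acta$, \rtit{iTer} is blocked when a suppression branch matches---are correct and coincide with the paper's reasoning in the conjunction case; the difference is purely in how the recursion is discharged.
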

\begin{proposition}[Enforcement Transparency] \label[proposition]{lemma:transparency} For every system $\pV{\,\in\,}\Sys$ and  $\hV{\,\in\,}\SHMLnf$ then 
	$\pV{\,\in\,}\hSemS{\hV} \imp  \eI{\eSem{\hV}{}}{\pV}\bisim\pV$. \qed
\end{proposition}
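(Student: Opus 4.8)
The plan is to prove \Cref{lemma:transparency} coinductively, by exhibiting a strong bisimulation relating every enforced-but-satisfying system to the bare system. Concretely, I would show that
\[
\R \defeq \Set{ (\eI{\eSem{\hV}}{\pV},\, \pV) \mid \hV \in \SHMLnf \text{ and } \pV \in \hSemS{\hV} }
\]
is a strong bisimulation; since $\eSem{\hV}$ is a closed transducer whenever $\hV$ is a closed formula and the claim is exactly $(\eI{\eSem{\hV}}{\pV},\pV)\in\R$, this suffices. The governing intuition is that a synthesised enforcer only ever \emph{suppresses} on a branch derived from $\hNec{\actSN{\pate_i}{\bV_i}}\hFls$, and a system that satisfies $\hV$ can never exhibit an action matching such a guard; hence on satisfying systems the enforcer performs only \emph{identity} transformations, leaving the trace untouched.

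Before the main argument I would isolate four auxiliary facts. \textbf{(i)} Synthesis commutes with substitution: $\eSem{\hV}\sV = \eSem{\hV\sV}$ for a data substitution $\sV$, and $\eSem{\hVV}\sub{\eSem{\hMaxX{\hVV}}}{\rV} = \eSem{\hVV\sub{\hMaxX{\hVV}}{\hVarX}}$ for the recursion/fixpoint unfolding (using the bijective variable mapping of \Cref{def:synthesis}); both by structural induction on the formula. \textbf{(ii)} $\SHMLnf$ is closed under data substitution and under fixpoint unfolding, so that the residual formulas produced along a computation remain normalised (substitution replaces only variable leaves and never merges conjunctions, so guard-disjointness is preserved). \textbf{(iii)} Each conjunct of a satisfied conjunction is itself satisfied, and in particular $\pV\in\hSemS{\hNec{\actSN{\pate_i}{\bV_i}}\hFls}$ together with $\pV\tra{\acta}\pV'$ and $\acta$ matching $\actSN{\pate_i}{\bV_i}$ is contradictory, since it would force $\pV'\in\hSemS{\hFls}=\emptyset$ (using $\pV\tra{\acta}\pV'\Rightarrow\pV\wtra{\acta}\pV'$). \textbf{(iv)} Because the modalities of \Cref{fig:recHML} are interpreted over \emph{weak} transitions, $\hSemS{\hV}$ is closed under silent transitions: $\pV\in\hSemS{\hV}$ and $\pV\tra{\actt}\pV'$ imply $\pV'\in\hSemS{\hV}$; I would prove this by induction on $\hV$, strengthening the statement so that $\rho$ maps variables to $\actt$-closed sets.

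With these in hand, verifying that $\R$ is a bisimulation is a case analysis on the instrumentation rules of \Cref{fig:mod-re}. For a silent move $\pV\tra{\actt}\pV'$, rule \rtit{iAsy} gives $\eI{\eSem{\hV}}{\pV}\tra{\actt}\eI{\eSem{\hV}}{\pV'}$, and fact \textbf{(iv)} keeps the residual in $\R$. For a visible move $\pV\tra{\acta}\pV'$ there are two subcases after unfolding $\eSem{\hV}$ (the recursion unfoldings \rtit{eRec}/\rtit{eSel} are absorbed into one labelled transducer step): if $\acta$ matches some guard $\actSN{\pate_i}{\bV_i}$ then, by \textbf{(iii)}, the matched branch is the \emph{identity} branch $\prf{\actSTN{\pate_i}{\bV_i}{\underline{\pate_i}}}{\eSem{\hV_i}}$, so \rtit{eTrn} yields $\actu=\underline{\pate_i}\sV=\acta$ and, via \rtit{iTrn}, $\eI{\eSem{\hV}}{\pV}\tra{\acta}\eI{\eSem{\hV_i}\sV}{\pV'}$; by \textbf{(i)}--\textbf{(ii)} the residual enforcer is $\eSem{\hV_i\sV}$ with $\hV_i\sV\in\SHMLnf$, and satisfaction of the $i$-th necessity gives $\pV'\in\hSemS{\hV_i\sV}$, so the pair is again in $\R$. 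If $\acta$ matches no guard then $\eSem{\hV}\ntra{\acta}$ and (since syntheses never insert) $\eSem{\hV}\ntra{\actdot}$, so \rtit{iTer} fires with $\eI{\eSem{\hV}}{\pV}\tra{\acta}\eI{\eIden}{\pV'}$; taking $\hTru$ (note $\eSem{\hTru}=\eIden$ and $\pV'\in\hSemS{\hTru}$) places the residual in $\R$. The converse direction is symmetric: as synthesised monitors are insertion-free, \rtit{iIns} never initiates a move, and \rtit{iTrn} and \rtit{iTer} are mutually exclusive, so every transition of $\eI{\eSem{\hV}}{\pV}$ arises from a matching transition of $\pV$ carrying the \emph{same} label, which $\pV$ can replay. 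The fixpoint case $\hV=\hMaxX{\hVV}$ is handled by observing that the transducer steps of $\eSem{\hMaxX{\hVV}}$ coincide with those of $\eSem{\hVV\sub{\hMaxX{\hVV}}{\hVarX}}$ (by \rtit{eRec} and \textbf{(i)}), while $\pV$ satisfies the unfolding by the fixpoint clause of \Cref{fig:recHML}, reducing to the conjunction case.

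The main obstacle is keeping the syntactic correspondence ``residual enforcer $=$ synthesis of residual formula'' intact across the interleaving of transducer recursion unfolding and logical fixpoint unfolding; this is exactly what the substitution lemma \textbf{(i)} and the normal-form closure \textbf{(ii)} are for, and stating them strongly enough---in particular handling the data substitutions arising from pattern matching simultaneously with recursion-variable substitutions---is the delicate part. A secondary subtlety is the weak-versus-strong mismatch between the modal semantics and the instrumentation relation: establishing the $\actt$-closure fact \textbf{(iv)} reconciles the two and is essential for the silent-transition clause.
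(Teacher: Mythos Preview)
Your proposal is correct and follows essentially the same route as the paper: both define the relation $\R=\{(\eI{\eSem{\hV}}{\pV},\pV)\mid \pV\in\hSemS{\hV}\}$ and verify the bisimulation transfer properties by case analysis, using $\tau$-closure of \SHML for silent moves, the observation that a satisfying system can never fire an action matching a $\hNec{\ldots}\hFls$ branch (so no suppression occurs), and commutation of synthesis with substitution and unfolding. The one point the paper makes explicit that you leave implicit is the termination of the fixpoint-unfolding step: the paper structures each direction as an induction on the number of \emph{top-level} $\mathsf{max}$ binders (under a guardedness assumption), which guarantees that repeatedly unfolding $\hMaxXF$ eventually exposes a conjunction; your phrase ``reducing to the conjunction case'' presupposes exactly this well-foundedness, so you should state the guardedness assumption and the measure.
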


\medskip

Following \Cref{thm:strong-enf}, to show that \SHML is an enforceable logic, we only need to show that for every $\hV\in\SHML$ there exists a corresponding $\hVV\in\SHMLnf$ with the same semantic meaning, \ie $\hSemS{\hV} = \hSemS{\hVV}$.
In fact, we go a step further and provide a constructive proof using a transformation $\nSem{-}:\SHML \mapsto \SHMLnf$ that derives a semantically equivalent \SHMLnf formula from a standard \SHML formula.
As a result, from an arbitrary \SHML formula $\hV$ we can then automatically synthesise a correct enforcer using $\eSem{\nSem{\hV}}$ which is useful for tool construction.

Our transformation $\nSem{\hV}$ relies on a number of steps; here we provide an outline of these steps.
First, we assume \SHML formulas that only use symbolic actions with \emph{normalised} patterns \pate, \ie patterns that do not use any data or free data variables (but they may use bound data variables).
In fact, any symbolic action \actSN{\pate}{\bV} can be easily converted into a corresponding one using normalised patterns as shown in the next example.

\begin{example}
	Consider the symbolic action \actSN{\actOut{\dvV}{\ans}}{\dvV \neq j}.
	It may be converted to a corresponding normalised symbolic action by replacing every occurrence of a data or free data variable in the pattern by a fresh bound variable, and then add an equality constraint between the fresh variable and the data or data variable it replaces in the pattern condition.
	In our case, we would obtain \actSN{\actOut{(\dvVV)}{(\dvVVV)}}{\dvV{\neq}j \wedge \dvVV{=}\dvV \wedge \dvVVV{=}\ans}. \qed
\end{example}

Our algorithm for converting \SHML formulas (with normalised patterns) to \SHMLnf formulas, $\nSem{-}$, is based on Rabinovich's work \cite{Rabinovich1993} for determinising systems of equations which, in turn relies on the standard powerset construction for converting NFAs into DFAs. It consists in the following six stages that we outline below:
\begin{enumerate}
	\item We unfold each recursive construct in the formula, to push recursive definitions inside the formula body. \Eg the formula
	\hMaxX{\bigl(\hAnd{\hNec{\actSN{\pate_1}{\bV_1}}{\hVarX}}{\hNec{\actSN{\pate_2}{\bV_2}}{\hFls}}\bigr)} is expanded to the formula
	\hAnd{\hNec{\actSN{\pate_1}{\bV_1}}{ \bigl(\hMaxX{\hAnd{\hNec{\actSN{\pate_1}{\bV_1}}{\hVarX}}{\hNec{\actSN{\pate_2}{\bV_2}}{\hFls}}} \bigr)}}{\hNec{\actSN{\pate_2}{\bV_2}}{\hFls}}.
	\item The formula is converted into a system of equations.  \Eg the expanded formula from the previous stage is converted into the set $\sset{X_0 = \hAnd{\hNec{\actSN{\pate_1}{\bV_1}}{X_0}}{\hNec{\actSN{\pate_2}{\bV_2}}{X_1}}, X_1 = \hFls}$.
	\item For every equation, the symbolic actions in the right hand side that are of the same kind are alpha-converted so that their bound variables match. \Eg Consider $X_0 = \hAnd{\hNec{\actSN{\pate_1}{\bV_1}}{X_0}}{\hNec{\actSN{\pate_2}{\bV_2}}{X_1}}$ from the previous stage where, for the sake of the example,
	$\pate_1 = \actIn{(\dvV_1)}{(\dvV_2)}$ and $\pate_2 = \actIn{(\dvV_3)}{(\dvV_4)}$.  The patterns in the symbolic actions are made syntactically equivalent by renaming $\dvV_3$ and $\dvV_4$ in \actSN{\pate_2}{\bV_2} into  $\dvV_1$ and $\dvV_2$ respectively.
	\item For equations with matching patterns in the symbolic actions, we create a variant that symbolically covers all the (satisfiable) permutations on the symbolic action conditions. \Eg Consider $X_0 = \hAnd{\hNec{\actSN{\pate_1}{\bV_1}}{X_0}}{\hNec{\actSN{\pate_1}{\bV_3}}{X_1}}$ from the previous stage.  We expand this to
	$X_0 = \hNec{\actSN{\pate_1}{\bV_1\wedge \bV_3}}{X_0}\wedge\hNec{\actSN{\pate_1}{\bV_1\wedge \bV_3}}{X_1} \wedge
	\hNec{\actSN{\pate_1}{\bV_1\wedge \neg(\bV_3)}}{X_0}\wedge\hNec{\actSN{\pate_1}{\neg(\bV_1)\wedge \bV_3}}{X_1}$.
	\item For equations with branches having \emph{syntactically equivalent} symbolic actions, we carry out a unification procedure akin to standard powerset constructions. \Eg we convert the equation from the previous step to
	$X_{\sset{0}} = \hNec{\actSN{\pate_1}{\bV_1\wedge \bV_3}}{X_{\sset{0,1}}}\wedge
	\hNec{\actSN{\pate_1}{\bV_1\wedge \neg(\bV_3)}}{X_{\sset{0}}}\wedge\hNec{\actSN{\pate_1}{\neg(\bV_1)\wedge \bV_3}}{X_{\sset{1}}}$
	using the (unified) fresh variables $X_{\sset{0}}, X_{\sset{1}}$ and $X_{\sset{0,1}}$.
	\item From the unified set of equations we generate again the \SHML formula starting from  $X_{\sset{0}}$.  This procedure may generate redundant recursion binders, \ie \hMaxXF where $X \not\in \fv{\hV}$, and we filter these out in a subsequent pass.
\end{enumerate}

We now state the second main result of the paper.

\begin{theorem}[Normalisation] \label{thm:norm-equivalence}
	For any  $\hV{\in}\SHML$ there exists $\hVV{\in}\SHMLnf $ s.t. $ \hSemS{\hV}{=}\hSemS{\hVV}$.
\end{theorem}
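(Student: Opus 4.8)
The plan is to define the normalisation map $\nSem{-}$ as the composition of the six stages outlined above and to establish two facts: (a) each stage preserves the denotational semantics of \Cref{fig:recHML}, so that $\hSemS{\hV} = \hSemS{\nSem{\hV}}$; and (b) the output genuinely lies in $\SHMLnf$, \ie its conjoined necessity guards are pairwise disjoint and every greatest-fixpoint binder binds a variable occurring free in its body. The existence claim then follows immediately by taking $\hVV = \nSem{\hV}$, which additionally yields the constructive, tool-oriented byproduct advertised before the statement.

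The semantic-preservation argument I would carry out stage by stage, each as a small lemma. Stages 1 and 6 are dispatched by standard fixpoint facts: unfolding is sound because a greatest fixpoint equals its unfolding, $\hSemS{\hMaxXF} = \hSemS{\hV\Sub{\hMaxXF}{\hVarX}}$, and the removal of a redundant binder in the final pass is justified by $\hSemS{\hMaxXF} = \hSemS{\hV}$ whenever $\hVarX \notin \fv{\hV}$. Stages 2 and 6 additionally require relating a formula to its associated system of equations; here I would give the system the standard greatest-fixpoint reading over the product lattice indexed by the equation variables (the equational $\mu$-calculus interpretation, via Bek\'ic's theorem) and show it agrees component-wise with the formula semantics. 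Stage 3 is immediate, since alpha-converting the bound data variables of a symbolic action leaves $\hSemS{\actSN{\pate}{\bV}}$ unchanged.

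The heart of the proof lies in stages 4 and 5, which realise the symbolic powerset/determinisation construction. For stage 4 I would prove a decomposition lemma: for a fixed pattern \pate, splitting necessity modalities along the Boolean regions of their conditions preserves meaning. The crux is that $\hSemS{\actSN{\pate}{\bV}}$ distributes over condition conjunction and disjunction as set intersection and union, whence the box distributes over a union of guards as a conjunction, $\hNec{\actSN{\pate}{\bor{\bV_1}{\bV_2}}}{\hV} = \hAnd{\hNec{\actSN{\pate}{\bV_1}}{\hV}}{\hNec{\actSN{\pate}{\bV_2}}{\hV}}$. For stage 5 the key equivalence is the box-conjunction law $\hAnd{\hNec{\actSN{\pate}{\bV}}{\hV}}{\hNec{\actSN{\pate}{\bV}}{\hVV}} = \hNec{\actSN{\pate}{\bV}}{(\hAnd{\hV}{\hVV})}$, which is exactly what the unified variable such as $X_{\sset{0,1}}$ encodes: two branches over syntactically identical guards are merged into one whose continuation denotes the intersection of the originals. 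Assembling these laws shows the powerset system is a faithful determinisation of the source, so that its greatest-fixpoint solution denotes the same set of states; this is the point at which I would invoke the adaptation of Rabinovich's determinisation result to our symbolic, branching setting.

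I expect the determinisation of stage 5 to be the main obstacle, for two reasons. First, one must show the construction \emph{terminates} and yields a \emph{finite} $\SHMLnf$ formula: only finitely many subsets of the original equation indices arise, but it must be argued that the condition-partitioning of stage 4 does not regenerate fresh, ever-growing conditions, so that the state space of the powerset construction stabilises. Second, unlike the classical NFA-to-DFA setting over a finite alphabet, the guards here are symbolic actions over a possibly infinite action domain equipped with filtering conditions; establishing the disjointness requirement $\hSemS{\actSN{\pate_j}{\bV_j}} \cap \hSemS{\actSN{\pate_h}{\bV_h}} = \emptyset$ for $j \neq h$, together with \emph{exhaustiveness} of the partition on the satisfiable region, needs the Boolean-region reasoning of stage 4 to interact correctly with pattern matching and with bound-variable scoping. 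Once these points are settled, verifying the two $\SHMLnf$ side conditions on the output is routine, which completes the proof.
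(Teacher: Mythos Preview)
Your proposal is correct and follows essentially the same approach as the paper: take $\hVV = \nSem{\hV}$ and argue that each of the six stages of the translation procedure preserves semantic equivalence. The paper's own proof is a terse one-liner to this effect, whereas you have helpfully spelled out the key semantic identities needed at each stage and flagged the termination and disjointness issues in stages 4--5 as the main technical obstacles, but the overall strategy is identical.
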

\begin{proof}
	The witness formula in normal form is \nSem{\hV}, where we show that each and every stage in the translation procedure preserves semantic equivalence. 
\end{proof}

\section{Alternative Transparency Enforcement} \label{sec:strong-enforceability}

Transparency for a property \hV, \Cref{def:tenf}, only restricts enforcers from modifying the behaviour of satisfying systems, \ie when $\pV{\in}\hSemS{\hV}$, but fails to specify any enforcement behaviour for the cases when the \sus violates the property $\pV{\notin}\hSemS{\hV}$.   In this section, we consider an alternative transparency requirement for a property \hV that incorporates the expected enforcement behaviour for \emph{both} satisfying and violating systems. More concretely, in the case of safety languages such as \SHML, a system typically violates a property along a specific set of execution traces; in the case of a satisfying system this set of ``violating traces'' is \emph{empty}.   However, not every behaviour of a violating system would be part of this set of violating traces  and, in such cases, the respective enforcer should be required to leave the generated behaviour unaffected.

\begin{definition}[Violating-Trace Semantics]
  \label[definition]{def:violation-trace-sem}
	A logic $\LSet$  with an interpretation over systems $\hSemS{-}:\LSet \mapsto \pset{\Sys}$ has a violating-trace semantics whenever it has a secondary interpretation $\hSemS{-}_v:\LSet \mapsto \pset{\Sys\times\Act^\ast}$ satisfying the following conditions for all $\hV\in\LSet$:
\begin{enumerate}
	\item $(\pV,\tr) \in \hSemS{\hV}_v$ implies $\pV \notin \hSemS{\hV}$ and $\pV \wtraS{\tr}$\;,
	\item $\pV \notin \hSemS{\hV}$ implies $\exists \tr \cdot (\pV,\tr) \in \hSemS{\hV}_v$\;.  \qed
\end{enumerate}
\end{definition}

We adapt the work in \cite{FraSey2015} to give \SHML  a violating-trace semantics.  Intuitively,  the judgement $(\pV,\tr) \in \hSemS{\hV}_v$
according to \Cref{def:violation-trace-sem-shml} below, denotes the fact that \pV violates the \SHML property \hV along trace \tr.

\begin{definition}[Alternative Semantics for \SHML \cite{FraSey2015}]
	 \label[definition]{def:violation-trace-sem-shml}
	 The forcing relation $\vsatL \subseteq \bigl(\Sys\times\Act^\ast\times\SHML\bigr)$ is the least relation satisfying the following rules:
	 \begin{align*}
 		 (\pV,\epsilon,\hFls)\in\R&
		&& \quad \text{ always }\\
 		(\pV,\tr,\hBigAndU{i\in\IndSet}\hV_i)\in\R&
		&& \quad \text{ if } \exists j\in\IndSet \text{ such that } (\pV,\tr,\hV_j)\in\R  \\
 		 (\pV,\acta\tr,\hNec{\actSN{\pate}{\bV}}\hV)\in\R&
		&& \quad \text{ if } \mtch{\pate}{\acta}{=}\sV, \ceval{\bV\sV}{\boolT} \text{ and } \pV\wtraS{\acta}\pV' \text{ and } (\pV',\tr,\hV\sV)\in\R \\
 		 (\pV,\tr,\hMaxXF)\in\R&
		&& \quad \text{ if } (\pV,\tr,\hV\sub{\hMaxXF}{\hVarX})\in\R\;.
 	\end{align*}
	We write \vsat{\pV}{\tr}{\hV} (or $(\pV,\tr) \in \hSemS{\hV}_v$) in lieu of $(\pV,\tr,\hV) \in\, \vsatL$.
	We say that trace \tr is a \emph{violating trace} for \pV with respect to \hV whenever \vsat{\pV}{\tr}{\hV}.
	Dually, \tr is a \emph{non-violating trace} for \hV whenever there does \emph{not} exist a system \pV such that \vsat{\pV}{\tr}{\hV}. \qed
\end{definition}

\begin{example} \label[example]{ex:alternative-trans}
	Recall $\hV_1, \pVb$ from \Cref{ex:uhml-formula} where $\hV_1 \in \SHML$, and also $\eVss$ from \Cref{ex:sound-enf} where we argued in \Cref{ex:synthesis} that $\eSem{\hV_1} = \eVss$ (modulo cosmetic optimisations).  Even though $\pVb \not\in \hSemS{\hV_1}$,
	not all of its exhibited behaviours constitute violating traces: for instance, $\pVb \wtraS{\actReq\cdot\actAns}\pVb$ is not a violating trace according to \Cref{def:violation-trace-sem-shml}.  Correspondingly, we also have $\eI{\eVss}{\pVb} \wtraS{\actReq\cdot\actAns} \eI{\eVss}{\pVb}$. \qed
\end{example}

\begin{theorem}[Adapted and extended from \cite{FraSey2015}]
	The alternative interpretation $\hSemS{-}_v$ of \Cref{def:violation-trace-sem-shml} is a violating-trace semantics for \SHML (with $\hSemS{-}$ from \Cref{fig:recHML}) in the sense of \Cref{def:violation-trace-sem}. \qed
\end{theorem}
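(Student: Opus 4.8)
The plan is to prove the two closure conditions of \Cref{def:violation-trace-sem} separately; together they amount to the characterisation $\pV\in\hSemS{\hV}$ iff $\forall\tr.\ \nvsat{\pV}{\tr}{\hV}$, with the additional reachability clause $\pV\wtraS{\tr}$ recorded along the way. Since the forcing relation $\vsatL$ is defined as a \emph{least} relation, its derivations are finite; this finiteness is what will match ``finite violating traces'' to the coinductive (safety) reading of $\hSemS{-}$.

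\textbf{Condition 1.} I would show that $(\pV,\tr)\in\hSemS{\hV}_v$ implies both $\pV\wtraS{\tr}$ and $\pV\notin\hSemS{\hV}$ by rule induction on the derivation of $\vsat{\pV}{\tr}{\hV}$. The clause $\pV\wtraS{\tr}$ is immediate in each case: the $\hFls$ rule yields the empty trace; the conjunction and greatest-fixpoint rules retain the same $\pV$ and $\tr$; and the necessity rule prepends the witnessing weak transition $\pV\wtraS{\acta}\pV'$ to the trace obtained from its premise. For non-membership, the $\hFls$ case uses $\hSemS{\hFls}=\emptyset$; the conjunction case uses that the inductive hypothesis gives $\pV\notin\hSemS{\hV_j}$ for the chosen $j$, hence $\pV\notin\bigcap_{i}\hSemS{\hV_i}$; the necessity case observes that the concrete witness $\acta,\pV',\sV$ (with $\mtch{\pate}{\acta}{=}\sV$, $\ceval{\bV\sV}{\boolT}$ and, by the inductive hypothesis, $\pV'\notin\hSemS{\hV\sV}$) falsifies the universal quantifier in the semantics of $\hNec{\actSN{\pate}{\bV}}{\hV}$; and the fixpoint case transfers the inductive hypothesis through the unfolding identity $\hSemS{\hMaxXF}=\hSemS{\hV\sub{\hMaxXF}{\hVarX}}$. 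This direction is routine.

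\textbf{Condition 2.} I would argue by contraposition, establishing the inclusion $\textsf{NV}(\hV)\subseteq\hSemS{\hV}$, where $\textsf{NV}(\hVV)\defeq\{\pV \mid \forall\tr.\ \nvsat{\pV}{\tr}{\hVV}\}$ is the set of systems with no violating trace for $\hVV$; this immediately yields ``$\pV\notin\hSemS{\hV}$ implies some $\tr$ with $(\pV,\tr)\in\hSemS{\hV}_v$''. The truth, falsehood, conjunction and necessity cases go through by structural induction: for conjunctions, $\nvsat{\pV}{\tr}{\hBigAndU{i\in\IndSet}{\hV_i}}$ for all $\tr$ forces $\pV\in\textsf{NV}(\hV_j)$ for every $j$, so the inductive hypothesis places $\pV$ in each $\hSemS{\hV_j}$ and hence in the intersection; for necessities, one fixes an arbitrary matching derivative $\pV\wtraS{\acta}\pV'$ and uses the forcing necessity rule contrapositively to conclude $\pV'\in\textsf{NV}(\hVV\sV)$, whence the inductive hypothesis gives $\pV'\in\hSemS{\hVV\sV}$, exactly what the semantics demands. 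Since data substitutions $\sV$ preserve formula size, these cases respect the structural measure.

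The main obstacle is the greatest-fixpoint case, where the unfolding $\hV\sub{\hMaxXF}{\hVarX}$ is syntactically larger and plain structural induction breaks down; here I would switch to coinduction. Using that $\hSemS{\hMaxXF}$ is the greatest post-fixpoint $\bigcup\Set{S\;|\;S\subseteq\hSemS{\hV,\rho[\hVarX\mapsto S]}}$, it suffices to exhibit $\textsf{NV}(\hMaxXF)$ as a post-fixpoint, i.e.\ to prove $\textsf{NV}(\hMaxXF)\subseteq\hSemS{\hV,\rho[\hVarX\mapsto\textsf{NV}(\hMaxXF)]}$. I would derive this from an inner induction on the body $\hV$, generalised over all subformulas $\hVVV$ to $\textsf{NV}(\hVVV\sub{\hMaxXF}{\hVarX})\subseteq\hSemS{\hVVV,\rho[\hVarX\mapsto\textsf{NV}(\hMaxXF)]}$; the decisive base case is $\hVVV{=}\hVarX$, where the fixpoint forcing rule gives $\textsf{NV}(\hMaxXF)=\textsf{NV}(\hV\sub{\hMaxXF}{\hVarX})$ and the valuation returns precisely $\textsf{NV}(\hMaxXF)$, closing the coinductive loop. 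Nested and multiple fixpoints are handled by interleaving this coinductive step with the outer structural induction (equivalently, by a single well-founded induction in which a fixpoint variable is measured only after all its enclosing binders have been discharged). The subtle point throughout is keeping the inductively defined forcing relation aligned with the coinductively defined greatest-fixpoint semantics, which is exactly the statement that ``no finite bad prefix'' is equivalent to membership in the safety property; this alignment is the crux of the argument.
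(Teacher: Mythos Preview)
The paper does not actually supply a proof for this theorem: it is stated with a trailing \qed and attributed to \cite{FraSey2015}, so there is no in-paper argument to compare against. Your proposal is therefore being assessed on its own merits.

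Your treatment of Condition~1 (rule induction on $\vsatL$) is correct and routine. Your treatment of Condition~2 is also sound in outline: proving $\textsf{NV}(\hV)\subseteq\hSemS{\hV}$ by structural induction, with a coinductive step at fixpoints showing that $\textsf{NV}(\hMaxXF)$ is a post-fixpoint of the functional, is the right idea. The one place that is under-specified is the nested-fixpoint case: your inner lemma $\textsf{NV}(\hVVV\sub{\hMaxXF}{\hVarX})\subseteq\hSemS{\hVVV,\rho[\hVarX\mapsto\textsf{NV}(\hMaxXF)]}$ tacitly assumes $\hVarX$ is the only free logical variable in $\hVVV$, and the ``interleaving'' you gesture at for multiple binders needs a proper generalisation over environments $\rho$ mapping each bound variable $\hVarY$ to $\textsf{NV}(\hMax{\hVarY}{\ldots})$. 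This can be made to work, but it is fiddly.

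There is a considerably shorter route using machinery the paper already sets up in its appendix. The coinductive satisfaction relation $\hSat$ of \Cref{fig:uhml-sat} is, by the cited result of Aceto \etal, equivalent to $\hSemS{-}$. To prove Condition~2 it therefore suffices to show that $\R\defeq\{(\pV,\hV)\mid \forall\tr.\ \nvsat{\pV}{\tr}{\hV}\}$ satisfies the five closure implications of \Cref{fig:uhml-sat}, hence $\R\subseteq{\hSat}$. Each implication is a one-line contrapositive using the corresponding $\vsatL$ rule (e.g.\ if $(\pV,\hMaxXF)\in\R$ but $\vsat{\pV}{\tr}{\hV\sub{\hMaxXF}{\hVarX}}$ for some $\tr$, then the fixpoint rule of $\vsatL$ gives $\vsat{\pV}{\tr}{\hMaxXF}$, a contradiction). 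This collapses your explicit post-fixpoint argument and the nested-binder bookkeeping into a single coinduction that the paper has already justified.
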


Equipped with \Cref{def:violation-trace-sem-shml} we can define an alternative definition for transparency that concerns itself with preserving exhibited traces that are non-violating.  We can then show that the monitor synthesis for \SHML of \Cref{def:synthesis} observes non-violating trace transparency.



\begin{definition}[Non-Violating Trace Transparency] \label[definition]{def:vtenf}
	An enforcer \eV is \emph{transparent} with respect to the non-violating traces of a formula \hV, denoted as \nvtenfdef{\eV}{\hV}, iff  for \emph{all} $\pV\in\Sys$ and $\tr\in\Act^{*}$, when $\nvsat{\pV}{\tr}{\hV}$ then
	\begin{itemize}
		\item $\pV\wtraS{\tr}\pV' \imp \eI{\eV}{\pV}\wtraS{\tr}\eI{\eV'}{\pV'}$ for some $\eV'$, and
		\item $\eI{\eV}{\pV}\wtraS{\tr}\eI{\eV'}{\pV'}  \imp  \pV\wtraS{\tr}\pV'$
		. \qed
	\end{itemize} 
\end{definition}

\begin{proposition}[Non-Violating Trace Transparency] \label[proposition]{lemma:trace-transparency}
	  For all $\hV\in \SHML$, $\pV\in\Sys$ and $\tr\in\Act^{*}$, when $\nvsat{\pV}{\tr}{\hV}$ then
	  \begin{itemize}
	  	\item $\pV\wtraS{\tr}\pV' \imp \eI{\eSem{\hV}{}}{\pV}\wtraS{\tr}\eI{\eV'}{\pV'}$, and
	  	\item $\eI{\eSem{\hV}{}}{\pV}\wtraS{\tr}\eI{\eV'}{\pV'}  \imp  \pV\wtraS{\tr}\pV'$. \qed
	  \end{itemize}
\end{proposition}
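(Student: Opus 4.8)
The plan is to reduce the claim to the normal form and then establish a single run-correspondence between $\pV$ and $\eI{\eSem{\hV}}{\pV}$ that is valid as long as no violation is triggered. First I would invoke \Cref{thm:norm-equivalence}, so that it suffices to treat $\hV\in\SHMLnf$ and the synthesis of \Cref{def:synthesis} directly (reading $\eSem{\hV}$ as $\eSem{\nSem{\hV}}$ for a general $\SHML$ formula). Here one must check that the normalisation of \Cref{thm:norm-equivalence} preserves not only the denotational semantics but also the forcing judgement $\vsatL$ of \Cref{def:violation-trace-sem-shml}, so that the hypothesis $\nvsat{\pV}{\tr}{\hV}$ carries over to the normalised formula; this is a stage-by-stage check mirroring the proof of \Cref{thm:norm-equivalence}.

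The technical heart is a characterisation of the behaviour of $\eSem{\hV}$. I would prove that, after unfolding the outermost fixpoints so that $\hV$ presents as a conjunction $\hBigAndU{i\in\IndSet}\hNec{\actSN{\pate_i}{\bV_i}}{\hV_i}$, the monitor transforms a system action $\acta$ into $\actt$ (a suppression) exactly when $\acta$ matches a branch with $\hV_i{=}\hFls$; when $\acta$ matches a branch with $\hV_i{\neq}\hFls$ it relays $\acta$ unchanged through the identity-transformation prefix $\actSTN{\pate_i}{\bV_i}{\underline{\pate_i}}$ (rule \rtit{eTrn} instrumented by \rtit{iTrn}); and when no branch matches it yields via \rtit{iTer} to $\eIden$. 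The two normal-form side conditions make this deterministic in $\acta$ and, crucially, let me match the suppressing case with the forcing rules: combining the $\hFls$, conjunction, necessity and fixpoint clauses of \Cref{def:violation-trace-sem-shml} shows that $\eSem{\hV}$ suppresses $\acta$ at a state reached after $\ftr$ if and only if $\ftr\acta$ is a violating trace for $\pV$ with respect to $\hV$. I would package this as an invariant: along a run whose visible trace has no violating prefix, $\eSem{\hV}$ never suppresses, and after a prefix $\ftr$ its residual equals $\eSem{\hVV}$ where $\hVV$ is the derivative of $\hV$ obtained by following the matching necessities under $\ftr$ (or $\eIden$ once \rtit{iTer} has fired).

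With this invariant I would prove \Cref{lemma:trace-transparency} by induction on $\vert\tr\vert$, strengthened to carry the residual-monitor correspondence. The base case $\tr{=}\epsilon$ follows from \rtit{iAsy}; in the backward direction the hypothesis rules out any suppression, since a single suppressed action would be a violating trace of length one. For $\tr{=}\acta\tr'$, the hypothesis applied to the prefix $\acta$ forces the first visible step to be an identity \rtit{iTrn} step (or an \rtit{iTer} step), so $\pV\wtraS{\acta}\pV_1$ and $\eI{\eSem{\hV}}{\pV}\wtraS{\acta}\eI{\eSem{\hVV}}{\pV_1}$ are in bijection; I then apply the induction hypothesis to $\tr'$, $\pV_1$ and the derivative $\hVV$, reinstating interspersed silent moves through \rtit{iAsy}. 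The forward direction lifts each system step to the monitored system, whereas the backward direction projects each monitored visible step onto a system step.

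The step I expect to be the main obstacle is precisely the backward direction: I must exclude that a monitored run realising the visible trace $\tr$ silently drops some system action, which would make the underlying system trace strictly longer than $\tr$. This is where the tight correspondence of the characterisation lemma (a suppression occurs if and only if the prefix just consumed is a violating trace) and the determinism supplied by the disjointness condition of the normal form are indispensable. The second delicate point is the bookkeeping of the derivative/residual correspondence across fixpoint unfoldings, where $\eSem{\hMaxXF}{=}\rec{\rV}{\eSem{\hV}}$ must be reconciled with the $\hMaxXF$ clause of $\vsatL$; I would discharge this with an auxiliary substitution lemma relating $\eSem{\hV\sub{\hMaxXF}{\hVarX}}$ to the one-step unfolding of $\eSem{\hMaxXF}$.
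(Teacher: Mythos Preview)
Your proposal is essentially the same approach as the paper's proof. The paper also proceeds by induction on the length of $\tr$, introduces an explicit \emph{derivative} function $\afterF{\hV}{\acta}$ computing the residual formula after consuming $\acta$, and factors the argument into four auxiliary lemmas: one showing that $\nvsat{\pV'}{\tr}{\afterF{\hV}{\acta}}$ whenever $\nvsat{\pV}{\acta\tr}{\hV}$ and $\pV\wtraS{\acta}\pV'$; one showing the forward step $\eI{\eSem{\hV}}{\pV}\wtraS{\acta}\eI{\eSem{\afterF{\hV}{\acta}}}{\pV'}$; and two lemmas for the backward direction handling $\actt$-steps (ruling out suppressions on non-violating prefixes, exactly your ``main obstacle'') and visible steps respectively. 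Your single ``invariant'' and ``characterisation lemma'' package the same content as these four lemmas, and your derivative $\hVV$ is precisely $\afterF{\hV}{\acta}$.

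One point worth noting: the paper's appendix actually carries out the proof only for $\hV\in\SHMLnf$, whereas the proposition as stated quantifies over all of $\SHML$. Your explicit reduction via \Cref{thm:norm-equivalence}, together with the observation that one must check $\vsatL$ is preserved by normalisation, is a genuine addition that the paper glosses over.
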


We can thus obtain a new definition for ``\eV \emph{enforces} \hV'' instead of \Cref{def:enforcement} by requiring sound enforcement, \Cref{def:tenf}, and non-violating trace transparency, \Cref{def:vtenf} (instead of the transparent enforcement of \Cref{def:tenf}).  This in turn gives us a new definition for enforceability for a logic, akin to \Cref{def:enforceability}. Using \Cref{lemma:soundness,lemma:trace-transparency}, one can show that \SHML is also enforceable with respect to the new definition as well.

\section{Conclusion} \label{sec:conc}

This paper presents a preliminary investigation of the enforceability of properties expressed in a process logic.
We have focussed on a highly expressive and standard logic, \recHML,
 and studied the ability to enforce \recHML properties via a specific kind of monitor that performs suppression-based enforcement.
We concluded that \SHML, identified in earlier work as a maximally expressive safety fragment of \recHML, is also an enforceable logic.
To  show this, we first defined enforceability for logics and system descriptions interpreted over labelled transition systems.
Although enforceability builds upon soundness and transparency requirements that have been considered in other work, our branching-time framework allowed us to consider novel definitions for these requirements.
We also contend that the definitions that we develop for the enforcement framework are fairly modular:
\eg the instrumentation relation is independent of the specific language constructs defining our transducer monitors and it functions as expected as long as the transition semantics of the transducer and the system are in agreement.
Based on this notion of enforcement, we devise a two-phase procedure to synthesise correct enforcement monitors.
We first identify a syntactic subset of our target logic \SHML that affords  certain structural properties and permits a compositional definition of the synthesis function.
We then show that, by augmenting existing rewriting techniques to our setting, we can convert any \SHML formula into this syntactic subset.


\paragraph*{Related Work}



In his seminal work \cite{schneider2000}, Schneider regards a property (in a linear-time setting) to be enforceable if its \emph{violation} can be \emph{detected} by a \emph{truncation automaton}, and prevents its occurrence via system termination; by preventing misbehaviour, these enforcers can only enforce safety properties.  Ligatti \etal in \cite{Ligatti2005} extended this work via \emph{edit automata}---an enforcement mechanism capable of \emph{suppressing} and \emph{inserting} system actions. A property is thus enforceable if it can be expressed as an edit automaton that \emph{transforms} invalid executions into valid ones via suppressions and insertions.
Edit automata are capable of enforcing instances of safety and liveness properties, along with other properties such as infinite renewal properties \cite{Ligatti2005,Bielova2011PhD}. 
As a means to assess the correctness of these automata, the authors introduced \emph{soundness} and \emph{transparency}.
%
%
In both of these settings, there is no clear separation between the specification and the enforcement mechanism, and properties are encoded in terms of the languages accepted by the enforcement model itself, \ie as edit/truncation automata. By contrast, we keep the specification and verification aspects of the logic separate.

Bielova \etal \cite{Bielova2011PhD,Bielova2011Predictability} remark that soundness and transparency do not specify to what extent  a transducer should modify an invalid execution.
They thus introduce a \emph{predictability} criterion to prevent transducers from transforming invalid executions arbitrarily.
More concretely, a transducer is \emph{predictable} if one can predict the number of transformations that it will apply in order to transform an invalid execution into a valid one, thereby preventing enforcers from applying unnecessary transformations over an invalid execution.
Using this notion, Bielova \etal thus devise a more stringent notion of enforceability.
Although we do not explore this avenue, \Cref{def:vtenf} may be viewed as an attempt to constrain transformations of violating systems in a branching-time setup, and should be complementary to these predictability requirements.
%

K{\"o}nighofer \etal in \cite{Konighofer2017} present a synthesis algorithm that produces action replacement transducers called \emph{shields} from safety properties encoded as automata-based specifications. Shields analyse the inputs and outputs of a reactive systems and enforce properties by modifying the least amount of output actions whenever the system deviates from the specified behaviour. By definition, shields should adhere to two desired properties, namely correctness and minimum deviation which are, in some sense, analogous to soundness and transparency respectively. 
Falcone \etal in \cite{Falcone2010,Falcone2011,Falcone2012}, also propose synthesis procedures to translate properties $-$ expressed as Streett automata $-$ into the \resp enforcers.
%
%
The authors show that most of the property classes defined within the \emph{Safety-Progress hierarchy} \cite{Pnueli1990} are enforceable, as they can be encoded as Streett automata and subsequently converted into enforcement automata.
As opposed to Ligatti \etal, both K{\"o}nighofer \etal and Falcone \etal separate the specification of the property from the enforcement mechanism, but unlike our work they do not study the enforceability of a branching time logic.
%
%
%
%

To the best of our knowledge, the only other work that tackles enforceability for the modal $\mu$-calculus \cite{Kozen1983MuCalc} (a reformulation of \recHML) is that of Martinelli \etal in \cite{martinelli2006,martinelli2007}.
Their approach is, however, different from ours.
In addition to the $\mu$-calculus formula to enforce, their synthesis function also takes a ``witness'' system satisfying the formula as a parameter.
This witness system is then used as the behaviour that is mimicked by the instrumentation via suppression, insertion or replacement mechanisms.
Although the authors do not explore automated correctness criteria such as the ones we study in this work, it would be interesting to explore the applicability of our methods to their setting.

%
%

Bocchi \etal \cite{Bocchi2017} adopt \emph{multi-party session types} to project the global protocol specifications of distributed networks to \emph{local types} defining a local protocol for every process in the network that are then either verified statically via typechecking or enforced dynamically via suppression monitors.
%
%
To implement this enforcement strategy, the authors define a dynamic monitoring semantics for the local types that suppress process interactions so as to conform to the assigned local specification.
They prove local soundness and transparency for monitored processes that, in turn, imply global soundness and transparency by construction.
%
Their local enforcement is closely related to the suppression enforcement studied in our work with the following key differences:
 $(i)$ well-formed branches in a session type are, by construction, \emph{explicitly disjoint} via the use of distinct choice labels (\ie similar to our normalised subset \SHMLnf), whereas we can synthesise enforcers for \emph{every} \SHML formula using a normalisation procedure;
 $(ii)$ they give an LTS semantics to their local specifications (which are session types) which allows them to state that a process satisfies a specification when its behaviour is bisimilar to the operational semantics of the local specification---we do not change the semantics of our formulas, which is left in its original denotational form;
$(iii)$ they do not provide transparency guarantees for processes that violate a specification, along the lines of \Cref{def:vtenf};
%
$(iv)$ Our monitor descriptions sit at a lower level of abstraction than theirs using a dedicated language, whereas theirs have a session-type syntax with an LTS semantics (\eg repeated suppressions have to be encoded in our case using the recursion construct while this is handled by their high-level instrumentation semantics).


In \cite{Castellani2016}, Castellani \etal  adopt session types to define reading and writing privileges amongst processes in a network as global types for information flow purposes.
These global types are projected into local monitors capable of preventing read and write violations by adapting certain aspects of the network.
Although their work is pitched towards adaptation \cite{Cassar2017RV,Cassar2016IFM}, rather than enforcement, in certain instances they adapt the network by suppressing messages or by replacing messages with messages carrying a default nonce value.
It would be worthwhile investigating whether our monitor correctness criteria could be adapted or extended to this information-flow setting.

\paragraph*{Future Work}
We plan to extend this work along two different avenues.  On the one hand, we will attempt to extend the enforceable fragment of  \recHML.
For a start, we intend to investigate maximality results for suppression monitors, along the lines of \cite{Francalanza2017FMSD,Achilleos2018FSTTCS}.
We also plan to consider more expressive enforcement mechanisms such as insertion and replacement actions.
Finally, we will also investigate more elaborate instrumentation setups, such as the ones explored in \cite{Achilleos2018Fossacs}, that can reveal refusals in addition to the actions performed by the system.

On the other hand, we also plan to study the implementability and feasibility of our framework.
We will consider target languages for our monitor descriptions that are closer to an actual implementation (\eg an actor-based language along the lines of \cite{FraSey2015}).
We could then employ refinement analysis techniques and use our existing monitor descriptions as the abstract specifications that are refined by the concrete monitor descriptions.
The more concrete synthesis can then be used for the construction of tools that are more amenable towards showing correctness guarantees.





\bibliography{refs}

\begin{thebibliography}{10}

\bibitem{Achilleos2018Fossacs}
Luca Aceto, Antonis Achilleos, Adrian Francalanza, and Anna
  Ing{\'o}lfsd{\'o}ttir.
\newblock A framework for parameterized monitorability.
\newblock In {\em Foundations of Software Science and Computation Structures},
  pages 203--220, Cham, 2018. Springer International Publishing.

\bibitem{Achilleos2018FSTTCS}
Luca Aceto, Antonis Achilleos, Adrian Francalanza, and Anna
  Ing{\'o}lfsd{\'o}ttir.
\newblock {Monitoring for silent actions}.
\newblock In Satya Lokam and R.~Ramanujam, editors, {\em {FSTTCS 2017:
  Foundations of Software Technology and Theoretical Computer Science}},
  volume~93 of {\em LIPIcs}, pages 7:1--7:14, Dagstuhl, Germany, 2018. Schloss
  Dagstuhl--Leibniz-Zentrum fuer Informatik.

\bibitem{Aceto1999TestingHML}
Luca Aceto and Anna Ing{\'o}lfsd{\'o}ttir.
\newblock Testing hennessy-milner logic with recursion.
\newblock In Wolfgang Thomas, editor, {\em Foundations of Software Science and
  Computation Structures}, pages 41--55, Berlin, Heidelberg, 1999. Springer
  Berlin Heidelberg.

\bibitem{Aceto2007Book}
Luca Aceto, Anna Ing\'{o}lfsd\'{o}ttir, Kim~Guldstrand Larsen, and Jiri Srba.
\newblock {\em Reactive Systems: Modelling, Specification and Verification}.
\newblock Cambridge University Press, New York, NY, USA, 2007.

\bibitem{Alur:2011:Transducers}
Rajeev Alur and Pavol \v{C}ern\'{y}.
\newblock Streaming transducers for algorithmic verification of single-pass
  list-processing programs.
\newblock In {\em Proceedings of the 38th Annual ACM SIGPLAN-SIGACT Symposium
  on Principles of Programming Languages}, pages 599--610. ACM, 2011.

\bibitem{Andersen1995}
Henrik~Reif Andersen.
\newblock Partial model checking.
\newblock In {\em Proceedings of Tenth Annual IEEE Symposium on Logic in
  Computer Science}, pages 398--407. IEEE, 1995.

\bibitem{RV:Test:2005}
Cyrille Artho, Howard Barringer, Allen Goldberg, Klaus Havelund, Sarfraz
  Khurshid, Michael~R. Lowry, Corina~S. Pasareanu, Grigore Rosu, Koushik Sen,
  Willem Visser, and Richard Washington.
\newblock Combining test case generation and runtime verification.
\newblock {\em Theoretical Computer Science}, 336(2-3):209--234, 2005.

\bibitem{Cassar2017Betty}
Duncan~Paul Attard, Ian Cassar, Adrian Francalanza, Luca Aceto, and Anna
  Ingolfsdottir.
\newblock {\em A Runtime Monitoring Tool for Actor-Based Systems.}, chapter~3,
  pages 49--74.
\newblock River Publishers, 2017.

\bibitem{Attard2016}
Duncan~Paul Attard and Adrian Francalanza.
\newblock A monitoring tool for a branching-time logic.
\newblock In {\em Runtime Verification}, pages 473--481, Cham, 2016. Springer
  International Publishing.

\bibitem{Banieqbal1898}
Behnam Banieqbal and Howard Barringer.
\newblock Temporal logic with fixed points.
\newblock In B.~Banieqbal, H.~Barringer, and A.~Pnueli, editors, {\em Temporal
  Logic in Specification}, pages 62--74, Berlin, Heidelberg, 1989. Springer
  Berlin Heidelberg.

\bibitem{Berstel:79:Transducer}
Jean Berstel and Luc Boasson.
\newblock Transductions and context-free languages.
\newblock {\em Ed. Teubner}, pages 1--278, 1979.

\bibitem{Bielova2011PhD}
Nataliia Bielova.
\newblock {\em A theory of constructive and predictable runtime enforcement
  mechanisms}.
\newblock PhD thesis, University of Trento, 2011.

\bibitem{Bielova2011Predictability}
Nataliia Bielova and Fabio Massacci.
\newblock Predictability of enforcement.
\newblock In {\em International Symposium on Engineering Secure Software and
  Systems}, pages 73--86. Springer, 2011.

\bibitem{Bocchi2017}
Laura Bocchi, Tzu-Chun Chen, Romain Demangeon, Kohei Honda, and Nobuko Yoshida.
\newblock Monitoring networks through multiparty session types.
\newblock {\em Theoretical Computer Science}, 669:33 -- 58, 2017.

\bibitem{Cassar2016IFM}
Ian Cassar and Adrian Francalanza.
\newblock On implementing a monitor-oriented programming framework for actor
  systems.
\newblock In {\em International Conference on Integrated Formal Methods}, pages
  176--192. Springer, 2016.

\bibitem{Castellani2016}
Ilaria Castellani, Mariangiola Dezani-Ciancaglini, and Jorge~A. P{\'e}rez.
\newblock Self-adaptation and secure information flow in multiparty
  communications.
\newblock {\em Formal Aspects of Computing}, 28(4):669--696, July 2016.

\bibitem{chang1993}
Edward Chang, Zohar Manna, and Amir Pnueli.
\newblock The safety-progress classification.
\newblock In {\em Logic and Algebra of Specification}, pages 143--202.
  Springer, 1993.

\bibitem{FraCini2015}
Clare Cini and Adrian Francalanza.
\newblock An {LTL} proof system for runtime verification.
\newblock In {\em International Conference on Tools and Algorithms for the
  Construction and Analysis of Systems}, pages 581--595. Springer, 2015.

\bibitem{Clarke2008}
Edmund~M Clarke and E~Allen Emerson.
\newblock Design and synthesis of synchronization skeletons using branching
  time temporal logic.
\newblock In {\em 25 Years of Model Checking}, pages 196--215. Springer, 2008.

\bibitem{Desai2017}
Ankush Desai, Tommaso Dreossi, and Sanjit~A. Seshia.
\newblock Combining model checking and runtime verification for safe robotics.
\newblock In {\em {Runtime Verfication (RV)}}, {LNCS}, pages 172--189, Cham,
  2017. Springer International Publishing.

\bibitem{Falcone2010}
Yli{\`e}s Falcone.
\newblock You should better enforce than verify.
\newblock In {\em Runtime Verification}, pages 89--105. Springer Berlin
  Heidelberg, 2010.

\bibitem{Falcone2012}
Yli\`{e}s Falcone, Jean-Claude Fernandez, and Laurent Mounier.
\newblock What can you verify and enforce at runtime?
\newblock {\em International Journal on Software Tools for Technology
  Transfer}, 14(3):349, June 2012.

\bibitem{Falcone2011}
Yli\`{e}s Falcone, Laurent Mounier, Jean-Claude Fernandez, and Jean-Luc
  Richier.
\newblock Runtime enforcement monitors: composition, synthesis, and enforcement
  abilities.
\newblock {\em Formal Methods in System Design}, 38(3):223--262, June 2011.

\bibitem{francalanza2016theory}
Adrian Francalanza.
\newblock A {T}heory of {M}onitors.
\newblock In {\em International Conference on Foundations of Software Science
  and Computation Structures}, pages 145--161. Springer, 2016.

\bibitem{Fra17:Concur}
Adrian Francalanza.
\newblock {Consistently-Detecting Monitors}.
\newblock In {\em 28th International Conference on Concurrency Theory (CONCUR
  2017)}, volume~85 of {\em Leibniz International Proceedings in Informatics
  (LIPIcs)}, pages 8:1--8:19, Dagstuhl, Germany, 2017. Schloss
  Dagstuhl--Leibniz-Zentrum fuer Informatik.

\bibitem{Cassar2017RV}
Adrian Francalanza, Luca Aceto, Antonis Achilleos, Duncan~Paul Attard, Ian
  Cassar, Dario Della~Monica, and Anna Ing{\'o}lfsd{\'o}ttir.
\newblock A foundation for runtime monitoring.
\newblock In {\em Runtime Verification}, pages 8--29, Cham, 2017. Springer
  International Publishing.

\bibitem{Francalanza2017FMSD}
Adrian Francalanza, Luca Aceto, and Anna Ing{\'{o}}lfsd{\'{o}}ttir.
\newblock Monitorability for the {Hennessy-Milner} logic with recursion.
\newblock {\em Formal Methods in System Design}, 51(1):87--116, 2017.

\bibitem{FraSey2015}
Adrian Francalanza and Aldrin Seychell.
\newblock Synthesising correct concurrent runtime monitors.
\newblock {\em Formal Methods in System Design}, 46(3):226--261, 2015.

\bibitem{Jia2016Popl}
Limin Jia, Hannah Gommerstadt, and Frank Pfenning.
\newblock Monitors and blame assignment for higher-order session types.
\newblock In {\em Proceedings of the 43rd Annual ACM SIGPLAN-SIGACT Symposium
  on Principles of Programming Languages}, pages 582--594, New York, NY, USA,
  2016. ACM.

\bibitem{Kejstova2017}
Katar{\'i}na Kejstov{\'a}, Petr Ro{\v{c}}kai, and Ji{\v{r}}{\'i} Barnat.
\newblock {From Model Checking to Runtime Verification and Back}.
\newblock In {\em RV}. Springer, 2017.

\bibitem{Konighofer2017}
Bettina K{\"o}nighofer, Mohammed Alshiekh, Roderick Bloem, Laura Humphrey,
  Robert K{\"o}nighofer, Ufuk Topcu, and Chao Wang.
\newblock Shield synthesis.
\newblock {\em Formal Methods in System Design}, 51(2):332--361, Nov 2017.

\bibitem{Kozen1983MuCalc}
Dexter~C. Kozen.
\newblock Results on the propositional $\mu$-calculus.
\newblock {\em Theoretical Computer Science}, 27:333--354, 1983.

\bibitem{Lang2012TACAS}
Fr{\'e}d{\'e}ric Lang and Radu Mateescu.
\newblock Partial model checking using networks of labelled transition systems
  and boolean equation systems.
\newblock In Cormac Flanagan and Barbara K{\"o}nig, editors, {\em TACAS}, pages
  141--156, Berlin, Heidelberg, 2012. Springer Berlin Heidelberg.

\bibitem{Larsen1990}
Kim~G Larsen.
\newblock Proof systems for satisfiability in {Hennessy-Milner} logic with
  recursion.
\newblock {\em Theoretical Computer Science}, 72(2):265--288, 1990.

\bibitem{Ligatti2005}
Jay Ligatti, Lujo Bauer, and David Walker.
\newblock Edit automata: enforcement mechanisms for run-time security policies.
\newblock {\em International Journal of Information Security}, 4(1):2--16, Feb
  2005.

\bibitem{Ligatti2010}
Jay Ligatti and Srikar Reddy.
\newblock A theory of runtime enforcement, with results.
\newblock In {\em CESORICS}, pages 87--100, Berlin, Heidelberg, 2010. Springer
  Berlin Heidelberg.

\bibitem{Martinelli2005}
Fabio Martinelli and Ilaria Matteucci.
\newblock Partial model checking, process algebra operators and satisfiability
  procedures for (automatically) enforcing security properties.
\newblock In {\em Foundations of Computer Security}, pages 133--144. Citeseer,
  2005.

\bibitem{martinelli2006}
Fabio Martinelli and Ilaria Matteucci.
\newblock Through modeling to synthesis of security automata.
\newblock {\em Electronic Notes in Theoretical Computer Science}, 179:31--46,
  2006.

\bibitem{martinelli2007}
Fabio Martinelli and Ilaria Matteucci.
\newblock An approach for the specification, verification and synthesis of
  secure systems.
\newblock {\em Electronic Notes in Theoretical Computer Science}, 168:29--43,
  2007.

\bibitem{Milner1992CCS}
Robin Milner, Joachim Parrow, and David Walker.
\newblock A calculus of mobile processes, {I}.
\newblock {\em Information and computation}, 100(1):1--40, 1992.

\bibitem{Pnueli2006}
A.~Pnueli and A.~Zaks.
\newblock {PSL} model checking and run-time verification via testers.
\newblock In Jayadev Misra, Tobias Nipkow, and Emil Sekerinski, editors, {\em
  International Symposium on Formal Methods}, pages 573--586. Springer Berlin
  Heidelberg, 2006.

\bibitem{Pnueli1990}
Z~Manna~A Pnueli.
\newblock A hierarchy of temporal properties.
\newblock {\em Proc. of the 2th symph. ACM of principle of distributed
  computer}, 1990.

\bibitem{Rabinovich1993}
Alexander~Moshe Rabinovich.
\newblock A complete axiomatisation for trace congruence of finite state
  behaviors.
\newblock In {\em Proceedings of the 9th International Conference on
  Mathematical Foundations of Programming Semantics}, pages 530--543, London,
  UK, UK, 1994. Springer-Verlag.

\bibitem{Sakarovitch:2009:Transducers}
Jacques Sakarovitch.
\newblock {\em Elements of Automata Theory}.
\newblock Cambridge University Press, New York, NY, USA, 2009.

\bibitem{Sangiorgi2011Bisim}
Davide Sangiorgi.
\newblock {\em Introduction to Bisimulation and Coinduction}.
\newblock Cambridge University Press, New York, NY, USA, 2011.

\bibitem{schneider2000}
Fred~B Schneider.
\newblock Enforceable security policies.
\newblock {\em ACM Transactions on Information and System Security (TISSEC)},
  3(1):30--50, 2000.

\bibitem{Walukiewicz2000}
Igor Walukiewicz.
\newblock Completeness of {Kozen's} axiomatisation of the propositional
  {$\mu$}-calculus.
\newblock {\em Information and Computation}, 157(1):142--182, 2000.

\end{thebibliography}
\newpage

\appendix

\section{Proving Enforcement Correctness}
\label{sec:app:correctness-proofs}
In this section we present proofs ascertaining the correctness of our enforcers. We prove \Cref{thm:strong-enf}, by proving that the enforcers synthesised by our synthesis function are \emph{sound} and \emph{transparent}. We prove these two criteria in \Cref{sec:proof-soundness,sec:proof-transparency}. Finally, we prove that our synthesised enforcers also abide by \emph{non-violating trace transparency} in \Cref{sec:proof-trace-transparency}.

In order to facilitate our proofs we also use an alternative satisfaction semantics for \SHML as explained below.
\paragraph*{Alternative \SHML Semantics} An alternative semantics for \SHML was presented by Aceto \etal in \cite{Aceto1999TestingHML,Aceto2007Book} in terms of a \emph{satisfaction relation}, \hSat. When restricted to \SHML, \hSat is the \emph{largest relation} \R satisfying the implications defined in \Cref{fig:uhml-sat}. 
\begin{figure}[h]
	\begin{displaymath}
	\begin{array}{r@{\;\,}c@{\;\,}ll}
	(\pV,\hTru)&\in&\R & \imp \textsl{ true } \\[.5mm]
	(\pV,\hFls)&\in&\R & \imp \textsl{ false } \\[.5mm]
	(\pV,\hBigAnd{i\in\IndSet}\hV_i)&\in&\R & \imp (\pV,\hV_i)\in\R \textsl{ for all } i{\,\in\,}\IndSet \\[.5mm]
	(\pV,\hNec{\actS}\hV)&\in&\R &\imp (\forall\acta,\pVV\cdot\pV\wtraS{\acta}\pVV \textsl{ and } \mtchS{\actS}{\acta}=\sV)\,\imp\, (\pVV,\hV\sV)\in\R \\[.5mm]
	(\pV,\hMaxXF)&\in&\R & \imp (\pV,\hV\Sub{\hMaxXF}{\hVarX})\in\R \\[.5mm]
	\end{array}
	\end{displaymath}
	\caption{A Satisfaction relation for \SHML formulas}
	\label{fig:uhml-sat}
\end{figure}

The satisfaction relation states that truth, \hTru, is \emph{always satisfied}, while falsehood, \hFls, can \emph{never be satisfied}. Conjunctions, $\hBigAnd{i\in\IndSet}\hV_i$ are satisfied when \emph{all branches} are satisfied (\ie $\forall i{\,\in\,}\IndSet \text{ such that } \pV\hSat\hV_i$), while 
necessities, $\hNec{\actS}\hV$, are satisfied by a process \pV when \emph{all derivatives} \pVV that are reachable over an action \acta where $\mtchS{\actS}{\acta}\!=\!\sV$ (possibly none), also satisfy $\hV\sV$, \ie $\pVV{\,\hSat\,}\hV\sV$. 
Finally, a process \pV satisfies a maximal fixpoint \hMaxXF when it is also able to satisfy an \emph{unfolded version} of \hV, \ie $\pV\hSat\hV\sub{\hMaxX{\hV}}{\hVarX}$. 

The satisfaction semantics, $\pV\hSat\hV$, agrees with the denotational semantics of the \SHML subset of \recHML, \hSemS{\hV}, presented in \Cref{fig:recHML}, so that $\pV\hSat\hV$ can be used in lieu of $\pV\in\hSemS{\hV}$ (see \cite{Aceto1999TestingHML,Aceto2007Book} for more detail).

\subsection{Proving Soundness} \label{sec:proof-soundness}
	$$ \forall\pV{\,\in\,}\Sys,\hV\in\SHMLnf\;\cdot\;\hV{\in}\Sat \imp \eI{\eSem{\hV}}{\pV}{\,\hSat\,}\hV $$
	To prove this lemma we must show that relation \R (below) is a \emph{satisfaction relation} (\hSat) as defined by the rules in \Cref{fig:uhml-sat}. 
	$$ \R\;\defeq\;\setdef{(\eI{\eSem{\hV}}{\pV},\hV)}{\hV{\in}\Sat} $$
 \\[-10mm]
\setcounter{equation}{0}
\begin{proof} We prove this claim by case analysis on the structure of $\hV$.
	
	\begin{case}[\hV=\hVarX]Does not apply since $\hVarX$ is an open formula and thus $\hVarX\notin\Sat$. \end{case}
		
	\begin{case}[\hV=\hFls]Does not apply since $\hFls\notin\Sat$.\end{case}
	
	\begin{case}[\hV=\hTru]Holds trivially since \emph{every process} satisfies \hTru, which thus confirms that $(\eI{\eSem{\hTru}}{\pV},\hTru)\in\R$ according to the definition of \R. \end{case} 
	
	\begin{case}[\hV=\hMaxXF \text{ and } \hVarX{\in}\fv{\hV}] We assume that
		\begin{gather}
			\hMaxXF\in\Sat \label{proof:str-soundness-max-1}
		\end{gather}
		To prove that \R is a satisfaction relation, we show that if $(\eI{\eSem{\hMaxXF}}{\pV},\hMaxXF){\in}\R$, then from the recursive unfolding $\hVMaxXFSub$, we can also synthesise an enforcer $\eSem{\hVMaxXFSub}$ such that $ (\eI{\eSem{\hVMaxXFSub}}{\pV},\hVMaxXFSub)\in\R$ as well. Hence, by \eqref{proof:str-soundness-max-1} and the definition of \Sat we know that $\exists\pV'\cdot\pV'{\hSatS}\hMaxXF$, and so by the definition of \hSatS we can deduce that  $\exists\pV'\cdot\pV'{\hSatS}\hVMaxXFSub$, from which we can thus conclude
		\begin{gather}
			\hVMaxXFSub\in\Sat \label{proof:str-soundness-max-2}
		\end{gather}
		Finally, from \eqref{proof:str-soundness-max-2} and the definition of \R we conclude that
		\begin{gather*}
			(\eI{\eSem{\hVMaxXFSub}}{\pV},\hVMaxXFSub)\in\R 
		\end{gather*}
		as required, and so we are done.
	\end{case}
		
{	\newcommand{\formulaVar}[3]{\displaystyle\hBigAndD{#1{\,\in\,}#2\!\!\!\!\!\!\!\!}\ensuremath{\hNec{\actSN{\pate_{#1}}{\bV_{#1}}}{#3}}}	
	\newcommand{\formulaJ}{\ensuremath{\formulaVar{j}{\IndSet'}{\hFls}}}
	\newcommand{\formulaK}{\ensuremath{\formulaVar{k}{\IndSet''}{\hV_k}}}
	\newcommand{\formula}{\ensuremath{\formulaVar{h}{\IndSet}{\hV_h}}}		
	\newcommand{\formulaSplit}{\ensuremath{\hAnd{\formulaJ\;}{\;\formulaK}}}
	\newcommand{\branchID}{\prf{\actSID{\pate_i}{\bV_i}}{\eSem{\hV_i}}}		
	\newcommand{\branchSUP}{\prf{\actSTD{\pate_i}{\bV_i}}{\mV}}		
	
	\begin{case}[\hV=\formula \text{ and } \bigdistinct{h\in\IndSet}\actSN{\pate_h}{\bV_h}] In this case we will be segmenting the set of indices \IndSet into $\IndSet'$ and $\IndSet''$ such that $\IndSet'$ contains the indices (if any) of the branches where the continuation formula $\hV_i$ is a falsehood \hFls, while $\IndSet''$ contains the rest, and so we will be writing $\formulaSplit$ in lieu of $\formula$.
	We thus assume that 
	\begin{gather}
		\formulaSplit\in\Sat \label{proof:str-soundness-nec-1}
	\end{gather}
	From \eqref{proof:str-soundness-nec-1} and the definition of $\eSem{-}$ we have that
	\begin{gather}
		\eSem{\formulaSplit} = \rec{\rVV}{\Big(\ch{\chBig{j\in\IndSet'}\prf{\actSTD{\pate_j}{\bV_j}}{\rVV}}{\chBig{k\in\IndSet''}\prf{\actSID{\pate_k}{\bV_k}}{\eSem{\hV_k}} }\Big)}=\mV \label{proof:str-soundness-nec-2}
	\end{gather}
	By unfolding the recursive construct in \eqref{proof:str-soundness-nec-2} we have that
	\begin{gather}
		\eSem{\formulaSplit} =  \Big(\ch{\chBig{j\in\IndSet'}\prf{\actSTD{\pate_j}{\bV_j}}{\mV}}{\chBig{k\in\IndSet''}\prf{\actSID{\pate_k}{\bV_k}}{\eSem{\hV_k}} }\Big)
		 \label{proof:str-soundness-nec-2.5}
	\end{gather}
	In order to prove that \R is a satisfaction relation, for this case we must show that every individual branch in \eqref{proof:str-soundness-nec-2.5} is in \R as well. In order to show this we proceed by case analysis and show that the different types of branches that are synthesisable are also in \R. Hence, for all $i\in\IndSet$, we consider the following cases: \medskip	
	\begin{enumerate}[(i)]
		\item when {$\eSem{\hNec{\actSN{\pate_i}{\bV_i}}{\hFls}}{\;=\;}\branchSUP$}:
		In order to prove that this branch is in \R it suffices showing that for all \acta and \pVV, when $\eI{\prf{\actSTD{\pate_i}{\bV_i}}{\mV}}{\pV}\wtraS{\acta}\pVV$ such that $\mtchS{\actSN{\pate_i}{\bV_i}}{\acta}{\,=\,}\sV$ then $(\pVV,\hFls){\,\in\,}\R$. \smallskip
		
		This case holds trivially since by rules \rtit{iTrn} and \rtit{eTrn} we know that whenever \pV produces an action \acta such that symbolic action \actSN{\pate_i}{\bV_i} is satisfied, \ie $\mtchS{\actSN{\pate_i}{\bV_i}}{\acta}{\,=\,}\sV$,  the produced action \acta gets internally transformed into a silent (\actt) action, meaning that $\eI{\branchSUP}{\pV}\nwtraS{\acta}$, and so the modal necessities leading to a falsehood (\eg in this case \hNec{\actSN{\pate_i}{\bV_i}}{\hFls}) \emph{never get satisfied} by the monitored system. \medskip
		\item when {$\eSem{\hNec{\actSN{\pate_i}{\bV_i}}{\hV_i}}{\;=\;}\branchID$}: Once again in order to prove that this branch is in \R, we must show that for all \acta and \pVV, when $\eI{\branchID}{\pV}\wtraS{\acta}\pVV$ such that $\mtchS{\actSN{\pate_i}{\bV_i}}{\acta}{\,=\,}\sV$ then $(\pVV,\hV_i){\,\in\,}\R$. \smallskip
		
		In order to show this we assume that
		\begin{gather}
			\mtchS{\actSN{\pate_i}{\bV_i}}{\acta}=\sV \label{proof:str-soundness-nec-3} \\
			\eI{\branchID}{\pV}\wtraS{\acta}\pVV \label{proof:str-soundness-nec-4}
		\end{gather}
		By the definition of $\wtraS{\acta}$ we know that the weak transition in \eqref{proof:str-soundness-nec-4} is composed from 0 or more \actt-transitions followed by the \acta-transition as shown below
		\begin{gather}
			\eI{\branchID}{\pV}\wtraS{\actt}\pVV'\traS{\acta}\pVV \label{proof:str-soundness-nec-4.1}
		\end{gather}
		By the rules in our model we can infer that the \actt-transitions performed in \eqref{proof:str-soundness-nec-4.1} (if any) are only possible via multiple applications of rule \rtit{iAsy} which allows us to deduce
		\begin{gather}
			\pV\wtraS{\actt}\pV''	\label{proof:str-soundness-nec-4.2}\\
			(\pVV'=\eI{\branchID}{\pV''})\traS{\acta}\pVV  \label{proof:str-soundness-nec-4.3}
		\end{gather}
		Since we do not make any assumptions about the resultant enforced system \pVV, we must first infer its form so to be able to deduce whether $(\pVV,\hV_i\sV)\in\R$ or not. Since the reduction in \eqref{proof:str-soundness-nec-4.3} can be the result of two instrumentation rules, namely \rtit{iTer} and \rtit{iTrn}, we consider both cases separately.
		\begin{itemize}
		 	\item \lipicsHeader{\rtit{iTer}:} As we assume that \eqref{proof:str-soundness-nec-4.3} is the result of rule \rtit{iTer}, by this rule we thus have that $\eI{\branchID}{\pV''}\ntraS{\acta}$, which means that $\mtchS{\actSN{\pate_i}{\bV_i}}{\acta}=\sVundef$ which contradicts with assumption \eqref{proof:str-soundness-nec-3}, and hence this case does not apply.
		 	\item \lipicsHeader{\rtit{iTrn}:} By assuming that \eqref{proof:str-soundness-nec-4.3} is the result of rule \rtit{iTrn}, we thus know that 
		 	\begin{gather}
		 		\pV''\traS{\acta}\pV' \label{proof:str-soundness-nec-5} \\
		 		\pVV = \eI{\eSem{\hV_i\sV}}{\pV'} \label{proof:str-soundness-nec-6} \\
		 		\branchID\traS{\ioact{\acta}{\acta}}\eV' \label{proof:str-soundness-nec-7}
		 	\end{gather}
		 	Hence, from \eqref{proof:str-soundness-nec-6} we know that to prove this case we must show that $(\eI{\eSem{\hV_i\sV}}{\pV},\hV_i\sV){\,\in\,}\R$. We thus refer to our initial assumption \eqref{proof:str-soundness-nec-1} from which by the definition of \Sat we know that there exists some process \pVVV such that $\pVVV\hSatS\formula$. With the definition of \hSatS we thus know that 
		 	\begin{gather}
		 		\exists\pVVV{\,\in\,}\Sys,\forall h{\,\in\,}\IndSet,\pVVV'{\,\in\,}\Sys\cdot \textsl{if } \pVVV\wtraS{\acta}\pVVV' \textsl{ and } \mtchS{\actSN{\pate_h}{\bV_h}}{\acta}{=}\sV \textsl{ then } \pVVV'\hSatS\hV_h\sV \label{proof:str-soundness-nec-9}
		 	\end{gather} 
		 	Since from \eqref{proof:str-soundness-nec-4.2} and \eqref{proof:str-soundness-nec-5} we know that $\pV\wtraS{\acta}\pV'$, and so with the knowledge of \eqref{proof:str-soundness-nec-3}, from \eqref{proof:str-soundness-nec-9} we can thus infer that $\pV'\hSatS\hV_i\sV$ meaning that $\hV_i\sV\in\Sat$. This result allows us to deduce that by the definition of \R we conclude that 
		 	\begin{gather}
		 		(\eI{\eSem{\hV_i\sV}}{\pV'},\hV_i\sV)\in\R \label{proof:str-soundness-nec-10}
		 	\end{gather}
		 	as required. Hence, from assumptions \eqref{proof:str-soundness-nec-3}, \eqref{proof:str-soundness-nec-4} and deduction \eqref{proof:str-soundness-nec-10} we can infer that for $j\in\IndSet$ we know that
		 	\begin{gather*}
		 		(\eI{\branchID}{\pV},\hNec{\actSN{\pate_i}{\bV_i}}{\hV_i})\in\R
		 	\end{gather*} 
		 	as required, and we are done.
\end{itemize}
	\end{enumerate}
	\end{case} 	
}\vspace{-7mm}	
\end{proof} \smallskip
\subsection{Proving Transparency} \label{sec:proof-transparency}
\begin{rtp}
	$$\forall\pV{\,\in\,}\Sys,\hV{\,\in\,}\SHMLnf\cdot \pV{\,\hSatS\,}\hV \; \imp \; \pV\bisim\eI{\eSem{\hV}}{\pV} $$
	To prove this lemma we show that relation \R (below) is a \emph{strong bisimulation relation}.
	$$ \R\;\defeq\;\setdef{(\pV,\eI{\eSem{\hV}}{\pV})}{\pV{\,\hSatS\,}\hV} $$

\noindent Hence we must show that \R satisfies the following transfer properties for each $(\pV,\eI{\eSem{\hV}}{\pV}){\,\in\,}\R$:
	\begin{enumerate}[\quad(a)]
		\item if $\pV\traS{\actu}\pV'$ then $\eI{\eSem{\hV}}{\pV}\traS{\actu}S'$ and $(\pV',S')\in\R$
		\item if $\eI{\eSem{\hV}}{\pV}\traS{\actu}S'$ then $\pV\traS{\actu}\pV'$ and $(\pV',S')\in\R$
	\end{enumerate} 

	\noindent We prove $(a)$ and $(b)$ separately by assuming that $\pV{\,\hSatS\,}\hV$ in both cases as defined by relation \R and conduct these proofs under the assumption that all our formulas are \emph{guarded}, \ie every occurrence of a logical variable \hVarX is always preceded by a modal necessity. It is common knowledge that every \mucalc formula (a reformulation of \recHML) can converted into a semantically equivalent guarded formula of the same logic (see \cite{Banieqbal1898,Walukiewicz2000}). This allows us to conduct the proofs for both $(a)$ and $(b)$ by mathematical induction on the number of maximal fixed points declarations that occur at the \emph{topmost-level} as defined by the rules in \Cref{fig:top-level-max}. 
$\\[-5mm]\;$	
\end{rtp}

\begin{figure}
	$$
	\begin{array}{rcl}
		\lenMax{\hTru} \; = \; \lenMax{\hFls} \; = \; \lenMax{\hVarX} \; = \; \lenMax{\hBigAndD{i\in\IndSet}\hNec{\actS_i}\hV_i } &=& 0 \\
		\lenMax{\hMaxXF} &=& \lenMax{\hV}+1
	\end{array}
	$$
	\caption{The number of top level maximal fixed points.}
	\label{fig:top-level-max}
\end{figure}

\setcounter{equation}{0}
\begin{proof}[Proof for (a)] We proceed by mathematical induction of $\lenMax{\hV}$. 
	
	\begin{Cases}[\lenMax{\hFls}=\lenMax{\hVarX}=0]
		Both cases do not apply since $\nexists \pV\cdot \pV\hSatS\hFls$ and similarly since $\hVarX$ is an open-formula and so $\nexists \pV\cdot \pV\hSatS\hVarX$.
	\end{Cases}

	\begin{case}[\lenMax{\hTru}=0]
		We now assume that 
		\begin{gather}
			\pV\hSatS\hTru \label{proof:trans-tt-1}\\
			\pV\traS{\actu}\pV'  \label{proof:trans-tt-2}
		\end{gather}
		Since $\actu\in\set{\actt,\acta}$, we must consider both cases.
		\begin{itemize}
			\item \lipicsHeader{\actu=\actt:} Since \actu=\actt, we can apply rule \rtit{iAsy} on \eqref{proof:trans-tt-2} and get
				\begin{gather}
					\eI{\eSem{\hTru}}{\pV}\traS{\actt}\eI{\eSem{\hTru}}{\pV'} \label{proof:trans-tt-3}
				\end{gather}
				as required. Also, since we know that every process satisfies \hTru, we know that $\pV'\hSatS\hTru$, which by the definition of \R we conclude
				\begin{gather}
					(\pV',\eI{\eSem{\hTru}}{\pV'})\in\R \label{proof:trans-tt-4}
				\end{gather}
				as required. This means that this subcase is done by \eqref{proof:trans-tt-3} and \eqref{proof:trans-tt-4}.
			\item \lipicsHeader{\actu=\acta:} Since by rule \rtit{eId} we know that $\eIden\traS{\ioact{\acta}{\acta}}\eIden$, and since \actu=\acta, we can apply rule \rtit{iTrn} on \eqref{proof:trans-tt-2} and deduce
				\begin{gather}
					\eI{\eIden}{\pV}\traS{\acta}\eI{\eIden}{\pV'} \label{proof:trans-tt-5}
				\end{gather}
				Since \eSem{\hTru}=\eIden, we can refine \eqref{proof:trans-tt-5} as
				\begin{gather}
					\eI{\eSem{\hTru}}{\pV}\traS{\acta}\eI{\eSem{\hTru}}{\pV'} \label{proof:trans-tt-6}
				\end{gather}
				as required. Once again, since $\pV'\hSatS\hTru$, we can deduce
				\begin{gather}
					(\pV',\eI{\eSem{\hTru}}{\pV'})\in\R \label{proof:trans-tt-7}
				\end{gather}
				as required. This subcase is done by \eqref{proof:trans-tt-6} and \eqref{proof:trans-tt-7}.
		\end{itemize}
	\end{case}

	{\newcommand{\formula}{\hBigAnd{i\in\IndSet}\hNec{\actSN{\pate_i}{\bV_i}}\hV_i}
	\newcommand{\synForm}{\Big(\rec{\rVV}{\chBigI\begin{xbrace}{lc} \prf{\actSTD{\pate_i}{\bV_i}}{\rVV} &\quad (\text{if }\hV_i=\hFls) \\ \prf{\actSID{\pate_i}{\bV_i}}{\eSem{\hV_i}} &\quad (\text{otherwise}) \end{xbrace}}\Big)}
	\begin{case}[\lenMax{\formula}=0]
		Assume that 
		\begin{gather}
			\pV\hSatS\formula \label{proof:trans-nec-1}\\
			\pV\traS{\actu}\pV'  \label{proof:trans-nec-2}
		\end{gather}
		Since $\actu\in\set{\actt,\acta}$, we must consider both cases.
		\begin{itemize}
			\item \lipicsHeader{\actu=\actt:} Since \actu=\actt, we can apply rule \rtit{iAsy} on \eqref{proof:trans-nec-2} and obtain
				\begin{gather}
				\eI{\eSem{\formula}}{\pV}\traS{\actt}\eI{\eSem{\formula}}{\pV'} \label{proof:trans-nec-3}
				\end{gather}
				as required. Since \actu=\actt, and since we know that \SHML is \actt-closed (see Proposition 3.8 in \cite{Aceto1999TestingHML}), from \eqref{proof:trans-nec-1} and \eqref{proof:trans-nec-2}, we can deduce that $\pV'\hSatS\formula$, so that by the definition of \R we conclude 
				\begin{gather}
					(\pV',\eI{\eSem{\formula}}{\pV'})\in\R \label{proof:trans-nec-4}
				\end{gather}
				as required. This subcase is therefore done by \eqref{proof:trans-nec-3} and \eqref{proof:trans-nec-4}.
			\item \lipicsHeader{\actu=\acta:} Since $\actu=\acta$, from \eqref{proof:trans-nec-2} we know that
				\begin{gather}
					\pV\traS{\acta}\pV' \label{proof:trans-nec-5}
				\end{gather}
				Since the branches in our conjunction are all prefixed by disjoint symbolic actions, \ie $\bigdistinct{i\in\IndSet}\actSN{\pate_i}{\bV_i}$, we know that \emph{at most one} of the branches can match an action \acta. Hence, we consider two cases, namely:
				\begin{itemize}
					\item \lipicsHeader{No matching branches (\ie $ \forall i\in\IndSet\cdot\mtch{\actSN{\pate_i}{\bV_i}}{\acta}=\sVundef$):} Since $\eSem{\formula}=\synForm$, and since none of the guarding symbolic transformations in the synthesised selection can match action \acta, we conclude that 
					\begin{gather}
						\eSem{\formula}\ntraS{\acta} \label{proof:trans-nec-6}
					\end{gather}
					Since $\eSem{\hTru}=\eIden$, by \eqref{proof:trans-nec-6} and rule \rtit{iTer} we thus know 
					\begin{gather}
						\eI{\eSem{\formula}}{\pV}\traS{\acta}\eI{\eSem{\hTru}}{\pV'} \label{proof:trans-nec-7}
					\end{gather}
					as required. Also, since any process satisfies \hTru, we know that $\pV'\hSatS\hTru$, and so by the definition of \R we conclude that
					\begin{gather}
						(\pV',\eI{\eSem{\hTru}}{\pV'})\in\R \label{proof:trans-nec-8}
					\end{gather}
					as required. This subcase is therefore done by \eqref{proof:trans-nec-7} and \eqref{proof:trans-nec-8}. \medskip
					\item \lipicsHeader{One matching branch (\ie $ \exists j\in\IndSet\cdot\mtch{\actSN{\pate_j}{\bV_j}}{\acta}=\sV$):} From \eqref{proof:trans-nec-1} and by the definition of \hSatS we know that for every index $i\in\IndSet$ and process $\pV''\in\Sys$ $(\pV\wtraS{\acta}\pV'' \text{ and } \mtchS{\actSN{\pate_i}{\bV_i}}{\acta}{=}\sV)$ \textsl{imply} $\pV''{\hSatS}\hV_j\sV $, and so, since $\exists j\in\IndSet\cdot\mtchS{\actSN{\pate_j}{\bV_j}}{\acta}{=}\sV$, and from  \eqref{proof:trans-nec-5} we can deduce that
					\begin{gather}
						\pV'\hSatS\hV_j\sV \label{proof:trans-nec-9}
					\end{gather}
					Also, since $\mtchS{\actSN{\pate_j}{\bV_j}}{\acta}{=}\sV$, by rule \rtit{eTrn} we know that
					\begin{gather}
						\forall\eV_j,\pate'\cdot\eTrns{\pate_j}{\bV_j}{\pate'}{\eV_j} \traS{\ioact{\acta}{\pate'\sV}} \eV_j\sV \label{proof:trans-nec-10}
					\end{gather}
					By applying rules \rtit{eSel}, \rtit{eRec} on \eqref{proof:trans-nec-10} and then \eqref{proof:trans-nec-2} and \rtit{iTrn} we get
					\begin{gather}
						\forall\eV_j\cdot \eI{\Big((\rec{\rVV}{\ch{(\chBig{k\in\IndSet\setminus\set{j}\hspace{-3mm} }\eTrns{\pate_k}{\bV_k}{\pate'_k}{\eV_k})}{(\eTrns{\pate_j}{\bV_j}{\pate'}{\eV_j})\Big} } )}{\pV} 
						\traS{\pate'\sV} \eI{\eV_j\sV}{\pV'} \label{proof:trans-nec-11}
					\end{gather}
					From \eqref{proof:trans-nec-11} and the definition of \eSem{-} we can infer that $\eV_j=\rVV$ and $\pate'=\actt$ when $\eV_j$ is derived from $\hV_j=\hFls$, or $\eV_j=\eSem{\hV_j}$ and $\pate'=\bnd{\pate_j}$ otherwise. By \eqref{proof:trans-nec-9} we can deduce that the former is \emph{false} because if $\hV_j=\hFls$, then this would contradict with \eqref{proof:trans-nec-9}, and hence only the latter applies. So, since $\eSem{\hV_j\sV}=\eV_j\sV$ and $\pate_j\sV=\acta$ we have that 
					\begin{gather}
						\forall\eV_j\cdot \eI{\Big((\rec{\rVV}{\ch{(\chBig{k\in\IndSet\setminus\set{j}\hspace{-3mm} }\eTrns{\pate_k}{\bV_k}{\pate'_k}{\eV_k})}{(\eTrns{\pate_j}{\bV_j}{\bnd{\pate_j}}{\eV_j})\Big} } )}{\pV} 
						\traS{\acta} \eI{\eSem{\hV_j\sV}}{\pV'} \label{proof:trans-nec-12}
					\end{gather}
					By \eqref{proof:trans-nec-12} and the definition of \eSem{-} we can thus conclude that
					\begin{gather}
						\eI{\eSem{\formula}}{\pV}\traS{\acta}\eI{\eSem{\hV_j\sV}}{\pV'} \label{proof:trans-nec-13}
					\end{gather}
					as required, and by \eqref{proof:trans-nec-9} and the definition of \R we conclude that 
					\begin{gather}
						(\pV',\eI{\eSem{\hV_j\sV}}{\pV'})\in\R \label{proof:trans-nec-14}
					\end{gather}
					as required. Hence, this subcase is done by \eqref{proof:trans-nec-13} and \eqref{proof:trans-nec-14}.
				\end{itemize}
		\end{itemize}
	\end{case}
	}
	
	{\newcommand{\formula}{\hMaxXF}
	\newcommand{\synForm}{\rec{\rV}{\eSem{\hV}}}
		\begin{case}[\lenMax{\formula}=l+1]
			We start by assuming that 
			\begin{gather}
			\pV\hSatS\formula \label{proof:trans-max-1}\\
			\pV\traS{\actu}\pV'  \label{proof:trans-max-2}
			\end{gather}
			Since $\actu\in\set{\actt,\acta}$, we must consider both cases.
			\begin{itemize}
				\item \lipicsHeader{\actu=\actt:} Since \actu=\actt, we can apply rule \rtit{iAsy} on \eqref{proof:trans-max-2} and deduce that
				\begin{gather}
					\eI{\eSem{\formula}}{\pV}\traS{\actt}\eI{\eSem{\formula}}{\pV'} \label{proof:trans-maxa-3}
				\end{gather}
				as required. Also, since \SHML is \actt-closed (see Proposition 3.8 in \cite{Aceto1999TestingHML}), by \eqref{proof:trans-max-1} and \eqref{proof:trans-max-2}, we also know that $\pV'\hSatS\formula$ as well. Hence, by the definition of \R we conclude 
				\begin{gather}
					(\pV',\eI{\eSem{\formula}}{\pV'})\in\R \label{proof:trans-maxa-4}
				\end{gather}
				and so we done by \eqref{proof:trans-maxa-3} and \eqref{proof:trans-maxa-4}.
				\item \lipicsHeader{\actu=\acta:} Since $\actu=\acta$, from \eqref{proof:trans-max-2} we know that
				\begin{gather}
					\pV\traS{\acta}\pV' \label{proof:trans-maxb-3}
				\end{gather}
				and by \eqref{proof:trans-max-1} and the definition of \hSatS we know
				\begin{gather}
					\pV\hSatS\hVMaxXFSub \label{proof:trans-maxb-4}
				\end{gather}
				Since we assume that logical variables (\eg \hVarX) are \emph{guarded}, by the definition of \lenMax{\hV} we know that whenever a maximal fixed point \hMaxXF is unfolded into \hVMaxXFSub, the number of top level maximal fixed points decreases by 1, and so since $\lenMax{\hMaxXF}=l+1$, we infer that 
				\begin{gather}
					\lenMax{\hVMaxXFSub}=l  \label{proof:trans-maxb-5}
				\end{gather}
				Hence, by \eqref{proof:trans-maxb-3}, \eqref{proof:trans-maxb-4}, \eqref{proof:trans-maxb-5} and the inductive hypothesis we can deduce that
				\begin{gather}
					\exists\pVV'\cdot \eI{\eSem{\hVMaxXFSub}}{\pV} \traS{\acta} \pVV' \label{proof:trans-maxb-6} \\
					(\pV',\pVV')\in\R  \label{proof:trans-maxb-7}
				\end{gather}
				By applying the definition of \eSem{-} on \eqref{proof:trans-maxb-6}, followed by rule \rtit{iTrn} we get
				\begin{gather}
					\exists\pVV'\cdot \eSem{\hV}\Sub{\synForm}{\rV} \traS{\ioact{\acta}{\acta}}\eV \qquad \text{ where }\pVV'=\eI{\eV}{\pV'} \label{proof:trans-maxb-8}
				\end{gather}
				By applying rule \rtit{eRec} on \eqref{proof:trans-maxb-8}, followed by  \eqref{proof:trans-maxb-3} and \rtit{iTrn} we get
				\begin{gather}
					\exists\pVV'\cdot \eI{\synForm}{\pV} \traS{\acta}\pVV' \label{proof:trans-maxb-9}
				\end{gather}
				and so, we can apply \eSem{-} on \eqref{proof:trans-maxb-9} and obtain
				\begin{gather}
					\exists\pVV'\cdot \eI{\eSem{\formula}}{\pV} \traS{\acta}\pVV' \label{proof:trans-maxb-10}
				\end{gather}
				as required. We are therefore done by \eqref{proof:trans-maxb-7} and \eqref{proof:trans-maxb-10}.
			\end{itemize}
		\end{case}
	}		
\end{proof}

\begin{proof}[Proof for (b)] The proof proceeds by mathematical induction of $\lenMax{\hV}$. 
	
	\begin{Cases}[\lenMax{\hFls}=\lenMax{\hVarX}=0]
		Both cases do not apply since $\nexists \pV\cdot \pV\hSatS\hFls$ and similarly since $\hVarX$ is an open-formula and $\nexists \pV\cdot \pV\hSatS\hVarX$.
	\end{Cases}
		
	\begin{case}[\lenMax{\hTru}=0]
		Assume that 
		\begin{gather}
			\pV\hSatS\hTru \label{proof:trans-b-tt-1}\\
			\eI{\eSem{\hTru}}{\pV}\traS{\actu}\pVV'  \label{proof:trans-b-tt-2}
		\end{gather}
		Since $\actu\in\set{\actt,\acta}$, we must consider both cases.
		\begin{itemize}
			\item \lipicsHeader{\actu=\actt:} Since \actu=\actt, the transition in \eqref{proof:trans-b-tt-2} can be performed either via \rtit{iTrn} or \rtit{iAsy}. We must therefore consider both cases.
			\begin{itemize}
				\item \lipicsHeader{\rtit{iAsy}:} From rule \rtit{iAsy} and \eqref{proof:trans-b-tt-2} we thus know that $\pVV'=\eI{\eV}{\pV'}$ and that $\eV=\eSem{\hTru}$ since this remains unaffected by the transition, such that $\pV\traS{\actt}\pV'$ as required. Also, since every process satisfies \hTru, we know that $\pV'\hSatS\hTru$ as well, and so we are done since by the definition of \R we know that $(\pV',\eI{\eSem{\hTru}}{\pV'})\in\R$.
				\item \lipicsHeader{\rtit{iTrn}:} From rule \rtit{iTrn} and \eqref{proof:trans-b-tt-2} we know that: $\pVV'=\eI{\eV}{\pV'}$, $\pV\traS{\acta}\pV'$ and that
				\begin{gather}
					\eSem{\hTru}\traS{\ioact{\acta}{\actt}}\eV \label{proof:trans-b-tt-3}
				\end{gather}
				Since $\eSem{\hTru}=\eIden$, by rule \rtit{eId} we know that \eqref{proof:trans-b-tt-3} is \emph{false} and hence this case does not apply.
			\end{itemize}
			\item \lipicsHeader{\actu=\acta:} Since \actu=\acta, the transition in \eqref{proof:trans-b-tt-2} can be performed either via \rtit{iTrn} or \rtit{iTer}. We consider both cases.
			\begin{itemize}
				\item \lipicsHeader{\rtit{iTer}:} This case does not apply since by applying \rtit{iTer} on  \eqref{proof:trans-b-tt-2} we know that $\eSem{\hTru}\ntraS{\acta}$ which is \emph{false} since $\eSem{\hTru}=\eIden$ and rule \rtit{eId} states that for all \acta, $\eIden\traS{\ioact{\acta}{\acta}}\eIden$, thus leading to a contradiction.
				\item \lipicsHeader{\rtit{iTrn}:} By applying rule \rtit{iTrn} on \eqref{proof:trans-b-tt-2} we know that $\pVV'=\eI{\eV}{\pV'}$ such that
				\begin{gather}
					\pV\traS{\acta}\pV' \label{proof:trans-b-tt-4} \\
					\eSem{\hTru}\traS{\ioact{\acta}{\acta}}\eV \label{proof:trans-b-tt-5}
				\end{gather}
				Since $\eSem{\hTru}=\eIden$, by applying rule \rtit{eId} to \eqref{proof:trans-b-tt-5} we know that $\eV=\eIden=\eSem{\hTru}$, meaning that $\pVV'=\eI{\eSem{\hTru}}{\pVV'}$. Hence, since every process satisfies \hTru we know that $\pV'\hSatS\hTru$, so that by the definition of \R we conclude 
				\begin{gather}
					(\pV',\eI{\eSem{\hTru}}{\pV'})\in\R \label{proof:trans-b-tt-6}
				\end{gather}
				Hence, we are done by \eqref{proof:trans-b-tt-4} and \eqref{proof:trans-b-tt-6}.
			\end{itemize}
		\end{itemize}
	\end{case}
	
	{\newcommand{\formula}{\hBigAnd{i\in\IndSet}\hNec{\actSN{\pate_i}{\bV_i}}\hV_i}
		\newcommand{\synForm}{\Big(\rec{\rVV}{\chBigI\begin{xbrace}{lc} \prf{\actSTD{\pate_i}{\bV_i}}{\rVV} &\quad (\text{if }\hV_i=\hFls) \\ \prf{\actSID{\pate_i}{\bV_i}}{\eSem{\hV_i}} &\quad (\text{otherwise}) \end{xbrace}}\Big)}
		\newcommand{\synFormUnfold}{\Big(\chBigI\begin{xbrace}{lc} \prf{\actSTD{\pate_i}{\bV_i}}{\rVV} &\quad (\text{if }\hV_i=\hFls) \\ \prf{\actSID{\pate_i}{\bV_i}}{\eSem{\hV_i}} &\quad (\text{otherwise}) \end{xbrace}\Big)\sub{\eV}{\rVV}}
		\begin{case}[\lenMax{\formula}=0]
			We assume that 
			\begin{gather}
			\pV\hSatS\formula \label{proof:trans-nec-b-1}\\
			\eI{\eSem{\formula}}{\pV}\traS{\actu}\pVV'  \label{proof:trans-nec-b-2}
			\end{gather}
			Since $\actu\in\set{\actt,\acta}$, we must consider both cases.
			\begin{itemize}
				\item \lipicsHeader{\actu=\actt:} Since \actu=\actt, from \eqref{proof:trans-nec-b-2} we know that
				\begin{gather}
					\eI{\eSem{\formula}}{\pV}\traS{\actt}\pVV'  \label{proof:trans-nec-b-3}
				\end{gather}
				The \actt-transition in \eqref{proof:trans-nec-b-3} can be performed either via \rtit{iTrn} or \rtit{iAsy}; we thus consider both cases.
				\begin{itemize}
					\item \lipicsHeader{\rtit{iAsy}:} As we assume that the reduction in \eqref{proof:trans-nec-b-3} is the result of rule \rtit{iAsy}, we know that $\pVV'=\eI{\eSem{\formula}}{\pV'}$ such that
					\begin{gather}
						\pV\traS{\actt}\pV'  \label{proof:trans-nec-b-4}
					\end{gather}
					as required. Also, since \SHML is \actt-closed (see Proposition 3.8 in \cite{Aceto1999TestingHML}), by \eqref{proof:trans-nec-b-1} and \eqref{proof:trans-nec-b-4} we can deduce that $\pV'\hSatS\formula$ as well, so that by the definition of \R we conclude that
					\begin{gather}
						(\pV',\eI{\eSem{\formula}}{\pV'})\in\R  \label{proof:trans-nec-b-5}
					\end{gather}
					and so we are done by \eqref{proof:trans-nec-b-4} and \eqref{proof:trans-nec-b-5}.
					\item \lipicsHeader{\rtit{iTrn}:} By assuming that reduction \eqref{proof:trans-nec-b-3} results from \rtit{iTrn}, we know that $\pVV'=\eI{\eV'}{\pV'}$ such that
					\begin{gather}
						\pV\traS{\acta}\pV' \label{proof:trans-nec-b-5.1} \\
						\eSem{\formula}\traS{\ioact{\acta}{\actt}}\eV' \label{proof:trans-nec-b-6}
					\end{gather}
					By \eqref{proof:trans-nec-b-6} and the definition of \eSem{-} we know that
					\begin{gather}
						(\eV=\synForm) \traS{\ioact{\acta}{\actt}} \eV' \label{proof:trans-nec-b-7}
					\end{gather}
					By applying rule \rtit{eRec} on \eqref{proof:trans-nec-b-7} we know
					\begin{gather}
						\synFormUnfold \traS{\ioact{\acta}{\actt}} \eV' \label{proof:trans-nec-b-8}
					\end{gather}
					From \eqref{proof:trans-nec-b-8} we know that the input action \acta is suppressed into a \actt which is only possible when \acta matches a branch of the form \prf{\actSTD{\pate_j}{\bV_j}}{\rVV} for some $j\in\IndSet$, and so we know that
					\begin{gather}
						\exists j\in\IndSet\cdot\mtchS{\actSN{\pate_j}{\bV_j}}{\acta}=\sV \label{proof:trans-nec-b-8.1}
					\end{gather}
					By the definition of \eSem{-} we however know that this matching branch was derived from a conjunct subformula of the form $\hNec{\actSN{\pate_j}{\bV_j}}\hFls$, such that we know that
					\begin{gather}
						\pVV'=\eI{\eSem{\hFls}}{\pV'}	\label{proof:trans-nec-b-9}
					\end{gather}
					According to the definition of \R, for the pair $(\pV',\pVV')$ to be in \R we must now show that $\pV'\hSatS\hFls$ which is obviously \emph{false}, and hence, contradicts with assumption \eqref{proof:trans-nec-b-1}. Precisely, this contradiction occurs since by the definition of \hSatS, when $\pV\hSatS\formula$ then $\pV\wtraS{\acta}\pV'$ (which is confirmed by \eqref{proof:trans-nec-b-5.1}) and $\exists j\in\IndSet\cdot\mtchS{\actSN{\pate_j}{\bV_j}}{\acta}{=}\sV$ (also confirmed by \eqref{proof:trans-nec-b-8.1}) imply $\pV'\hSatS\hV_j\sV$ which leads to a contradiction since in this case $\hV_j\sV{=}\hFls$. Hence, this subcase does not apply.					
				\end{itemize}
			\item \lipicsHeader{\actu=\acta:} Since \actu=\acta, by \eqref{proof:trans-nec-b-2} and the definition of \eSem{-} we know that
			\begin{gather}
				\eI{\synForm}{\pV}\traS{\acta}\pVV'  \label{proof:trans-nec-b-10}
			\end{gather}	
				Since the transition in \eqref{proof:trans-nec-b-10} can be performed via \rtit{iTer} or iTrn, we consider both possibilities.
				\begin{itemize}
					\item \lipicsHeader{\rtit{iTer}:} As we assume that \eqref{proof:trans-nec-b-10} results from rule \rtit{iTer}, we know that 
					\begin{gather}
						\pV\traS{\acta}\pV' \label{proof:trans-nec-b-11}
					\end{gather}
					as required, and that $\pVV'=\eI{\eIden}{\pV'}=\eI{\eSem{\hTru}}{\pV'}$ since $\eSem{\hTru}=\eIden$. Consequently, as every process satisfies \hTru, we know that $\pV'\hSatS\hTru$ and so by the definition of \R we can conclude that
					\begin{gather}
						(\pV',\eI{\eSem{\hTru}}{\pV'})\in\R \label{proof:trans-nec-b-12}
					\end{gather}
					and so we are done by \eqref{proof:trans-nec-b-11} and \eqref{proof:trans-nec-b-12}.
					\item \lipicsHeader{\rtit{iTrn}:} By assuming that \eqref{proof:trans-nec-b-10} is obtained from rule \rtit{iTrn} we know that
					\begin{gather}
						\pV\traS{\acta}\pV' \label{proof:trans-nec-b-13}
					\end{gather}
					as required, and that 
					\begin{gather}
						\synForm\traS{\ioact{\acta}{\acta}}\pVV' \label{proof:trans-nec-b-14}
					\end{gather}
					By applying rules \rtit{eRec} and \rtit{eSel} on \eqref{proof:trans-nec-b-14} we know
					\begin{gather}
						\exists j\in\IndSet\cdot \eTrns{\pate_j}{\bV_j}{\pate'}{\eV} \traS{\ioact{\acta}{\acta}}\pVV' \label{proof:trans-nec-b-15}
					\end{gather}
					Since the transition in \eqref{proof:trans-nec-b-15} does not modify the given action \acta, we can infer that $\pate'=\bnd{\pate_j}$ and that $\eV=\eSem{\hV_j}$ where $\hV_j\neq\hFls$ so that when we apply rule \rtit{eTrn} to \eqref{proof:trans-nec-b-15} we can deduce that 
					\begin{gather}
						\pVV'=\eSem{\hV_j\sV}  \label{proof:trans-nec-b-16}\\
						\mtchS{\actSN{\pate_j}{\bV_j}}{\acta}=\sV  \label{proof:trans-nec-b-17}
					\end{gather}
					By applying the definition of \hSatS on \eqref{proof:trans-nec-b-1} we know that
					\begin{gather}
						\forall i\in\IndSet,\pV''\cdot \pV\wtraS{\acta}\pV'' \text{ and } \mtchS{\actSN{\pate_i}{\bV_i}}{\acta}=\sV \text{ then } \pV''\hSatS\hV_i\sV  \label{proof:trans-nec-b-18}
					\end{gather}
					Hence, from \eqref{proof:trans-nec-b-13}, \eqref{proof:trans-nec-b-17} and \eqref{proof:trans-nec-b-18} we can deduce that $\pV'\hSatS\hV_j\sV$ and so by the definition of \R we can deduce that
					\begin{gather}
						(\pV',\eI{\eSem{\hV_j\sV}}{\pV'})\in\R  \label{proof:trans-nec-b-19}
					\end{gather}
					and so we are done by \eqref{proof:trans-nec-b-13} and \eqref{proof:trans-nec-b-19}.
				\end{itemize}
			\end{itemize}
		\end{case}
	}
	
	{\newcommand{\formula}{\hMaxXF}
		\newcommand{\synForm}{\rec{\rV}{\eSem{\hV}}}
		\begin{case}[\lenMax{\formula}=l+1]
			Assume that 
			\begin{gather}
			\pV\hSatS\formula \label{proof:trans-max-b-1}\\
			\eI{\eSem{\formula}}{\pV}\traS{\actu}\pVV'  \label{proof:trans-max-b-2}
			\end{gather}
			Since the reduction in \eqref{proof:trans-max-b-2} can be performed as a result of rules \rtit{iAsy}, \rtit{iTer} and \rtit{iTrn}, we consider each case separately.
			\begin{itemize}
				\item \lipicsHeader{\rtit{iAsy}:} From rule \rtit{iAsy} and \eqref{proof:trans-max-b-2} we get that $\actu=\actt$ and that
				\begin{gather}
					\pV\traS{\actt}\pV' \label{proof:trans-max-b-4}
				\end{gather}
				as required, and that $\pVV'=\eI{\eSem{\hMaxXF}}{\pV'}$. Hence, since \SHML is \actt-closed (as advocated by Proposition 3.8 in \cite{Aceto1999TestingHML}) by  \eqref{proof:trans-max-b-1} and \eqref{proof:trans-max-b-4} we deduce that $\pV'\hSatS\formula$ as well, and so by the definition of \R we conclude 
				\begin{gather}
					(\pV',\eI{\eSem{\formula}}{\pV'})\in\R  \label{proof:trans-max-b-5}
				\end{gather}
				as required. We are therefore done by \eqref{proof:trans-max-b-4} and \eqref{proof:trans-max-b-5}.				
				\item \lipicsHeader{\rtit{iTer}:} If we assume that \eqref{proof:trans-max-b-2} results from rule \rtit{iTer}, we get that $\actu=\acta$ and that
				\begin{gather}
					\pV\traS{\acta}\pV' \label{proof:trans-max-b-6}
				\end{gather}
				as required, and that $\pVV'=\eI{\eIden}{\pV'}=\eI{\eSem{\hTru}}{\pV'}$, since $\eSem{\hTru}=\eIden$. Hence, since \hTru is \emph{always satisfied}, we know that $\pV'\hSatS\hTru$ and so by the definition of \R we can conclude 
				\begin{gather}
					(\pV',\eI{\eSem{\hTru}}{\pV'})\in\R \label{proof:trans-max-b-7}
				\end{gather}
				Hence, we are done by \eqref{proof:trans-max-b-6} and \eqref{proof:trans-max-b-7}.
				\item \lipicsHeader{\rtit{iTrn}:} By assuming that the reduction in \eqref{proof:trans-max-b-2} was performed via rule \rtit{iTrn} and by the definition of \eSem{-}, we know that
				\begin{gather}
					\pV\traS{\acta}\pV' \label{proof:trans-max-b-8}\\
					\rec{\rV}{\eSem{\hV}}\traS{\ioact{\acta}{\actu}}\eV \qquad (\text{where }\pVV=\eI{\eV}{\pV'}) \label{proof:trans-max-b-9}
				\end{gather}
				By applying rule \rtit{eRec} to \eqref{proof:trans-max-b-9}, along with the definition of, \eSem{-} we can deduce that $\eSem{\hVMaxXFSub}\traS{\ioact{\acta}{\actu}}\eV$, so that by \eqref{proof:trans-max-b-8} and \rtit{iTrn} we have that
				\begin{gather}
					\eI{\eSem{\hVMaxXFSub}}{\pV}\traS{\actu}\pVV'	\label{proof:trans-max-b-10}
				\end{gather}
				By \eqref{proof:trans-max-b-1} and the definition of \hSatS we know that 
				\begin{gather}
					\pV\hSatS\hVMaxXFSub	\label{proof:trans-max-b-11}
				\end{gather}
				Since we assume that logical variables (\eg \hVarX) are \emph{guarded}, by the definition of \lenMax{\hV} we know that whenever a maximal fixed point \hMaxXF gets unfolded into \hVMaxXFSub, the number of top level maximal fixed points decreases by 1, and so, since $\lenMax{\hMaxXF}=l+1$ we infer that
				\begin{gather}
					\lenMax{\hVMaxXFSub}=l \label{proof:trans-max-b-12}
				\end{gather}
				Hence, by \eqref{proof:trans-max-b-10}, \eqref{proof:trans-max-b-11}, \eqref{proof:trans-max-b-12} and the inductive hypothesis we conclude that $\pV\traS{\actu}\pV'$ and $(\pV',\pVV')\in\R$ as required, and so we are done.
			\end{itemize}
		\end{case}
	} \vspace{-5mm}
\end{proof}
\medskip

\subsection{Non-Violating Trace Transparency} \label{sec:proof-trace-transparency}
\begin{rtp}
	\begin{enumerate}[\qquad(a)]
		\item $\forall\pV{\,\in\,}\Sys, \hV{\,\in\,}\SHMLnf \cdot \nvsat{\pV}{\tr}{\hV}  \,\text{ and }\, \pV\wtraS{\tr}\pV'\; \imply \; \eI{\eSem{\hV}{}}{\pV}\wtraS{\tr}\eI{\eV'}{\pV'}$
		\item $\forall\pV{\,\in\,}\Sys, \hV{\,\in\,}\SHMLnf \cdot \nvsat{\pV}{\tr}{\hV}  \,\text{ and }\, \eI{\eSem{\hV}{}}{\pV}\wtraS{\tr}\eI{\eV'}{\pV'} \; \imply \;  \pV\wtraS{\tr}\pV'$
	\end{enumerate}\medskip
	The proofs for (a) and (b) rely on a number of auxiliary lemmas, namely, \Cref{lemma:nvtt-1,lemma:nvtt-2} are required for proving (a) while \Cref{lemma:nvtt-1,lemma:nvtt-3,,lemma:nvtt-4} are necessary for proving (b). Before introducing these lemmas, in \Cref{fig:after-defs} we introduce function 
	 $\afterFS{::}(\SHMLnf\times\Act)\mapsto\SHMLnf$, denoting how 
	 an \SHMLnf formula 
	 is affected after 
	 evaluating with respect to some action \actu.
	\begin{figure}[t]
		\begin{align*}
		\afterF{\hV}{\actt} &\defeq \hV \\
		\afterF{\hV}{\acta} &\defeq 
		\begin{xbrace}{c@{\qquad}l}
		\hV & \text{if }\hV{\,\in\,}\Set{\hTru,\hFls}\\
		\afterF{\hV\sub{\hMaxXF}{\hVarX}}{\acta} & \text{if }\hV{\,=\,}\hMaxXF\\
		\hV_j\sV & \text{if }\hV{\,=\,}\hBigAnd{i\in\IndSet}\hNec{\actS_i}\hV_i\;\text{ and }\;\exists j\in\IndSet\cdot\mtchS{\actS_j}{\acta}{=}\sV \\
		\hTru &\text{if }\hV{\,=\,}\hBigAnd{i\in\IndSet}\hNec{\actS_i}\hV_i\;\text{ and }\;\emph{otherwise}
		\end{xbrace}
		\end{align*}
		\caption{Defining function \afterFS.}
		\label{fig:after-defs}
	\end{figure}
	
	\begin{lemma} \label[lemma]{lemma:nvtt-1}
		$$\pV\wtraS{\acta}\pV' \text{ and } \nvsat{\pV}{\acta\tr}{\hV} \imp \nvsat{\pV'}{\tr}{\afterF{\hV}{\acta}}$$
		This lemma states that if process \pV does not violate \hV \wrt trace, $\acta\tr$, then the process resulting from performing action \acta, \ie	$\pV'$ and the trace suffix \tr, should also not violate the \SHMLnf formula obtained after \hV analyses action \acta, \ie \afterF{\hV}{\acta}. \qed
	\end{lemma}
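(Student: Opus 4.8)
The plan is to prove the statement by a single induction that reduces everything to how one observable action $\acta$ is ``peeled off'' both by the forcing relation of \Cref{def:violation-trace-sem-shml} and by the function \afterFS\ of \Cref{fig:after-defs}. Throughout I would assume, as in the rest of the development, that formulas are guarded: this is precisely what makes \afterFS\ well defined (repeated unfolding of a fixpoint terminates at a conjunction, \hTru\ or \hFls), and it lets me induct on $\lenMax{\hV}$, the number of top-level fixpoints, rather than on structural size, since a plain structural induction breaks at the fixpoint case where unfolding does not decrease the term.

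For the base case $\lenMax{\hV}{=}0$ the formula \hV\ is \hTru, \hFls, \hVarX, or a guarded conjunction $\hBigAndU{i\in\IndSet}\hNec{\actSN{\pate_i}{\bV_i}}{\hV_i}$. The variable case does not arise for closed guarded formulas, since no bare \hVarX\ is ever reached by \afterFS; the \hTru\ case is immediate because no rule of \Cref{def:violation-trace-sem-shml} derives a violation of \hTru, so $\nvsat{\pV'}{\tr}{\afterF{\hTru}{\acta}}$ holds vacuously, and the \hFls\ case is equally direct from the single $(\pV,\epsilon,\hFls)$ rule. The substantive base case is the conjunction, and here I would use the normal-form disjointness condition that \emph{at most one} guard can match \acta. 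If none matches, then $\afterF{\hV}{\acta}{=}\hTru$ and the conclusion again holds vacuously. If a unique branch $j$ matches, with $\mtch{\pate_j}{\acta}{=}\sV$ and $\ceval{\bV_j\sV}{\boolT}$, then $\afterF{\hV}{\acta}{=}\hV_j\sV$ and I would argue by contradiction: assuming $\vsat{\pV'}{\tr}{\hV_j\sV}$, the necessity rule instantiated with $\sV$ and the given transition $\pV\wtraS{\acta}\pV'$ yields $\vsat{\pV}{\acta\tr}{\hNec{\actSN{\pate_j}{\bV_j}}{\hV_j}}$, and the conjunction rule with witness $j$ then gives $\vsat{\pV}{\acta\tr}{\hV}$, contradicting the hypothesis $\nvsat{\pV}{\acta\tr}{\hV}$.

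For the inductive step $\hV{=}\hMaxXF$ I would exploit that both the forcing relation and \afterFS\ handle the fixpoint by unfolding. Since the fixpoint rule is the only one applicable to \hMaxXF, the premise $\nvsat{\pV}{\acta\tr}{\hMaxXF}$ is equivalent to $\nvsat{\pV}{\acta\tr}{\hVMaxXFSub}$, while by definition $\afterF{\hMaxXF}{\acta}{=}\afterF{\hVMaxXFSub}{\acta}$. Guardedness guarantees $\lenMax{\hVMaxXFSub}{=}\lenMax{\hMaxXF}{-}1$, so the induction hypothesis applies to \hVMaxXFSub\ and delivers $\nvsat{\pV'}{\tr}{\afterF{\hVMaxXFSub}{\acta}}$, which is exactly the required $\nvsat{\pV'}{\tr}{\afterF{\hMaxXF}{\acta}}$.

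The hard part will be the conjunction base case: everything hinges on the disjointness requirement of $\SHMLnf$, which ensures that \afterFS\ computes precisely the single continuation $\hV_j\sV$ that the necessity rule of the forcing relation peels off, so that a downstream violation can be transported back up through exactly one branch and combined through the conjunction rule. A secondary point requiring care is the choice of the induction measure $\lenMax{\hV}$ in place of structural size, together with the justification (via guardedness) that unfolding strictly decreases it; without this the fixpoint case would not be well founded, mirroring the measure already used in the soundness and transparency proofs.
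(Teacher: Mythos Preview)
Your proposal is correct and follows essentially the same approach as the paper. The paper explicitly passes to the contrapositive
\[
\pV\wtraS{\acta}\pV' \text{ and } \vsat{\pV'}{\tr}{\afterF{\hV}{\acta}} \;\imp\; \vsat{\pV}{\acta\tr}{\hV}
\]
and then proceeds by rule induction on the recursive definition of $\afterF{\hV}{\acta}$, which amounts to the same case split and the same well-foundedness argument (unfolding fixpoints) as your induction on $\lenMax{\hV}$; your contradiction argument in the conjunction case is exactly the contrapositive step carried out locally.
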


	\begin{lemma} \label[lemma]{lemma:nvtt-2}
		$$\nvsat{\pV}{\acta\tr}{\hV} \text{ and } \pV\wtraS{\acta}\pV' \imp \eI{\eSem{\hV}}{\pV}\wtraS{\acta}\eI{\eSem{\afterF{\hV}{\acta}}}{\pV'} $$
		This lemma dictates that if process \pV does not violate \hV \wrt trace $\acta\tr$, \ie $\nvsat{\pV}{\acta\tr}{\hV}$, and is capable of performing \acta, \ie $\pV\wtraS{\acta}\pV'$, 
		then the enforced process $\eI{\eSem{\hV}}{\pV}$ should still be able to perform action \acta and reduce into $\eI{\eSem{\afterF{\hV}{\acta}}}{\pV'}$. \qed
	\end{lemma}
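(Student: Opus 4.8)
The plan is to prove the implication by induction on $\lenMax{\hV}$, the number of top-level greatest fixpoints (\Cref{fig:top-level-max}), working — as in the transparency proofs — under the assumption that $\hV$ is guarded, so that unfolding a fixpoint strictly decreases this measure. Since the lemma tracks only a single visible action $\acta$ and records the residual obligation through $\afterFS$, no appeal to the inductive hypothesis is needed except to absorb fixpoint unfolding; the case analysis is therefore driven by the outermost shape of $\hV$. Throughout I would silently discharge the leading and trailing $\actt$-moves implicit in $\pV\wtraS{\acta}\pV'$ by repeated use of rule \rtit{iAsy}, reducing the task to producing a single instrumented $\acta$-step.

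For the base cases, when $\hV{\in}\set{\hTru,\hFls}$ the synthesis yields the identity transducer $\eIden$, which by rule \rtit{eId} relabels $\acta$ to $\acta$; combined with \rtit{iTrn} this gives $\eI{\eIden}{\pV}\wtraS{\acta}\eI{\eIden}{\pV'}$, and since $\afterF{\hTru}{\acta}{=}\hTru$ and $\afterF{\hFls}{\acta}{=}\hFls$ both synthesise to $\eIden$, the goal is met. The case $\hV{=}\hVarX$ is vacuous as $\hVarX$ is open. For $\hV{=}\hMaxXF$ I would unfold once: the fixpoint rule of $\vsatL$ yields $\nvsat{\pV}{\acta\tr}{\hVMaxXFSub}$, the definition of $\afterFS$ gives $\afterF{\hMaxXF}{\acta}{=}\afterF{\hVMaxXFSub}{\acta}$, and a routine sublemma stating that synthesis commutes with unfolding, $\eSem{\hVMaxXFSub}{=}\eSem{\hV}\Sub{\rec{\rV}{\eSem{\hV}}}{\rV}$, together with rule \rtit{eRec}, shows that $\eSem{\hMaxXF}$ and $\eSem{\hVMaxXFSub}$ induce the same instrumented transitions. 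As guardedness drops $\lenMax{\hV}$ by one, the inductive hypothesis applied to $\hVMaxXFSub$ closes this case.

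The heart of the argument is the conjunction-of-necessities case $\hV{=}\hBigAndU{i\in\IndSet}\hNec{\actSN{\pate_i}{\bV_i}}\hV_i$, whose synthesis is a recursive summation with one branch per conjunct. Because the normal-form condition forces the guards to be pairwise disjoint, at most one branch can match $\acta$, and I would split accordingly. If no branch matches, then the transducer can neither transform nor insert on $\acta$, so rule \rtit{iTer} fires and the enforcer degrades to $\eIden$; as $\afterF{\hV}{\acta}{=}\hTru$ with $\eSem{\hTru}{=}\eIden$, this yields exactly $\eI{\eSem{\hV}}{\pV}\wtraS{\acta}\eI{\eSem{\hTru}}{\pV'}$. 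If the unique branch $j$ matches, with $\mtchS{\actSN{\pate_j}{\bV_j}}{\acta}{=}\sV$, then $\afterF{\hV}{\acta}{=}\hV_j\sV$; provided $\hV_j{\neq}\hFls$, the synthesised branch is the identity-transformation prefix $\prf{\actSTN{\pate_j}{\bV_j}{\bnd{\pate_j}}}{\eSem{\hV_j}}$, and rules \rtit{eSel}, \rtit{eRec} and \rtit{eTrn} give $\eSem{\hV}\traS{\ioact{\acta}{\acta}}\eSem{\hV_j}\sV{=}\eSem{\hV_j\sV}$, whence \rtit{iTrn} produces $\eI{\eSem{\hV}}{\pV}\wtraS{\acta}\eI{\eSem{\hV_j\sV}}{\pV'}$, as required.

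The step I expect to be the main obstacle is discharging the proviso $\hV_j{\neq}\hFls$ in the matching subcase. This is essential: a branch with $\hV_j{=}\hFls$ synthesises to the suppression prefix $\prf{\actSTN{\pate_j}{\bV_j}{\actt}}{\rVV}$, which relabels $\acta$ to $\actt$, so the monitored system could not exhibit $\acta$ at all and the conclusion would fail. The idea is to derive a contradiction from $\nvsat{\pV}{\acta\tr}{\hV}$: if $\hV_j{=}\hFls$ while $\mtchS{\actSN{\pate_j}{\bV_j}}{\acta}{=}\sV$ and $\pV\wtraS{\acta}\pV'$, then the necessity and conjunction rules of $\vsatL$ should place $(\pV,\acta\tr,\hV)$ in the forcing relation, contradicting the hypothesis. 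Making this precise is delicate, since it requires that committing a violation at the $\acta$-step (by reaching a $\hFls$-guard) is not ``undone'' by the continuation $\tr$; this is exactly the point at which the suppression discipline of \Cref{def:synthesis} must interlock with the violating-trace semantics of \Cref{def:violation-trace-sem-shml}, and I would isolate it as a dedicated auxiliary observation about $\vsatL$ before assembling the case.
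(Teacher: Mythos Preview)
Your approach is essentially the paper's: both induct on fixpoint unfolding (your $\lenMax{\hV}$ versus the paper's ``rule induction on $\afterF{\hV}{\acta}$'', which coincide since the only recursive clause of $\afterFS$ is the $\hMaxXF$ case), and both handle the conjunction-of-necessities case by the same matching/non-matching split, deriving $\hV_j\neq\hFls$ from the non-violation hypothesis just as you outline. The only cosmetic difference is that you discharge $\hV{=}\hFls$ directly via $\eSem{\hFls}{=}\eIden$ and rule \rtit{eId}, whereas the paper dismisses that case as inapplicable.
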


	\begin{lemma} \label[lemma]{lemma:nvtt-3}
		$$\nvsat{\pV}{\tr}{\hV} \text{ and } \eI{\eSem{\hV}}{\pV}\traS{\actt}\eI{\eV'}{\pV'} \imp \pV\traS{\actt}\pV' \text{ and } \eV'=\eSem{\hV} \text{ and } \nvsat{\pV'}{\tr}{\hV} $$
		With this lemma we can deduce that if process \pV does not violate \hV \wrt any trace $\tr$, \ie $\nvsat{\pV}{\tr}{\hV}$, and when instrumented with monitor \eSem{\hV} it is capable of performing a silent \actt action, \ie $\eI{\eSem{\hV}}{\pV}\wtraS{\actt}\eI{\eV'}{\pV'}$, then $\eV'$ should still be equal to \eSem{\hV} and the unmonitored process $\pV$ should also be able to perform the same silent action and reduce into $\pV'$ such that this process also does not violate \hV \wrt the same trace \tr. \qed
	\end{lemma}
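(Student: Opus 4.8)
I would prove \Cref{lemma:nvtt-3} by a case analysis on the instrumentation rule of \Cref{fig:mod-re} used to derive the silent move $\eI{\eSem{\hV}}{\pV}\traS{\actt}\eI{\eV'}{\pV'}$. Because the emitted label is $\actt$, only rules \rtit{iAsy} and \rtit{iTrn} can have been applied: rule \rtit{iIns} is vacuous here, since the synthesis of \Cref{def:synthesis} never produces an insertion prefix (no transducer term it builds begins with $\actdot$), and rule \rtit{iTer} can only emit a visible action $\acta$, never $\actt$. The two remaining sub-cases correspond to a genuine system step and to a suppression, respectively.

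\textbf{Genuine silent step (\rtit{iAsy}).} Reading the rule backwards gives $\pV\traS{\actt}\pV'$ together with $\eV'=\eSem{\hV}$, which already discharges the first two conjuncts of the conclusion. For the surviving conjunct $\nvsat{\pV'}{\tr}{\hV}$ I would argue contrapositively: suppose $\vsat{\pV'}{\tr}{\hV}$. Since every clause of the forcing relation of \Cref{def:violation-trace-sem-shml} that consumes an action does so through a \emph{weak} transition $\wtraS{\acta}$, the leading silent step $\pV\traS{\actt}\pV'$ can be absorbed into the first such weak transition (and, for $\tr=\epsilon$, the judgement does not depend on the process at all). This yields $\vsat{\pV}{\tr}{\hV}$, contradicting the hypothesis $\nvsat{\pV}{\tr}{\hV}$. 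Hence non-violation is preserved across the genuine silent move.

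\textbf{Suppression (\rtit{iTrn}).} This is the case I expect to rule out. Here $\pV\traS{\acta}\pV'$ for some visible $\acta$ and $\eSem{\hV}\traS{\ioact{\acta}{\actt}}\eV'$, so the transducer turns $\acta$ into $\actt$. By the definition of $\eSem{-}$ (\Cref{def:synthesis}), possibly after unfolding a leading $\hMaxXF$ through rule \rtit{eRec}, a suppression prefix $\actSTN{\pate_j}{\bV_j}{\actt}$ is generated \emph{only} from a conjunct of the form $\hNec{\actSN{\pate_j}{\bV_j}}{\hFls}$, and firing it forces $\mtch{\pate_j}{\acta}{=}\sV$ with $\ceval{\bV_j\sV}{\boolT}$. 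Plugging this into the forcing rules of \Cref{def:violation-trace-sem-shml} --- the necessity rule, the base case $(\pV',\epsilon,\hFls){\in}\R$, and the conjunction rule --- gives $\vsat{\pV}{\acta}{\hV}$, \ie the suppressed action is itself a violating trace of $\pV$. Under the non-violation hypothesis this cannot happen, so the sub-case is vacuous and the silent move is necessarily the \rtit{iAsy} one handled above.

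\textbf{Main obstacle.} The delicate point is the very last step: the suppression analysis delivers a violation along the length-one trace $\acta$, whereas the hypothesis speaks about the trace $\tr$. Reconciling the two --- \ie showing that, for the traces $\tr$ along which the lemma is actually invoked in the proof of part (b) of \Cref{lemma:trace-transparency}, the existence of a suppressible action genuinely contradicts $\nvsat{\pV}{\tr}{\hV}$ --- is where the real work lies; the rule inductions and the absorption of $\actt$ into weak transitions are otherwise routine. I would therefore make the intended reading of ``$\pV$ does not violate $\hV$ along $\tr$'' explicit and verify that it indeed forbids $\pV$ from offering any action that the synthesised monitor would suppress.
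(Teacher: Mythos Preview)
Your case analysis is exactly the paper's: rule induction on the instrumentation derivation, discarding \rtit{iTer} (visible label) and \rtit{iIns} (the synthesis never emits an insertion prefix), handling \rtit{iAsy} directly, and arguing that \rtit{iTrn} is impossible because a suppression prefix can only arise from a conjunct $\hNec{\actSN{\pate_j}{\bV_j}}\hFls$, whence the suppressed action $\acta$ witnesses a violation. Your $\actt$-absorption argument for the \rtit{iAsy} conjunct $\nvsat{\pV'}{\tr}{\hV}$ is also what the paper relies on.

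The ``main obstacle'' you flag is genuine, and it is instructive to see how the paper deals with it: in its proof of this lemma the hypothesis is written as $\forall\tr\cdot\nvsat{\pV}{\tr}{\hV}$ rather than the single-trace $\nvsat{\pV}{\tr}{\hV}$ appearing in the statement. Under that universally quantified reading the \rtit{iTrn} case closes immediately, since from the suppressed $\acta$ one obtains $\vsat{\pV}{\acta}{\hV}$ (via the $\hFls$, necessity and conjunction clauses of \Cref{def:violation-trace-sem-shml}), contradicting the instance $\tr=\acta$. With only a single fixed $\tr$ in hand, the contradiction you produce is along the trace $\acta$, which need not coincide with $\tr$; as you suspect, the lemma as literally stated does not go through without the stronger reading (take $\hV=\hNec{\acta}\hFls$, $\pV=\prf{\acta}{\prf{\actb}{\nil}}$ and $\tr=\actb$). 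So your instinct to make the intended quantification explicit is exactly right: the clean fix is to state and prove the lemma with the hypothesis $\forall\tr\cdot\nvsat{\pV}{\tr}{\hV}$ (or, equivalently for the uses in part~(b), with $\tr$ ranging over all traces of the form $\acta\trr$ when $\tr=\acta\trr$), which is how the paper actually uses it.
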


	\begin{lemma} \label[lemma]{lemma:nvtt-4}
		$$\eI{\eSem{\hV}}{\pV}\traS{\acta}\eI{\eV'}{\pV'} \imp \pV\traS{\acta}\pV' \text{ and } \eV'=\eSem{\afterF{\hV}{\acta}} $$
		This lemma is similar \Cref{lemma:nvtt-3} but applies for visible actions. \qed
	\end{lemma}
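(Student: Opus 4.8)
The plan is to argue by induction on the number of top-level greatest fixpoints $\lenMax{\hV}$, assuming (as in the transparency proof) that $\hV$ is guarded, and to perform a case analysis on the shape of the \SHMLnf formula $\hV$. The statement carries no non-violation hypothesis, so the argument amounts to determining which instrumentation rule of \Cref{fig:mod-re} can produce the \emph{visible} transition $\eI{\eSem{\hV}}{\pV}\traS{\acta}\eI{\eV'}{\pV'}$. Since our synthesised monitors only ever perform identity or suppression transformations, and a suppression branch $\prf{\actSTN{\pate}{\bV}{\actt}}{\rVV}$ can emit only the silent action $\actt$, a visible $\acta$ can arise solely through rule \rtit{iTrn} carrying an identity transformation $\ioact{\acta}{\acta}$, or through rule \rtit{iTer}; rules \rtit{iAsy} and \rtit{iIns} are immediately excluded (the former emits $\actt$, the latter never fires since the synthesis produces no insertion prefixes).

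First I would settle the base cases $\lenMax{\hV}{=}0$. When $\hV{\in}\Set{\hTru,\hFls}$ we have $\eSem{\hV}{=}\eIden$, so by rule \rtit{eId} only \rtit{iTrn} applies, forcing $\pV\traS{\acta}\pV'$ and $\eV'{=}\eIden$; since $\afterF{\hV}{\acta}{=}\hV$ and $\eSem{\hFls}{=}\eSem{\hTru}{=}\eIden$, this is exactly $\eSem{\afterF{\hV}{\acta}}$. The case $\hV{=}\hVarX$ does not arise for closed guarded formulas. The interesting base case is the normalised conjunction $\hV{=}\hBigAndU{i\in\IndSet}\hNec{\actSN{\pate_i}{\bV_i}}{\hV_i}$, for which I would split on the two admissible rules. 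If the transition stems from \rtit{iTrn}, then after unfolding the recursion (\rtit{eRec}) and selecting a branch (\rtit{eSel}), rule \rtit{eTrn} forces a branch $j$ whose guard matches, i.e.\ $\mtch{\pate_j}{\acta}{=}\sV$ and $\ceval{\bV_j\sV}{\boolT}$; this $j$ is unique by the disjointness condition of the normal form. As the emitted label is visible, this branch cannot be a suppression branch, hence $\hV_j{\neq}\hFls$, so $\eV'{=}\eSem{\hV_j}\sV{=}\eSem{\hV_j\sV}$, which coincides with $\eSem{\afterF{\hV}{\acta}}$ since $\afterF{\hV}{\acta}{=}\hV_j\sV$; the premise of \rtit{iTrn} supplies $\pV\traS{\acta}\pV'$. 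If instead the transition stems from \rtit{iTer}, its side conditions $\eSem{\hV}\ntra{\acta}$ and $\eSem{\hV}\ntra{\actdot}$ mean that no branch guard matches $\acta$, so $\eV'{=}\eIden$ while $\afterF{\hV}{\acta}{=}\hTru$; hence $\eV'{=}\eSem{\hTru}{=}\eSem{\afterF{\hV}{\acta}}$ and $\pV\traS{\acta}\pV'$.

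For the inductive step $\hV{=}\hMaxXF$ with $\lenMax{\hV}{=}l{+}1$, I would first record the compositionality identity $\eSem{\hV}\Sub{\eSem{\hMaxXF}}{\rV}{=}\eSem{\hVMaxXFSub}$, proved by a routine structural induction using $\eSem{\hVarX}{=}\rV$ and $\eSem{\hMaxXF}{=}\rec{\rV}{\eSem{\hV}}$. Rule \rtit{eRec} then rewrites the transition out of $\rec{\rV}{\eSem{\hV}}$ as one out of $\eSem{\hVMaxXFSub}$, and guardedness gives $\lenMax{\hVMaxXFSub}{=}l$, so the induction hypothesis applies to the unfolded formula and yields $\pV\traS{\acta}\pV'$ together with $\eV'{=}\eSem{\afterF{\hVMaxXFSub}{\acta}}$. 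Since \Cref{fig:after-defs} defines $\afterF{\hMaxXF}{\acta}{=}\afterF{\hVMaxXFSub}{\acta}$, the required equality follows and the case closes.

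I expect the main obstacle to be the careful matching of the transition analysis with the case split of \afterFS in \Cref{fig:after-defs}. Concretely, the argument hinges on two invariants of the normal form---disjointness of the guards (giving a \emph{unique} matching branch) and guardedness (making fixpoint unfolding strictly decrease $\lenMax{\hV}$)---and on two substitution lemmas that let the synthesis commute with the two relevant substitutions: $\eSem{\hV_j}\sV{=}\eSem{\hV_j\sV}$ for the data instantiation produced by matching, and $\eSem{\hV}\Sub{\eSem{\hMaxXF}}{\rV}{=}\eSem{\hVMaxXFSub}$ for recursion unfolding. Both are straightforward inductions on the compositional definition of $\eSem{-}$, but they are precisely what make the synthesised residual $\eV'$ line up syntactically with $\eSem{\afterF{\hV}{\acta}}$ in every case.
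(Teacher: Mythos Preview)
Your proposal is correct and a bit more careful than the paper's own argument. The paper proves \Cref{lemma:nvtt-4} by a direct case analysis on the instrumentation rule deriving $\eI{\eSem{\hV}}{\pV}\traS{\acta}\eI{\eV'}{\pV'}$: it rules out \rtit{iAsy} and \rtit{iIns}, and in the \rtit{iTer} and \rtit{iTrn} cases it asserts that the observed monitor behaviour forces $\hV$ to be a normalised conjunction $\hBigAndU{i\in\IndSet}\hNec{\actSN{\pate_i}{\bV_i}}{\hV_i}$, then reads off $\eV'$ and $\afterF{\hV}{\acta}$ directly. Your route instead adds an outer induction on $\lenMax{\hV}$ and handles $\hV{=}\hMaxXF$ explicitly by unfolding via \rtit{eRec} and the substitution identity $\eSem{\hV}\Sub{\eSem{\hMaxXF}}{\rV}{=}\eSem{\hVMaxXFSub}$, then invoking the hypothesis on the unfolded formula. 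This buys you a cleaner treatment of nested fixpoints: the paper's claim that $\hV$ ``is only possible'' as a conjunction is literally too strong when $\hV$ carries outer $\mathsf{max}$ binders, even though its conclusion still goes through because $\afterFS$ unfolds those binders to the same value. The price you pay is the two auxiliary substitution lemmas you identify, both routine. One small refinement for your inductive step: you should make explicit that \rtit{eRec} gives $\rec{\rV}{\eSem{\hV}}$ and its unfolding exactly the same monitor transitions \emph{and non-transitions}, so that whichever instrumentation rule (\rtit{iTrn} or \rtit{iTer}) fired for $\eI{\eSem{\hMaxXF}}{\pV}$ fires identically for $\eI{\eSem{\hVMaxXFSub}}{\pV}$; this is what lets you apply the induction hypothesis at the instrumentation level rather than just the monitor level.
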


	We first prove our main result, \ie implications (a) and (b) of the Non-Violating Trace Transparency, by assuming that these auxiliary lemmas hold; we then prove them afterwards.
\end{rtp}

\setcounter{equation}{0}
\begin{proof}[Proof for (a)] By induction on the length of trace $\tr$. 
	
	\begin{case}[\tr=\varepsilon]
		We assume that $\nvsat{\pV}{\varepsilon}{\hV}$ and that 
		\begin{gather}
			\pV\wtraS{\varepsilon}\pV'  \label{proof:ntt-bc-1}
		\end{gather}
		From the definition of $\wtraS{\varepsilon}$ and \eqref{proof:ntt-bc-1} we know that $\pV\traS{\actt}^{\!\ast}\pV'$, and hence by zero or more applications of \rtit{iAsy} we infer that $\eI{\eSem{\hV}}{\pV}\traS{\actt}^{\!\ast}\eI{\eSem{\hV}}{\pV'}$ and so by the definition of \wtraS{\tr}, we conclude that 
			$$\eI{\eSem{\hV}}{\pV}\wtraS{\varepsilon}\eI{\eSem{\hV}}{\pV'}$$
		as required.
	\end{case}


	\begin{case}[\forall\trr\cdot \tr=\acta\trr]
		We start by assuming that
		\begin{gather}
			\pV\wtraS{\acta\trr}\pV' \label{proof:ntt-ic-1} \\
			\nvsat{\pV}{\acta\trr}{\hV} \label{proof:ntt-ic-2}
		\end{gather}
		By \eqref{proof:ntt-ic-1} and the definition of \wtraS{\tr}, we have that
		\begin{gather}
			\pV\wtraS{\acta}\pV'' \label{proof:ntt-ic-3}\\
			\pV''\wtraS{\trr}\pV' \label{proof:ntt-ic-4}
		\end{gather}
		and by \eqref{proof:ntt-ic-2}, \eqref{proof:ntt-ic-3} and \Cref{lemma:nvtt-1} we know that
		\begin{gather}
			\nvsat{\pV''}{\trr}{\afterF{\hV}{\acta}} \label{proof:ntt-ic-6}
		\end{gather}
		With the knowledge of \eqref{proof:ntt-ic-4} and \eqref{proof:ntt-ic-6} we can now apply the \emph{inductive hypothesis} and infer that
		\begin{gather}
		\eI{\eSem{\afterF{\hV}{\acta}}}{\pV''}\wtraS{\trr}\eI{\eV'}{\pV'}. \label{proof:ntt-ic-9}
		\end{gather}
		Following this, by \eqref{proof:ntt-ic-2}, \eqref{proof:ntt-ic-3} and \Cref{lemma:nvtt-2} we have that
		\begin{gather}
			\eI{\eSem{\hV}}{\pV}\wtraS{\acta}\eI{\eSem{\afterF{\hV}{\acta}}}{\pV''} \label{proof:ntt-ic-8}
		\end{gather}		
		Finally, by joining together \eqref{proof:ntt-ic-9} and \eqref{proof:ntt-ic-8} with the definition of \wtraS{\tr} we can conclude that
		\begin{gather*}
			\eI{\eSem{\hV}}{\pV}\wtraS{\acta\trr}\eI{\eSem{\afterF{\hV}{\acta}}}{\pV'} 
		\end{gather*}
		as required, and so we are done. \vspace{-5mm}
	\end{case}
\end{proof}\pagebreak

\setcounter{equation}{0}
\begin{proof}[Proof for (b)] By induction on the length of trace $\tr$. 
	
	\begin{case}[\tr=\varepsilon]
		We assume that 
		\begin{gather}
		\nvsat{\pV}{\varepsilon}{\hV} \label{proof:ntt-b-bc-2} \\
		\eI{\eSem{\hV}}{\pV}\wtraS{\varepsilon}\eI{\eV'}{\pV'}   \label{proof:ntt-b-bc-1}
		\end{gather}
		From \eqref{proof:ntt-b-bc-1} and the definition of $\wtraS{\varepsilon}$ we know that
		\begin{gather}
			\eI{\eSem{\hV}}{\pV}\traS{\actt}^{\ast}\eI{\eV'}{\pV'}   \label{proof:ntt-b-bc-3}
		\end{gather}
		We now consider two cases for  \eqref{proof:ntt-b-bc-3}, namely, when $\traS{\actt}^{0}$ and $\traS{\actt}\cdot\wtraS{\varepsilon}$.
		\begin{itemize}
			\item when $\traS{\actt}^{0}$: Since no transitions have been applied, from  \eqref{proof:ntt-b-bc-3} we know that $\eV'=\eSem{\hV}$ and $\pV'=\pV$ and so by the definition of $\wtraS{\varepsilon}$ we can immediately conclude that $\pV\wtraS{\varepsilon}\pV'$ as required.
			\item when $\traS{\actt}\cdot\wtraS{\varepsilon}$: From \eqref{proof:ntt-b-bc-3} we can now deduce that 
			\begin{gather}
				\eI{\eSem{\hV}}{\pV}\traS{\actt}\eI{\eV''}{\pV''}   \label{proof:ntt-b-bc-4} \\
				\eI{\eV''}{\pV''}\wtraS{\varepsilon}\eI{\eV'}{\pV'}   \label{proof:ntt-b-bc-5} 
			\end{gather}
			and so by \eqref{proof:ntt-b-bc-2}, \eqref{proof:ntt-b-bc-4} and \Cref{lemma:nvtt-3} we can infer that 
			\begin{gather}
				\pV\traS{\actt}\pV'' \label{proof:ntt-b-bc-6} \\
				\eV''=\eSem{\hV}  \label{proof:ntt-b-bc-7} \\
				\nvsat{\pV''}{\varepsilon}{\hV} \label{proof:ntt-b-bc-8}
			\end{gather}
			Hence, by  \eqref{proof:ntt-b-bc-5}, \eqref{proof:ntt-b-bc-7}, \eqref{proof:ntt-b-bc-8} and the inductive hypothesis we conclude that
			\begin{gather}
				\pV''\wtraS{\varepsilon}\pV' \label{proof:ntt-b-bc-9}
			\end{gather}
			and so we can conclude by \eqref{proof:ntt-b-bc-6} and \eqref{proof:ntt-b-bc-9} that 
			\begin{gather*}
				\pV\wtraS{\varepsilon}\pV'
			\end{gather*}
			as required.
		\end{itemize}
	\end{case}
	
	\begin{case}[\forall\trr\cdot \tr=\acta\trr]
		We first assume that
		\begin{gather}
			\eI{\eSem{\hV}}{\pV}\wtraS{\acta\trr}\eI{\eV'}{\pV'} \label{proof:ntt-b-ic-1} \\
			\forall\trr\cdot\nvsat{\pV}{\acta\trr}{\hV} \label{proof:ntt-b-ic-2}
		\end{gather}
		By \eqref{proof:ntt-b-ic-1} and the definition of \wtraS{\tr}, we have that
		\begin{gather}
			\eI{\eSem{\hV}}{\pV}\wtraS{\acta}\eI{\eV''}{\pV''} \label{proof:ntt-b-ic-3}\\
			\eI{\eV''}{\pV''}\wtraS{\trr}\eI{\eV'}{\pV'} \label{proof:ntt-b-ic-4}
		\end{gather}
		and by \eqref{proof:ntt-b-ic-3} and the definition \wtraS{\acta} we have that
		\begin{gather}
			\eI{\eSem{\hV}}{\pV}\traS{\actt}^{\!\ast}\eI{\eV'''}{\pV'''} \label{proof:ntt-b-ic-5}\\
			\eI{\eV'''}{\pV'''}\traS{\acta}\eI{\eV''}{\pV''} \label{proof:ntt-b-ic-6}
		\end{gather}
		This information allows us to apply multiple consecutive applications of \Cref{lemma:nvtt-3} on \eqref{proof:ntt-b-ic-2} and \eqref{proof:ntt-b-ic-5} and infer that
		\begin{gather}
			\pV\traS{\actt}^{\ast}\pV''' \label{proof:ntt-b-ic-7} \\
			\forall\trr\cdot\nvsat{\pV'''}{\acta\trr}{\hV} \label{proof:ntt-b-ic-8} \\
			\eV'''=\eSem{\hV}  \label{proof:ntt-b-ic-9}
		\end{gather}
		and by \eqref{proof:ntt-b-ic-6},\eqref{proof:ntt-b-ic-9} and \Cref{lemma:nvtt-4} we have that
		\begin{gather}
			\pV'''\traS{\acta}\pV'' \label{proof:ntt-b-ic-10} \\
			\eV''=\eSem{\afterF{\hV}{\acta}} \label{proof:ntt-b-ic-11}
		\end{gather}
		By \eqref{proof:ntt-b-ic-8}, \eqref{proof:ntt-b-ic-10} and \Cref{lemma:nvtt-1} we know that
		\begin{gather}
			\nvsat{\pV''}{\trr}{\afterF{\hV}{\acta}} \label{proof:ntt-b-ic-12}
		\end{gather}
		With the knowledge of \eqref{proof:ntt-b-ic-4}, \eqref{proof:ntt-b-ic-11} and \eqref{proof:ntt-b-ic-12} we can now apply the \emph{inductive hypothesis} and infer that
		\begin{gather}
			\pV''\wtraS{\trr}\pV'. \label{proof:ntt-b-ic-13}
		\end{gather}
		Finally, by joining together \eqref{proof:ntt-b-ic-7}, \eqref{proof:ntt-b-ic-10} and \eqref{proof:ntt-b-ic-13} with the definition of \wtraS{\tr} we can conclude that
		\begin{gather*}
			\pV\wtraS{\acta\trr}\pV' 
		\end{gather*}
		as required, and so we are done. \vspace{-5mm}
	\end{case}
\end{proof}\smallskip

\paragraph*{Proving \Cref{lemma:nvtt-1}}
\begin{rtp}
	$$\pV\wtraS{\acta}\pV' \text{ and } \nvsat{\pV}{\acta\tr}{\hV} \imp \nvsat{\pV'}{\tr}{\afterF{\hV}{\acta}}$$
	To simplify the proof, we instead prove the contrapositive, \ie
	$$\pV\wtraS{\acta}\pV' \text{ and } \vsat{\pV'}{\tr}{\afterF{\hV}{\acta}} \imp \vsat{\pV}{\acta\tr}{\hV}$$
	 \vspace{-5mm}
\end{rtp}
\setcounter{equation}{0}
\begin{proof} The proof proceeds by rule induction on \afterF{\hV}{\acta}.
	
	\begin{case}[\afterF{\hTru}{\acta}]
		We assume that $\pV\wtraS{\acta}\pV'$ and also that $\vsat{\pV'}{\tr}{\afterF{\hTru}{\acta}}$. This case, however, does not apply since by definition $\afterF{\hTru}{\acta}=\hTru$ which contradicts the assumption that system $\pV'$ and trace \tr violate formula $\afterF{\hTru}{\acta}=\hTru$.
	\end{case}

	\begin{case}[\afterF{\hFls}{\acta}] This case holds \emph{trivially} since by the definition of \vsatL, we know that \hFls is violated regardless of the process or trace, such that we can immediately conclude that 
		\begin{gather*}
			\vsat{\pV}{\acta\tr}{\hFls}
		\end{gather*}
	as required.
	\end{case}

	\begin{case}[\afterF{\hMaxXF}{\acta}]
		We start this case by assuming that
		\begin{gather}
			\pV\wtraS{\acta}\pV' \label{proof:nvtt-1-max-1}\\
			\vsat{\pV'}{\tr}{\afterF{\hMaxXF}{\acta}} \label{proof:nvtt-1-max-2}
		\end{gather}
		Since by definition $\afterF{\hMaxXF}{\acta}=\afterF{\hVMaxXFSub}{\acta}$, by \eqref{proof:nvtt-1-max-1}, \eqref{proof:nvtt-1-max-2} and the \emph{inductive hypothesis} we infer that $\vsat{\pV}{\acta\tr}{\hV\sub{\hMaxXF}{\hVarX}}$, from which by the definition of \vsatL, we can conclude
		\begin{gather*}
			\vsat{\pV}{\acta\tr}{\hMaxXF}
		\end{gather*}
		as required.
	\end{case}

	\begin{case}[\afterF{\hBigAndD{i\in\IndSet}\hNec{\actS_i}\hV_i}{\acta} \text{  when }\exists j{\in}\IndSet\cdot\mtchS{\actS_j}{\acta}{=}\sV]
		We now assume that 
		\begin{gather}
			\pV\wtraS{\acta}\pV' \label{proof:nvtt-1-and-a-1}\\
			\vsat{\pV'}{\tr}{\afterF{\hBigAndD{i\in\IndSet}\hNec{\actS_i}\hV_i}{\acta}}  \label{proof:nvtt-1-and-a-2}\\
			\exists j{\in}\IndSet\cdot\mtchS{\actS_j}{\acta}{=}\sV \label{proof:nvtt-1-and-a-3}
		\end{gather}	
		By \eqref{proof:nvtt-1-and-a-2}, \eqref{proof:nvtt-1-and-a-3} and the definition of \afterFS we deduce that \vsat{\pV'}{\tr}{\hV_j\sV} and subsequently by \eqref{proof:nvtt-1-and-a-1} and the definition of \vsatL we infer that $\vsat{\pV}{\acta\tr}{\hNec{\actS_j}\hV_j}$ upon which by \eqref{proof:nvtt-1-and-a-3} and the definition of \vsatL, we can finally conclude that
		\begin{gather*}
			\vsat{\pV}{\acta\tr}{\hBigAndD{i\in\IndSet}\hNec{\actS_i}\hV_i} 
		\end{gather*}
		as required.
	\end{case}

	\begin{case}[\afterF{\hBigAndD{i\in\IndSet}\hNec{\actS_i}\hV_i}{\acta} \text{  when }\forall i{\in}\IndSet\cdot\mtchS{\actS_i}{\acta}{=}\sVundef]
		Initially we assume that $\pV\wtraS{\acta}\pV'$ and that \vsat{\pV'}{\tr}{\afterF{\hBigAndD{i\in\IndSet}\hNec{\actS_i}\hV_i}{\acta}}. This case, however, does not apply as when $\forall i{\in}\IndSet\cdot\mtchS{\actS_i}{\acta}{=}\sVundef$, then by definition, $\afterF{\hBigAndD{i\in\IndSet}\hNec{\actS_i}\hV_i}{\acta}=\hTru$ which leads to a contradiction since \vsat{\pV'}{\tr}{(\afterF{\hBigAndD{i\in\IndSet}\hNec{\actS_i}\hV_i}{\acta}=\hTru)} is a false assumption by the definition of \vsatL.
	\end{case}
\vspace{-5mm}
\end{proof}

\setcounter{equation}{0}
\paragraph*{Proving \Cref{lemma:nvtt-2}}
\begin{rtp}
	\begin{align*}
		& \quad \nvsat{\pV}{\acta\tr}{\hV} \text{ and } \pV\wtraS{\acta}\pV' \imp \eI{\eSem{\hV}}{\pV}\wtraS{\acta}\eI{\eSem{ \afterF{\hV}{\acta}}}{\pV'} \\
		\equiv & \;\; \exists \hV' \cdot \nvsat{\pV}{\acta\tr}{\hV} \text{ and } \pV\wtraS{\acta}\pV' \text{ and } \afterF{\hV}{\acta}{=}\hV' \imp \eI{\eSem{\hV}}{\pV}\wtraS{\acta}\eI{\eSem{\hV'}}{\pV'}
	\end{align*}\vspace{-3mm}
\end{rtp}

\begin{proof} The proof proceeds by rule induction on \afterF{\hV}{\acta}.
	
	\begin{case}[\afterF{\hTru}{\acta}]
		Initially we assume that: $\afterF{\hTru}{\acta}=\hTru$, $\nvsat{\pV}{\tr}{\hTru}$ and that $\pV\wtraS{\acta}\pV'$ from which we can deduce that
		\begin{gather}
			\pV\wreduc\pV''	 \label{proof:nvtt-2-tt-2a} \\
			\pV''\traS{\acta}\pV'	 \label{proof:nvtt-2-tt-2b}
		\end{gather}
		By applying multiple applications of rule \rtit{iAsy} on \eqref{proof:nvtt-2-tt-2a} we have that
		\begin{gather}
			\eI{\eSem{\hTru}}{\pV}\wreduc\eI{\eSem{\hTru}}{\pV''} \label{proof:nvtt-2-tt-3}
		\end{gather}
		Since $\eSem{\hTru}=\eIden$, by rule \rtit{eId} we have that
		\begin{gather}
			\eSem{\hTru}\traS{\ioact{\acta}{\acta}}\eSem{\hTru} \label{proof:nvtt-2-tt-5}
		\end{gather}
		and hence by \eqref{proof:nvtt-2-tt-2b}, \eqref{proof:nvtt-2-tt-5} and rule \rtit{iTrn} we know that $\eI{\eSem{\hTru}}{\pV''}\traS{\acta}\eI{\eSem{\hTru}}{\pV'}$, and so by \eqref{proof:nvtt-2-tt-3} and transitivity we conclude that
		\begin{gather*}
			\eI{\eSem{\hTru}}{\pV}\wtraS{\acta}\eI{\eSem{\hTru}}{\pV'}
		\end{gather*}
		as required.
	\end{case}

	\begin{case}[\afterF{\hFls}{\acta}]
		Since we assume that $\afterF{\hFls}{\acta}=\hFls$, $\pV\traS{\acta}\pV'$, and that $\nvsat{\pV}{\tr}{\hFls}$, this case does not apply since the last assumption does not hold because the definition of $\vsatL$ states that \hFls is \emph{always violated}.
	\end{case}
	
	\begin{case}[\afterF{\hMaxXF}{\acta}]
		We start by assuming that
		\begin{gather}
			\afterF{\hMaxXF}{\acta}=\afterF{\hV\sub{\hMaxXF}{\hVarX}}{\acta} \label{proof:nvtt-2-max-1}\\
			\pV\wtraS{\acta}\pV'	 \label{proof:nvtt-2-max-2}\\
			\nvsat{\pV}{\acta\tr}{\hMaxXF} \label{proof:nvtt-2-max-3}
		\end{gather}
		From assumption \eqref{proof:nvtt-2-max-1} and by the definition of \afterFS we can deduce that
		\begin{gather}
			\exists\hV'\cdot\afterF{\hV\sub{\hMaxXF}{\hVarX}}{\acta}=\hV' \label{proof:nvtt-2-max-4}
		\end{gather}
		and by applying the definition of $\vsatL$ on assumption \eqref{proof:nvtt-2-max-3}, we infer that
		\begin{gather}
			\nvsat{\pV}{\acta\tr}{\hV\sub{\hMaxXF}{\hVarX}} \label{proof:nvtt-2-max-5}
		\end{gather}
		By knowing \eqref{proof:nvtt-2-max-2}, \eqref{proof:nvtt-2-max-4} and \eqref{proof:nvtt-2-max-5} we can now apply the \emph{inductive hypothesis} and conclude that
		\begin{gather}
			\eI{\eSem{\hV\sub{\hMaxXF}{\hVarX}}}{\pV} \traS{\acta} \eI{\eSem{\hV'}}{\pV'} \label{proof:nvtt-2-max-6}
		\end{gather}
		By \eqref{proof:nvtt-2-max-6} and the definition of \eSem{-}, we know
		\begin{gather}
			\eI{\eSem{\hV}\sub{\rec{\rV}{\eSem{\hV}}}{\rV}}{\pV} \traS{\acta} \eI{\eSem{\hV'}}{\pV'} \label{proof:nvtt-2-max-7}
		\end{gather}
		By \eqref{proof:nvtt-2-max-7} and \rtit{eRec}, we know
		\begin{gather}
			\eI{\rec{\rV}{\eSem{\hV}}}{\pV} \traS{\acta} \eI{\eSem{\hV'}}{\pV'} \label{proof:nvtt-2-max-8}
		\end{gather}
		By \eqref{proof:nvtt-2-max-8} and the definition of \eSem{-}, we know
		\begin{gather*}
			\eI{\eSem{\hMaxXF}}{\pV} \traS{\acta} \eI{\eSem{\hV'}}{\pV'} 
		\end{gather*}
		as required.
	\end{case}
	
	\begin{case}[\afterF{\hBigAndD{i\in\IndSet}\hNec{\actSN{\pate_i}{\bV_i}}\hV_i}{\acta} \text{  when }\exists j{\in}\IndSet\cdot\mtchS{\actS_j}{\acta}{=}\sV]
		We now assume that,
		\begin{gather}
			\afterF{\hBigAndD{i\in\IndSet}\hNec{\actSN{\pate_i}{\bV_i}}\hV_i}{\acta}=\hV_j\sV \label{proof:nvtt-2-and-a-1}
		\end{gather}
		because
		\begin{gather}
			\exists j{\in}\IndSet\cdot\mtchS{\actSN{\pate_j}{\bV_j}}{\acta}{=}\sV \label{proof:nvtt-2-and-a-2}
		\end{gather}
		and
		\begin{gather}
			\pV\wtraS{\acta}\pV'  \label{proof:nvtt-2-and-a-3} \\
			\nvsat{\pV}{\acta\tr}{\hBigAndD{i\in\IndSet}\hNec{\actSN{\pate_i}{\bV_i}}\hV_i} \label{proof:nvtt-2-and-a-4}
		\end{gather}
		Since from \eqref{proof:nvtt-2-and-a-4} we know that process \pV does not violate any of the conjunction branches, and since from \eqref{proof:nvtt-2-and-a-2} we know that system action \acta matches with branch $j$, by the definition of $\vsatL$ we can deduce that $\hV_j\neq\hFls$ (otherwise it would contradict with \eqref{proof:nvtt-2-and-a-4}). This means that by rule \rtit{eTrn} we know that the enforcer will not modify the system action \acta, and so we know that 
		\begin{gather}
			\exists j{\in}\IndSet\cdot \prf{\actSTN{\pate_j}{\bV_j}{\bnd{\pate_j}}}{\eSem{\hV_j}} \traS{\ioact{\acta}{\acta}} \eSem{\hV_j\sV} \label{proof:nvtt-2-and-a-6}
		\end{gather}
		By \eqref{proof:nvtt-2-and-a-6} and \rtit{eSel} we know
		\begin{gather}
			\chBigI\prf{\actSTN{\pate_i}{\bV_i}{\pate'_i}}{\eSem{\hV_i}} \traS{\ioact{\acta}{\acta}} \eSem{\hV_j\sV} \quad (\text{where } \pate'_i{\in}\set{\bnd{\pate_i},\actt}) \label{proof:nvtt-2-and-a-7}
		\end{gather}
		By \eqref{proof:nvtt-2-and-a-7} and \rtit{eRec} we know
		\begin{gather}
			\rec{\rVV}{\chBigI\prf{\actSTN{\pate_i}{\bV_i}{\pate'_i}}{\eSem{\hV_i}}} \traS{\ioact{\acta}{\acta}} \eSem{\hV_j\sV} \quad (\text{where } \pate'_i{\in}\set{\bnd{\pate_i},\actt}) \label{proof:nvtt-2-and-a-8}
		\end{gather}
		By \eqref{proof:nvtt-2-and-a-8} and the definition of \eSem{-} we know
		\begin{gather}
			\eSem{\hBigAndD{i\in\IndSet}\hNec{\actSN{\pate_i}{\bV_i}}\hV_i} \traS{\ioact{\acta}{\acta}} \eSem{\hV_j\sV} \label{proof:nvtt-2-and-a-9}
		\end{gather}
		From \eqref{proof:nvtt-2-and-a-3} and the definition \wtraS{\acta}, we know that $\pV\wreduc\pV''\traS{\acta}\pV'$, which means that by multiple applications of rule \rtit{iAsy} we know that for every enforcer \eV, $\eI{\eV}{\pV}\wreduc\eI{\eV}{\pV''}$, and subsequently by \eqref{proof:nvtt-2-and-a-9} and rule \rtit{iTrn} we infer that		
		\begin{gather*}
			\eI{\eSem{\hBigAndD{i\in\IndSet}\hNec{\actSN{\pate_i}{\bV_i}}\hV_i}}{\pV} \wreduc \eI{\eSem{\hBigAndD{i\in\IndSet}\hNec{\actSN{\pate_i}{\bV_i}}\hV_i}}{\pV''} \traS{\acta}  \eI{\eSem{\hV_j\sV}}{\pV'} 
		\end{gather*}
		as required.
	\end{case}
	
	\begin{case}[\afterF{\hBigAndD{i\in\IndSet}\hNec{\actS_i}\hV_i}{\acta} \text{  when }\forall i{\in}\IndSet\cdot\mtchS{\actS_i}{\acta}{=}\sVundef]
		We start by assuming that
		\begin{gather}
		\afterF{\hBigAndD{i\in\IndSet}\hNec{\actS_i}\hV_i}{\acta}=\hTru \label{proof:nvtt-2-and-b-1}
		\end{gather}
		because
		\begin{gather}
		\forall i{\in}\IndSet\cdot\mtchS{\actS_i}{\acta}{=}\sVundef \label{proof:nvtt-2-and-b-2}
		\end{gather}
		and
		\begin{gather}
			\pV\wtraS{\acta}\pV'  \label{proof:nvtt-2-and-b-3} \\
			\nvsat{\pV}{\acta\tr}{\hBigAndD{i\in\IndSet}\hNec{\actSN{\pate_i}{\bV_i}}\hV_i} \label{proof:nvtt-2-and-b-4}
		\end{gather}
		By \eqref{proof:nvtt-2-and-b-2} and the definition of $\vsatL$ we know
		\begin{gather}
			\forall i{\in}\IndSet\cdot \prf{\actSTN{\pate_i}{\bV_i}{\pate'_i}}{\eSem{\hV_i}} \ntraS{\acta} \quad (\text{where } \pate'_i{\in}\set{\bnd{\pate_i},\actt}) \label{proof:nvtt-2-and-b-5}
		\end{gather}
		By \eqref{proof:nvtt-2-and-b-5} and \rtit{eSel} we know
		\begin{gather}
			\chBigI\prf{\actSTN{\pate_i}{\bV_i}{\pate'_i}}{\eSem{\hV_i}} \ntraS{\acta} \quad (\text{where } \pate'_i{\in}\set{\bnd{\pate_i},\actt}) \label{proof:nvtt-2-and-b-6}
		\end{gather}
		By \eqref{proof:nvtt-2-and-b-6} and \rtit{eRec} we know
		\begin{gather}
			\rec{\rVV}{\chBigI\prf{\actSTN{\pate_i}{\bV_i}{\pate'_i}}{\eSem{\hV_i}}} \ntraS{\acta} \quad (\text{where } \pate'_i{\in}\set{\bnd{\pate_i},\actt}) \label{proof:nvtt-2-and-b-7}
		\end{gather}
		By \eqref{proof:nvtt-2-and-b-7} and the definition of \eSem{-} we know
		\begin{gather}
			\eSem{\hBigAndD{i\in\IndSet}\hNec{\actSN{\pate_i}{\bV_i}}\hV_i} \ntraS{\acta} \label{proof:nvtt-2-and-b-8}
		\end{gather}
		From \eqref{proof:nvtt-2-and-b-3} and the definition \wtraS{\acta}, we know that $\pV\wreduc\pV''\traS{\acta}\pV'$, which means that by multiple applications of rule \rtit{iAsy} we know that for every enforcer \eV, $\eI{\eV}{\pV}\wreduc\eI{\eV}{\pV''}$, and subsequently by \eqref{proof:nvtt-2-and-b-8} and rule \rtit{iTer} we infer that		
		\begin{gather}
		\eI{\eSem{\hBigAndD{i\in\IndSet}\hNec{\actSN{\pate_i}{\bV_i}}\hV_i}}{\pV} \wreduc \eI{\eSem{\hBigAndD{i\in\IndSet}\hNec{\actSN{\pate_i}{\bV_i}}\hV_i}}{\pV''} \traS{\acta} \eI{\eIden}{\pV'} \label{proof:nvtt-2-and-b-9}
		\end{gather}
		Finally by \eqref{proof:nvtt-2-and-b-9} and the definitions of \eSem{-} and \wtraS{\acta} we conclude that
		\begin{gather*}
			\eI{\eSem{\hBigAndD{i\in\IndSet}\hNec{\actSN{\pate_i}{\bV_i}}\hV_i}}{\pV} \wtraS{\acta} \eI{\eSem{\hTru}}{\pV'}
		\end{gather*}
		as required and so we are done.
	\end{case} \vspace{-5mm}
	
\end{proof}

\setcounter{equation}{0}
\paragraph*{Proving \Cref{lemma:nvtt-3}}
\begin{rtp}
	\begin{align*}
		\nvsat{\pV}{\tr}{\hV} \text{ and } \eI{\eSem{\hV}}{\pV}\traS{\actt}\eI{\eV'}{\pV'} \imp \pV\traS{\actt}\pV' \text{ and } \eV'=\eSem{\hV} \text{ and } \nvsat{\pV'}{\tr}{\hV}
	\end{align*}\vspace{-3mm}
\end{rtp}

\begin{proof} The proof proceeds by rule induction on $\eI{\eSem{\hV}}{\pV}\traS{\actt}\eI{\eV'}{\pV'}$.
	
	\begin{Cases}[\text{\rtit{iTer} and \rtit{iIns}}] These cases do not apply as \rtit{iTer} only transitions over visible actions \acta, while \rtit{iIns} cannot be applied as \eSem{\hV} does not synthesise insertion monitors.
	\end{Cases}

	\begin{case}[\rtit{iAsy}] We assume that 
		\begin{gather}
			\forall \tr\cdot\nvsat{\pV}{\tr}{\hV}  \label{proof-3-nvtt-1}\\
			\eI{\eSem{\hV}}{\pV}\traS{\actt}\eI{\eV'}{\pV'} \label{proof-3-nvtt-2}
		\end{gather}
		because
		\begin{gather}
			\pV\traS{\actt}\pV' \label{proof-3-nvtt-3} \\
			\eV'=\eSem{\hV}  \label{proof-3-nvtt-4}
		\end{gather}
		Since the violation semantics are agnostic of \actt-actions, from \eqref{proof-3-nvtt-1} and \eqref{proof-3-nvtt-3} we can deduce that 
		\begin{gather}
			\forall \tr\cdot\nvsat{\pV'}{\tr}{\hV}  \label{proof-3-nvtt-5}
		\end{gather}
		and so we are done by \eqref{proof-3-nvtt-3}, \eqref{proof-3-nvtt-4} and \eqref{proof-3-nvtt-5}.
	\end{case}

	\begin{case}[\rtit{iTrn}] We assume that 
		\begin{gather}
		\forall \tr\cdot\nvsat{\pV}{\tr}{\hV}  \label{proof-3-nvtt-itrn-1}\\
		\eI{\eSem{\hV}}{\pV}\traS{\actt}\eI{\eV'}{\pV'} \label{proof-3-nvtt-itrn-2}
		\end{gather}
		because
		\begin{gather}
		\pV\traS{\acta}\pV' \label{proof-3-nvtt-itrn-3} \\
		\eSem{\hV}\traS{\ioact{\acta}{\actt}}\eV'  \label{proof-3-nvtt-itrn-4}
		\end{gather}
		By the rules in our model we know that the suppressing transition in \eqref{proof-3-nvtt-itrn-4} can only take place if the monitor is capable of performing the suppression transformation, \ie has the form $\rec{\rVV}{\ch{\prf{\actSTD{\pate_j}{\bV_j}}{\rVV}}{ \chBig{i\in\IndSet\setminus\set{j}}{\begin{xbrace}{cl}
				\prf{\actSTD{\pate_j}{\bV_j}}{\rVV} & \text{  if }\hV_i=\hFls\\
				\prf{\actSN{\pate_j}{\bV_j}}{\eSem{\hV_i}} & \text{  otherwise}
			\end{xbrace}  }}}$ where $\exists j\cdot\mtchS{\prf{\actSTD{\pate_j}{\bV_j}}}{\acta}=\sV$. By the definition of \eSem{\hV} this monitor can only be synthesised if \hV has the form of $\hAnd{\hNec{\actSN{\pate_j}{\bV_j}}\hFls}{ \hBigAndU{i\in\IndSet\setminus\set{j}}\hNec{\actSN{\pV_i}{\bV_i}}\hV_i }$, which means that every \acta prefixed trace would violate \hV since when \acta satisfies the conjunct necessity $\hNec{\actSN{\pate_j}{\bV_j}}\hFls$ every suffix \trr of the trace would violate \hFls and so we have that $\forall\trr\cdot\vsat{\pV}{\acta\trr}{\hV}$. This therefore contradicts with assumption \eqref{proof-3-nvtt-itrn-1} and hence this case does not apply. \vspace{-3mm}
	\end{case}

\end{proof}

\setcounter{equation}{0}
\paragraph*{Proving \Cref{lemma:nvtt-4}}
\begin{rtp}
	\begin{align*}
	 \eI{\eSem{\hV}}{\pV}\traS{\acta}\eI{\eV'}{\pV'} \imp \pV\wtraS{\acta}\pV' \text{ and } \eV'=\eSem{\afterF{\hV}{\acta}} 
	\end{align*}\vspace{-3mm}
\end{rtp}

\begin{proof} The proof proceeds by rule induction on $\eI{\eSem{\hV}}{\pV}\traS{\acta}\eI{\eV'}{\pV'}$.
	
	\begin{Cases}[\text{\rtit{iAsy} and \rtit{iIns}}] These cases do not apply since \rtit{iAsy} transitions over \actt actions only, while \rtit{iIns} cannot ever be applied since \eSem{\hV} does not synthesise insertion monitors.
	\end{Cases}
		
		\begin{case}[\rtit{iTer}] We assume that $\eI{\eSem{\hV}}{\pV}\traS{\acta}\eI{\eIden}{\pV'}$ because 
		\begin{gather}
		\pV\tra{\acta}\pV' \label{proof-4-nvtt-iter-2} \\
		\eSem{\hV}\ntra{\acta}\,\land\,\eSem{\hV}\ntra{\actdot} \label{proof-4-nvtt-iter-3}
		\end{gather}
		From the definition of \eSem{\hV} and the rules in our model we know that \eqref{proof-4-nvtt-iter-3} is only possible when $\hV=\hBigAndU{i\in\IndSet}{\hNec{\actSN{\pate_i}{\bV_i}}\hV_i}$ and $\forall i\cdot\mtchS{\actSN{\pate_i}{\bV_i}}{\acta}=\sVundef$ as this would be synthesised into an enforcer of the $\eV=\rec{\rVV}{\chBigI\prf{\actSTN{\pate_i}{\bV_i}{\pate_i'}}{\eV'}}$ where every branch is unable to match with \acta. Knowing that \hV can only have this form and by the definition of \afterFS we deduce that 
		\begin{gather}
			\afterF{\hV}{\acta}=\hTru \label{proof-4-nvtt-iter-4}
		\end{gather}
		Since by the definition of \eSem{-} we know that $\eIden=\eSem{\hTru}$, by \eqref{proof-4-nvtt-iter-4} we can conclude that
		\begin{gather}
			\eIden=\eSem{\afterF{\hV}{\acta}} \label{proof-4-nvtt-iter-5}
		\end{gather}
		and hence this case is done by \eqref{proof-4-nvtt-iter-2} and \eqref{proof-4-nvtt-iter-5}.
	\end{case}

	\begin{case}[\rtit{iTrn}] We assume that $\eI{\eSem{\hV}}{\pV}\traS{\acta}\eI{\eV'}{\pV'}$ because 
		\begin{gather}
			\pV\tra{\actb}\pV' \label{proof-4-nvtt-itrn-2} \\
			\eSem{\hV}\tra{\ioact{\actb}{\acta}}\eV' \label{proof-4-nvtt-itrn-3}
		\end{gather}
		From the definition of \eSem{\hV} we can infer that our synthesis cannot generate action replacing monitors and hence we can deduce that
		\begin{gather}
			\acta=\actb \label{proof-4-nvtt-itrn-4}
		\end{gather}
		From the definition of \eSem{\hV} and the rules in our model we can also deduce that when $\acta=\actb$ (as confirmed by \eqref{proof-4-nvtt-itrn-4}), \eqref{proof-4-nvtt-itrn-3} occurs only when $\hV=\hBigAndU{i\in\IndSet}{\hNec{\actSN{\pate_i}{\bV_i}}\hV_i}$ and $\exists j\cdot\mtchS{\actSN{\pate_j}{\bV_j}}{\acta}=\sV$ as this would be synthesised into an enforcer of the form
		\begin{gather}
			\eSem{\hV}=\rec{\rVV}{\ch{\prf{\actSN{\pate_j}{\bV_j}}{\eSem{\hV_j}}}{ \chBig{i\in\IndSet\setminus\set{j}}\begin{xbrace}{rl}
						\prf{\actSTD{\pate_i}{\bV_i}}{\rVV} & \;\;(\text{if }\hV_i=\hFls)\\
						\prf{\actSN{\pate_i}{\bV_i}}{\eSem{\hV_i}} & \;\;(\text{otherwise})
					\end{xbrace} }} \label{proof-4-nvtt-itrn-4.5}
		\end{gather}
		where only the branch with index $j$ can match \acta. Knowing that \hV can only have this form, by the definition of \afterFS we can deduce that 
		\begin{gather}
			\afterF{\hV}{\acta}=\hV_j\sV \label{proof-4-nvtt-itrn-5}
		\end{gather}
		Hence, by applying rules \rtit{eRec}, \rtit{eSel} and \rtit{eTrn} on \eqref{proof-4-nvtt-itrn-3}, with the knowledge of \eqref{proof-4-nvtt-itrn-4.5} we know that $\eV'=\eSem{\hV_j\sV}$ and hence by \eqref{proof-4-nvtt-itrn-5} we can conclude that 
		\begin{gather}
			\eV'=\eSem{\afterF{\hV}{\acta}} \label{proof-4-nvtt-itrn-6}
		\end{gather}
		and so we are done by \eqref{proof-4-nvtt-itrn-2}, \eqref{proof-4-nvtt-itrn-4} and \eqref{proof-4-nvtt-itrn-6}.
	\end{case}

\end{proof}

\end{document}